\newif\iffull
\newif\ifcmts
\newif\ifsodafinal
\def\url#1{\texttt{#1}}
\theoremstyle{plain}
\newtheorem{theorem}{Theorem}[section]
\newtheorem{lemma}[theorem]{Lemma}
\newtheorem{cor}[theorem]{Corollary}
\newtheorem{fact}[theorem]{Fact}
\newtheorem{proposition}[theorem]{Proposition}
\theoremstyle{definition}
\newtheorem{Remarks}[theorem]{Remarks}
\newtheorem{Remark}[theorem]{Remark}
\DeclareMathOperator{\im}{im} 
\newcommand\bmi{\boxminus}
\newcommand{\+}{\boxplus}
\newcommand{\kk}{k}
\DeclareMathOperator{\SM}{SMap}
\DeclareMathOperator{\M}{Map}
\newcommand{\thedim}{d}
\newcommand{\bzero}{\boldsymbol{0}}  
\newcommand{\oo}{o}                   
\renewcommand\:{\colon}
\def\EML{\mathrm{EML}}
\def\AW{\mathrm{AW}}
\def\SHI{\mathrm{SHI}}
\newcommand{\nonadd}{a}
\newcommand{\Nonadd}{A}
\DeclareMathOperator{\sig}{sig}
\DeclareMathOperator{\coker}{coker}
\newcommand{\susp}[1]{S #1}
\newcommand{\Z}{\mathbbm{Z}}
\newcommand{\R}{\mathbbm{R}}
\newcommand{\RR}{\mathcal{R}}
\newcommand{\BB}{\mathcal{B}}
\renewcommand{\AA}{\mathcal{A}}
\newcommand{\CC}{\mathcal{C}}
\newcommand{\rez}{\xi}
\newcommand\makevec[1]{{\bf #1}}
\def \aa {\makevec{a}}
\def \bb {\makevec{b}}
\def \cc {\makevec{c}}
\def \dd {\makevec{d}}
\def \xx {\makevec{x}}
\def \yy {\makevec{y}}
\def \ww {\makevec{w}}
\newcommand\bsigma{\boldsymbol{\sigma}}
\newcommand\btau{\boldsymbol{\tau}}
\DeclareMathOperator{\id}{id}   
\newcommand\NULLHOA{{\tt NullHom}}
\DeclareMathOperator{\ind}{ind}
\long\def\onefigure#1#2{
\begin{figure*}[tbp]
\begin{center}
#1
\end{center}
\caption{#2}
\end{figure*}
}
\def\immediateFigure#1{%
\smallskip\begin{center}#1\end{center}\smallskip }
\newcommand{\labfig}[2]  
{\onefigure{\mbox{\includegraphics{#1}}}{\label{f:#1} #2} }
\newcommand{\labfigw}[3]  
{\onefigure{\mbox{\includegraphics[width=#2]{#1}}}{\label{f:#1} #3}}
\newcommand{\immfig}[1]  
{\immediateFigure{\mbox{\includegraphics{#1}}}}
\newcommand{\immfigw}[2] 
{\immediateFigure{\mbox{\includegraphics[width=#2]{#1}}}}
\def\indef#1{\emph{#1}}
\newcommand{\heading}[1]{\vspace{1ex}\par\noindent{\bf\boldmath #1}}
\newcommand{\subh}[1]{\vspace{1ex}\par\noindent{\emph{#1}}}
\newcommand{\marrow}{\marginpar{\boldmath$\longleftarrow$}}
\newcommand{\uli}[1]{\ifhmode\newline\fi\marrow \textsf{*** (ULI: ) #1}}
\newcommand{\jirka}[1]{\ifhmode\newline\fi\marrow \textsf{*** (JIRKA: ) #1\newline}}
\newcommand{\marek}[1]{\ifhmode\newline\fi\marrow \textsf{*** (MAREK: ) #1\newline}}
\newcommand{\martin}[1]{\ifhmode\newline\fi\marrow \textsf{*** (MARTIN: ) #1\newline}}
\newcommand{\marrow}{}
\newcommand{\uli}[1]{}
\newcommand{\jirka}[1]{}
\newcommand{\marek}[1]{}
\newcommand{\martin}[1]{}
\def\kamsymb{{\rm b}}
\def\istsymb{{\rm c}}
\def\ethsymb{{\rm d}}
\def\grasymb{{\rm *}}
\def\massymb{{\rm a}}
\def\grensymb{{\rm e}}
\title{Computing all maps into a sphere$^\grasymb$
}
\author{Martin \v{C}adek$^\massymb$
\and Marek Kr\v{c}\'al$^{\kamsymb,\istsymb}$ \and
Ji\v{r}\'{\i} Matou\v{s}ek$^{\kamsymb,\ethsymb}$
\and Francis Sergeraert$^\grensymb$ \and Luk\'a\v{s}
Vok\v{r}\'{\i}nek$^\massymb$
 \and Uli Wagner$^{\istsymb}$}
\begin{document}

\maketitle

{\renewcommand\thefootnote{\grasymb}
\footnotetext{
The research by M.\,\v{C}. and L.\,V.~was supported by a Czech Ministry of Education grant
(MSM 0021622409).
The research by M.\,K. was supported by project GAUK 49209.
The research by J.\,M.\ and M.\,K.\ was also supported
by project 1M0545 by the Ministry of Education of the Czech Republic and by Center of Excellence -- Inst.\ for Theor.\
       Comput.\ Sci., Prague (project P202/12/G061 of GA~\v{C}R).
The research by J.\,M. was also supported by the  ERC Advanced Grant No.~267165.
The research by U.\,W. was supported
 by the Swiss National Science Foundation (SNF Projects 200021-125309, 200020-138230, and PP00P2-138948).
 }
}

{\renewcommand\thefootnote{\massymb}
\footnotetext{Department of Mathematics and Statistics,
 Masaryk University, Kotl\'a\v{r}sk\'a~2, 611~37~Brno,
 Czech Republic}
}

{\renewcommand\thefootnote{\kamsymb}
\footnotetext{Department of Applied Mathematics,
Charles University, Malostransk\'{e} n\'{a}m.~25,
118~00~~Praha~1,  Czech Republic}
}
{\renewcommand\thefootnote{\istsymb}
\footnotetext{IST Austria, Am Campus 1, 3400~Klosterneuburg, Austria,
}
}

{\renewcommand\thefootnote{\ethsymb}
\footnotetext{Department of Computer Science,
ETH Z\"urich, 8092~Z\"urich, Switzerland}
}
{\renewcommand\thefootnote{\grensymb}
\footnotetext{Institut Fourier,
BP~74,
38402~St Martin, d'H\`eres Cedex,
France}
}


\begin{abstract}
Given topological spaces $X,Y$, a fundamental problem of algebraic
topology is understanding the structure of all continuous maps $X\to Y$.
We consider a computational version, where $X,Y$ are
given as finite simplicial complexes, and the goal is to compute
$[X,Y]$, i.e., all homotopy classes of such maps.

We solve this problem in the \emph{stable range}, where
for some $d\ge 2$, we have $\dim X\le 2d-2$ and  
$Y$ is \emph{$(d-1)$-connected}; in particular,  $Y$ can be the 
$d$-dimensional sphere $S^d$. The algorithm combines classical tools 
and ideas from homotopy theory (
\emph{obstruction theory}, \emph{Postnikov systems},
 and \emph{simplicial sets}) with algorithmic tools from effective algebraic 
topology (\emph{locally effective simplicial sets} 
and \emph{objects with effective homology}).

In contrast, $[X,Y]$ is known to be uncomputable for general $X,Y$,
since for $X=S^1$ it includes a well known undecidable
problem: testing triviality of the fundamental group of~$Y$. 

In follow-up papers, the algorithm is shown to run in polynomial time 
for $d$ fixed, and extended to other problems, such as the
\emph{extension problem}, where we are given a subspace $A\subset X$
and a map $A\to Y$ and ask whether it extends to a map $X\to Y$,
or computing the \emph{$\Z_2$-index}---everything in the stable
range. Outside the stable range, the extension problem
is undecidable.




\end{abstract}

\iffull\else \ifsodafinal\else
 \thispagestyle{empty}
 \clearpage
 \setcounter{page}{1}
 \fi
\fi

\section{Introduction}

Among results concerning computations in topology, probably the most
famous ones are negative.
For example, there is no algorithm to decide
whether the fundamental group $\pi_1(Y)$ of a given space $Y$ is trivial,
i.e., whether every loop in $Y$ can be continuously contracted to 
a point.\footnote{This follows by a standard reduction,
see, e.g., \cite{Stillwell}, from a result of Adjan and Rabin 
on unsolvability of the triviality problem of a group given in terms
of generators and relation; see, e.g.,~\cite{Soare:ComputabilityDifferentialGeometry-2004}.}

Here we obtain a positive result for a closely related and fairly
general problem, homotopy classification of maps;\footnote{The definition
of homotopy and other basic topological notions will be recalled later.}
namely, we describe an algorithm that works in the so-called \emph{stable range}.


\heading{Computational topology. }
This paper falls into the broader area of \emph{computational topology}, which has been a rapidly developing 
discipline in recent years---see, for instance, the textbooks \cite{EdelsbrunnerHarer:ComputationalTopology-2010,Zomorodian:TopologyComputing-2005,Matveev:AlgorithmicTopology-2007}. 

Our focus is somewhat different from the main current trends 
in the field, where, on the one hand, computational questions
are intensively studied in dimensions $2$ and $3$ (e.g., concerning 
\emph{graphs on surfaces}, \emph{knots} or \emph{$3$-manifolds}\footnote{A seminal early result in the latter direction is Haken's famous algorithm for recognizing the unknot \cite{Haken:TheorieNormalflaechen-1961}.}), and, on the 
other hand, for arbitrary dimensions mainly \emph{homology} computations 
are investigated. 

Homology has been considered an inherently computational 
tool since its inception and there are many software packages that contain practical implementations, e.g., {\tt polymake}~\cite{polymake}. 
Thus, algorithmic solvability of homological questions is
usually obvious, and the challenge may be, e.g., designing
very fast algorithms to deal with large inputs. Moreover,
 lot of research has been 
devoted to developing extensions such as \emph{persistent homology} 
\cite{EdelsbrunnerHarer:PersistentTopologySurvey-2008}, motivated by applications  like data analysis~\cite{Carlsson:TopologyData-2009}.

In contrast, \emph{homotopy-theoretic problems}, as those studied here, are 
generally considered much less tractable than homological ones and
the first question to tackle is usually the existence of any algorithm at all
(indeed, many of them are algorithmically unsolvable,
as the example of triviality of the fundamental group illustrates). 
Such problems lie at the core of algebraic topology and have been 
thoroughly studied from a topological perspective since  the 1940s. 
A significant effort has also
been devoted to \emph{computer-assisted} concrete calculations, most notably
of higher \emph{homotopy groups of spheres}; see, e.g.,
\cite{Kochman:Stable-homotopy-groups-of-spheres-1990}. 

\heading{Effective algebraic topology. }
In the 1990s, three independent groups of researchers proposed general 
frameworks to make various more advanced methods of algebraic topology 
\emph{effective} (algorithmic):
Sch\"on \cite{Schoen-effectivetop}, Smith
\cite{smith-mstructures}, and Sergeraert, Rubio, Dousson, and Romero
 (e.g., \cite{Sergeraert:ComputabilityProblemAlgebraicTopology-1994,RubioSergeraert:ConstructiveAlgebraicTopology-2002,RomeroRubioSergeraert,SergRub-homtypes};
also see \cite{SergerGenova} for an exposition).
These frameworks yielded general \emph{computability} results 
for homotopy-theoretic questions (including
new algorithms for the computation of higher homotopy groups \cite{Real96}),
and in the case of Sergeraert and co-workers, a \emph{practical implementation}
as well.

The problems considered by us
were not addressed in those papers, but we rely on the work of Sergeraert
 et al., and in particular on their framework of 
\emph{objects with effective homology}, for implementing certain 
operations in our algorithm 
(see Sections~\ref{s:outline} and~\ref{sec:prelim}).

We should also mention that our perspective is somewhat different
from the previous work in effective algebraic topology, closer to the view of
theoretical computer science; although in the present paper
we provide only computability results, subsequent work
 also addresses the computational complexity of the considered problems.
We consider this research area fascinating,
and one of our hopes is that our work  
may help to bridge the cultural gap between algebraic topology and theoretical 
computer science.

\heading{The problem: homotopy classification of maps. }  
A central theme in algebraic topology is to understand,
for given topological spaces $X$ and $Y$, the set $[X,Y]$ of homotopy
classes of maps\footnote{%
In this paper, all maps between topological spaces are assumed
to be continuous. Two maps $f,g\colon X \to Y$ are said to be \emph{homotopic},
denoted $f \sim g$, if there is a map $F \colon X \times [0,1] \to Y$
such that $F(\cdot,0)=f$ and $F(\cdot,1)=g$. The equivalence class of $f$
of this relation is denoted $[f]$ and called the \emph{homotopy class} of $f$.}
from $X$ to $Y$.

Many of the celebrated results throughout the history of topology
can be cast as information about $[X,Y]$ for particular spaces 
$X$ and~$Y$.  An early example is a famous theorem
of Hopf from the 1930s, asserting that the homotopy class of a map
$f\colon S^n \to S^n$, where $S^n$ is the $n$-dimensional sphere,
is completely determined by an integer called the \emph{degree} of $f$,
thus giving a one-to-one correspondence $[S^n,S^n] \cong \Z$.
Another great discovery of Hopf, 
with ramifications in modern physics and elsewhere, was a
map $S^3\to S^2$, now called by his name, that is not homotopic
to a constant map.

These two early results concern \emph{higher homotopy
groups}: for our purposes, the \emph{$k$th homotopy group} $\pi_k(Y)$, $k\ge 2$, of
a space $Y$ can be identified with the set $[S^k,Y]$ equipped with
a suitable group operation.\footnote{Formally, the $k$th homotopy 
group $\pi_k(Y)$ of a space $Y$, $k\geq 1$, is defined as the set of all homotopy classes
of \emph{pointed} maps $f\:S^k\to Y$, i.e., maps $f$ that send a distinguished point $s_0\in S^k$ 
to a distinguished point $y_0\in Y$ (and the homotopies $F$ also satisfy $F(s_0,t)=y_0$ for all $t\in[0,1]$). 
Strictly speaking, one should write $\pi_k(Y,y_0)$ but for a path-connected $Y$, the choice of
$y_0$ does not matter. Furthermore, $\pi_k(Y)$ is trivial (has only one element) iff $[S^k,Y]$ is trivial, i.e., if every map $S^k\to Y$ is homotopic to a constant map. Moreover, if $\pi_1(Y)$ is trivial, then for
$k\geq 2$, the pointedness of the maps does not matter and one can identify $\pi_k(Y)$ with $[S^k,Y]$.
Each $\pi_k(Y)$ is a group, which for $k\ge 2$ is Abelian, but the definition of the group operation is not 
important for us at the moment.}  In particular, a very important special case are the higher \emph{homotopy groups of spheres} 
$\pi_k(S^n)$, whose computation has been one of the important challenges propelling research in algebraic topology, with only  
partial results so far despite an enormous effort (see, e.g., \cite{Ravenel,Kochman}).

\heading{The extension problem.} A problem closely related to computing $[X,Y]$ is the \emph{extension problem}: given a subspace $A\subset X$ and a map $f\:A\to Y$, can it be extended to a map $X\to Y$? For example, the famous \emph{Brouwer fixed-point theorem} can be re-stated
as non-extendability of the identity map $S^n\to S^n$ to the ball $D^{n+1}$. A number of topological concepts, which may seem quite advanced and esoteric to a newcomer in algebraic topology, e.g.\ \emph{Steenrod squares}, have a natural motivation in trying to solve the extension problem step by step.

\heading{Early results.} 
Earlier developments around the extension problems are described
 in Steenrod's paper \cite{Steenrod:CohomologyOperationsObstructionsExtendingContinuousFunctions-1972}\iffull \ (based on
a 1957 lecture series)\fi, which we can recommend, for readers with a moderate topological
background, as  an exceptionally clear and accessible, albeit somewhat outdated,
introduction to this area. In particular, in that paper, Steenrod asks for an effective
procedure for (some aspects of) the extension problem.

There has been a tremendous amount of work in homotopy theory
since the 1950s, with a wealth of new concepts and results,
some of them opening completely new areas. 
However, as far as we could find out,
the \emph{algorithmic part} of the program discussed in
 \cite{Steenrod:CohomologyOperationsObstructionsExtendingContinuousFunctions-1972} 
 has not been explicitly carried out until now.

As far as we know, the only algorithmic paper addressing the general problem of computing of $[X,Y]$ is that by Brown
\cite{Brown} from 1957. Brown showed that $[X,Y]$ is computable under the assumption that
$Y$ is 1-connected\footnote{A space $Y$ is said to be \emph{$k$-connected}
if every map $S^i\to Y$ can be extended to $D^{i+1}$,
the ball bounded by the spheres $S^i$, for $i=0,1,\ldots,k$. Equivalently,
$Y$ is path-connected and the first $k$ homotopy groups $\pi_i(Y)$, $i\leq k$, are trivial.}
 and all the higher homotopy groups $\pi_k(Y)$, $2\le k\le\dim X$,
are \emph{finite}. The latter assumption is rather strong\footnote{Steenrod~\cite{Steenrod:CohomologyOperationsObstructionsExtendingContinuousFunctions-1972} calls this restriction  
``most severe,'' and conjectures that it ``should ultimately be unnecessary.''};
  in particular, Brown's algorithm is not applicable
for $Y=S^d$ since $\pi_d(S^d)\cong \Z$. 

In the same paper, Brown also gave an algorithm for computing
$\pi_k(Y)$, $k\ge 2$, for every 1-connected~$Y$. 
To do this, he overcame the restriction on finite
homotopy groups mentioned above, and also discussed in 
Section~\ref{s:outline} below, by a somewhat ad-hoc
method, which does not seem to generalize to the $[X,Y]$ setting.

On the negative side, it is undecidable whether $[S^1,Y]$ is trivial
(since this is equivalent to the triviality of $\pi_1(Y)$). By an equally
classical result of Boone and of Novikov
\cite{Boone:SimpleUnsolvableProblemsGroupTheory1-1954,Boone:SimpleUnsolvableProblemsGroupTheory2-1954,Boone:SimpleUnsolvableProblemsGroupTheory3-1955,Novikov:UndecidabilityWordProblem-1955}
it is undecidable whether a given map $S^1\to Y$ can be extended to 
a map $D^2\to Y$, even
if $Y$ is a finite $2$-dimensional simplicial complex. Thus, both 
the computation $[X,Y]$ and the extension problem are algorithmically 
unsolvable without additional assumptions on $Y$.
These are the only previous undecidability results in this context 
 known to us; more recent results, obtained as a follow-up
of the present paper, will be mentioned later.
For a number of more loosely related undecidability
results we refer to
\cite{Soare:ComputabilityDifferentialGeometry-2004,
NabutovskyWeinberger:AlgorithmicAspectsHomeomorphismProblem-1999,
NabutovskyWeinberger:AlgorithmicUnsolvabilityTrivialityProblemMultidimensionalKnots-1996} and the references therein.

\heading{New results. } 
In this paper we prove the computability of $[X,Y]$ under 
a fairly general condition on $X$ and $Y$. Namely,
we assume that, for some integer $d\ge 2$, we have $\dim X\le 2d-2$,
while $Y$ is $(d-1)$-connected.
A particularly important example of a $(d-1)$-connected space,
often encountered in applications, is the sphere $S^d$.
We also assume that $X$ and $Y$ are given as
finite simplicial complexes or, more generally, as finite \emph{simplicial sets} (a more flexible generalization
of simplicial complexes; see Section~\ref{sec:prelim}).

An immediate problem with computing the set $[X,Y]$ of all homotopy classes
of continuous maps 
is that it may be infinite. However, it is known
that under the just mentioned conditions on $X$ and $Y$,
$[X,Y]$ can be endowed with a structure of a finitely generated
Abelian group.\footnote{In particular, the groups $[X,S^d]$ are known
as the \emph{cohomotopy groups} of $X$; see \cite{HuBook}.}
Our algorithm computes the isomorphism type of this Abelian group.


\begin{theorem}\label{t:main}
Let $d\ge 2$. There is an algorithm that, given finite simplicial
complexes (or finite simplicial sets) $X,Y$, where
 $\dim X\leq 2d-2$ and  $Y$ is $(d-1)$-connected, 
computes the isomorphism type of the Abelian group $[X,Y]$, i.e., expresses it as
a direct product of cyclic groups. 

Moreover, given a simplicial map $f\:X\to Y$, the
element of the computed direct product corresponding to $[f]$
can also be computed. Consequently,
it is possible to test homotopy of simplicial maps $X\to Y$.
\end{theorem} 

We remark that the algorithm does not need any certificate
of the $1$-connectedness of $Y$, but if $Y$ is not $1$-connected,
the result may be wrong.
\smallskip

In the remainder of the introduction, we discuss related results, applications,  general motivation for our work, and directions for future research. 
In Section~\ref{s:outline}, we will present an outline of the methods and of the algorithm.
In Sections~\ref{s:abelops}--\ref{s:implementing-grp-op}, we will introduce and discuss
the necessary preliminaries, and then we
present the algorithm in detail in Section~\ref{sec:main-algo}.

\heading{Follow-up work. } 
We briefly summarize a number of strengthenings and extensions
of Theorem~\ref{t:main}, as well as complementary hardness results,
obtained since the original  submission of this paper. They
will appear in a series of follow-up papers.

\subh{Running time.} 
In the papers \cite{polypost,pKZ1}
it is shown that, for every fixed $d$, the algorithm 
as in Theorem~\ref{t:main} can be implemented
so that its running time is bounded by a \emph{polynomial} 
in the size of $X$ and $Y$.\footnote{Here, for simplicity, we can define the size of a finite simplicial complex $X$ as the number of its simplices; for a simplicial set, we count only nondegenerate simplices. 
It is not hard to see that if the dimension of $X$ is bounded by a constant, 
then $X$ can be encoded by a string of bits of length polynomial in the number of (nondegenerate) simplices; also see the discussion in \cite{polypost}.} 
The nontrivial part of this polynomiality result
is a subroutine for computing \emph{Postnikov systems},
which we use as a black box here---see Section~\ref{s:outline}. 
For the rest of the algorithm, verifying polynomiality is straightforward, see \cite{Krcal-thesis};
except for some brief remarks, we will not consider this issue here, in order
to avoid distraction from the main topic.


\subh{The extension problem.} In \cite[Theorem~1.4]{polypost}, it is shown
 that the methods of the present paper also yield 
an algorithm for the extension problem 
as defined above. The extension problem can actually
be solved even for $\dim X\leq 2d-1$, as opposed to $2d-2$ in Theorem~\ref{t:main}
(still asumming that $Y$ is $(d-1)$-connected).
Again, the running time is polynomial for $d$ fixed.

\subh{Hardness outside the stable range. } The dimension and connectivity assumptions in Theorem~\ref{t:main} turned out to be essential and almost sharp,
in the following sense:
In \cite{ext-hard}, it is shown that, for every $d\ge 2$,
the extension problem  is undecidable for $\dim X=2d$ and
$(d-1)$-connected $Y$. Similar arguments show that for
$\dim X=2d$ and $(d-1)$-connected $Y$, deciding whether every map $X\to Y$ is
homotopic to a constant map (i.e., $|[X,Y]|=1$) is NP-hard 
and no algorithm is known for it \cite[Theorem~2.1.2]{Krcal-thesis}.

\subh{Dependence on $d$. }
The running-time of the algorithm in Theorem~\ref{t:main}
can be made polynomial for every fixed $d$, as was mentioned
above, but it depends on $d$ at least exponentially.
We consider it unlikely that the problem can be solved by an algorithm whose running time also depends polynomially on $d$. One heuristic
reason supporting this belief
is that Theorem~\ref{t:main} includes the computation of the \emph{stable} homotopy groups $\pi_{d+k}(S^d)$, $k\le d-2$. These are considered mathematically
very difficult objects, and a polynomial-time algorithm for computing
them would be quite surprising.
 Another reason is that the related problem of 
computing the higher homotopy groups $\pi_k(Y)$ of a $1$-connected simplicial
complex $Y$ was shown to be \#P-hard if $k$, encoded in unary, is a part
of input \cite{Anick-homotopyhard,ext-hard}, and it is W[1]-hard
w.r.t.~the parameter $k$ \cite{Mat-homotopyW1}, even for $Y$ of dimension~$4$. Still, it would
be interesting to have more concrete hardness results for the setting
of Theorem~\ref{t:main} with variable~$d$.


\subh{Lifting-extension and the equivariant setting. } 
In~\cite{aslep,Vokrinek}, 
the ideas and methods of the present paper are further developed and generalized to more general \emph{lifting-extension problems} 
and to the \emph{equivariant} setting, where a fixed finite group $G$
acts freely on both $X$ and $Y$, and the considered continuous maps are 
also required to be \emph{equivariant}, i.e.,  
to commute with the actions of $G$. 
The basic and important special case with $G=\Z_2$ will be 
discussed in more detail below.

\subh{Homotopy testing. } By Theorem~\ref{t:main}, it is possible to
test homotopy of two simplicial maps $X\to Y$ in the stable range.
It turns out that for this task, unlike for the extension problem,
 the restriction to the stable range is unnecessary: it suffices to assume that
 $Y$ is $1$-connected~\cite{VokriFil-homotopic}.

\heading{Applications, motivation, and future work.}
We consider the fundamental nature of the
 algorithmic problem of computing $[X,Y]$ a sufficient
motivation of our research.
However, we also hope that work in this area will bring
various  connections and applications, also in other fields,
possibly including practically usable software, e.g., for aiding
research in topology. Here we mention two applications that
have already been worked out in detail.

\subh{Robust roots. }
A nice concrete application comes from the  
so-called ROB-SAT problem---robust satisfiability of systems of equations 
The problem is given by a rational value $\alpha>0$ and a piecewise linear
 function $f\:K\to \R^d$ defined by rational values on the vertices of a 
simplicial complex $K$.  The question is whether an arbitrary continuous 
$g\:K\to\R^d$ that is at most $\alpha$-far from $f$ (i.e., \(\|f-g\|_\infty\leq \alpha\)) has a root. In a slightly different and more special form,
this problem was investigated by Franek et al.~\cite{Franek-al},
and later Franek and Kr\v{c}\'{a}l 
\cite{FranekKrcal:RobustSatisfiability} exhibited
 a computational equivalence of ROB-SAT and the extension problem for 
maps into the sphere \(S^{d-1}\). 
The algorithm for the extrendability problem based
on the present paper then yields an algorithmic solution when $\dim K\leq 2d-3$.

\subh{$\Z_2$-index and embeddability.}
An important motivation for the research leading to
the present paper was the computation
of the  \emph{$\Z_2$-index}  (or \emph{genus}) $\ind(X)$ of a
$\Z_2$-space $X$,\footnote{A \emph{$\Z_2$-space} is a topological space $X$
with an action of the group $\Z_2$; the action is described by a homeomorphism $\nu\colon X\to X$ with $\nu\circ\nu=\id_X$. A primary example is a sphere $S^d$ with the antipodal action $x\mapsto-x$. An \emph{equivariant  map} between $\Z_2$-spaces is a continuous map that commutes with the $\Z_2$ actions.} i.e., the smallest $d$
such that $X$ can be equivariantly mapped into~$S^d$. For example, the classical Borsuk--Ulam theorem can be stated in the form $\ind(S^d)\ge d$. 
Generalizing the results in the present paper, \cite{aslep} provided
an algorithm that decides whether $\ind(X)\le d$, provided that $d\geq 2$ 
and $\dim(X)\le 2d-1$; for fixed $d$ the running time is polynomial in the size of~$X$.

The computation of $\ind(X)$ arises, among others, in the problem
of \emph{embeddability} of topological spaces, which is a classical 
and much studied area; see, e.g., the survey by Skopenkov 
\cite{skopenkov-survey}. One of the basic questions here is, 
given a $k$-dimensional finite simplicial complex $K$, 
can it be (topologically) embedded
in $\R^d$? The famous \emph{Haefliger--Weber theorem}
from the 1960s asserts that, in the \emph{metastable range
of dimensions}, i.e., for $k\le \frac23d-1$, embeddability of $K$ in $\R^d$
is equivalent to $\ind(K_\Delta^2)\le d-1$,
where $K_\Delta^2$, the \emph{deleted product} of $K$,
 is a certain $\Z_2$-space constructed from $K$ in a simple manner.
 Thus, in this range, the embedding  problem is, computationally,
a special case of $\Z_2$-index computation. A systematic study of
algorithmic aspects of the embedding problem was initiated in 
\cite{MatousekTancerWagner:HardnessEmbeddings-2011}, and
the metastable range was left as one of the main open problems there
(now resolved as a consequence of \cite{aslep}).

The $\Z_2$-index also appears as
a fundamental quantity in combinatorial applications
of topology. For example, the celebrated result of Lov\'asz
on Kneser's conjecture can be re-stated as
$\chi(G)\ge \ind(B(G))+2$, where $\chi(G)$ is the chromatic
number of a graph $G$, and $B(G)$ is a certain simplicial complex
constructed from $G$ (see, e.g., \cite{Matousek:BorsukUlam-2003}).
We find it striking that prior to \cite{aslep}, \emph{nothing} seems to 
have been known about the computability of such an interesting quantity
as $\ind(B(G))$.

\subh{Explicit maps? }
Our  algorithm for Theorem~\ref{t:main}
works with certain implicit representations
of the elements of $[X,Y]$; it can output a set of generators
of the group in this representation, and it contains a subroutine
implementing the group operation.

It would be interesting to know whether these implicit representations
can be converted into actual maps $X\to Y$ (given, say, as simplicial maps from a sufficiently fine subdivision of $X$ into $Y$) in an effective way. 
Given an implicit representation of a homotopy class $\kappa\in [X,Y]$,
we can compute an explicit map $X\to Y$ in $\kappa$ by a brute force search:
go through finer and finer subdivisions $X'$ of $X$ and through all possible
simplicial maps $X'\to Y$ until a simplicial map in $\kappa$ is found.
Membership in $\kappa$ can be tested using Theorem~\ref{t:main};
this may not be entirely obvious, but we do not give the details here,
since this is only a side-remark.
However, currently we have no upper bound on how fine subdivision
may be required.

This would also be of interest in certain applications such as the 
embeddability problem---whenever we want
to construct an  embedding explicitly, instead of just deciding embeddability. 

Various measures of complexity of embeddings
have been studied in the literature, and very recently, Freedman and
Krushkal \cite{free-kru} obtained bounds
for the \emph{subdivision complexity} of an embedding
$K\to\R^{d}$. Here $d$ and $k=\dim K$ are considered fixed,
and the question is, what is the smallest $f(n)$ 
 such that every $k$-dimensional complex $K$ with $n$ simplices
that is embeddable in $\R^d$ has a subdivision $L$ with at most $f(n)$
simplices that admits a linear embedding in $\R^{d}$
(i.e., an embedding that is an affine map on each simplex of $L$)?
 Freedman and Krushkal essentially solved the case
with $d=2k$ (here the embeddability can be decided
in polynomial time---this is covered by \cite{aslep} but
this particular case goes back to a classical work of Van Kampen from the 1930s;
see \cite{MatousekTancerWagner:HardnessEmbeddings-2011}). 
The  subdivision complexity for the
other cases in the metastable range, i.e., for $k\le \frac23 d-1$, is
 wide open at present, and
obtaining explicit maps $X\to Y$ in the setting
of Theorem~\ref{t:main} might be a key step in its resolution.

\section{An outline of the methods and of the algorithm}
\label{s:outline}

Here we present an overview of the algorithm 
and sketch the main ideas and tools. Everything from this section
will be presented again in the rest of the paper.
Some topological notions are left undefined here and will be 
introduced in later sections.

\heading{The geometric intuition: obstruction theory. }
Conceptually, the basis of the algorithm is classical \emph{obstruction theory} \cite{Eilenberg:CohomologyContinuousMappings-1940}.
For a first encounter, it is probably easier to consider
a version of obstruction theory which proceeds by constructing maps $X\to Y$
 inductively on the $i$-dimensional \emph{skeleta}\footnote{The $i$-skeleton
of a simplicial complex $X$ consists of all simplices
of $X$ of dimension at most~$i$.}
 of $X$, extending them one dimension at a time.
(For the actual algorithm, we use a different, ``dual'' version
of obstruction theory, where we lift maps from $X$ through
stages of a so-called \emph{Postnikov system} of $Y$.)

In a nutshell, at each stage, the extendability of a map
from the $(i-1)$-skeleton to the $i$-skeleton
is characterized by vanishing of a certain
\emph{obstruction}, which can, more or less by known techniques,
be evaluated algorithmically.

Textbook expositions may give the impression that obstruction
theory is a general algorithmic tool for testing the
extendability of maps \iffull (this is actually what some of the
topologists we consulted seemed to assume)\fi. However, the
extension at each step is generally not unique, and
extendability at subsequent steps may depend, in a nontrivial
way, on the choices made earlier. Thus, in principle, one
needs to search an infinitely branching tree of extensions. Brown's
result mentioned earlier, on computing $[X,Y]$ with
the $\pi_k(Y)$'s finite, is based on a complete search of
this tree, where the assumptions on $Y$ guarantee the branching to be finite.

In our setting, we make essential use of the group structure on the set $[X,Y]$ (mentioned in Theorem~\ref{t:main}), 
as well as on some related ones, to produce a finite encoding of the set of all possible extensions at a given stage.

\heading{Semi-effective and fully effective Abelian groups.} The description of our algorithm has several levels.
On the top level, we work with Abelian groups whose elements are homotopy classes of maps. On a lower level,
the group operations and other primitives are implemented by computations with \emph{concrete representatives} of the
homotopy classes; interestingly, on the level of representatives, the operations are generally non-associative. 

We need to be careful in distinguishing ``how explicitly'' the relevant groups are available to us. Specifically, we distinguish
between \emph{semi-effective} and \emph{fully effective} Abelian groups: For the former, we have a suitable way of representing
the elements on a computer and we can compute the various group operations (addition, inverse) on the level of representatives. For the latter, we additionally have a list of generators and relations and we can express a given element in terms of the generators (see Section~\ref{s:abelops}
for a detailed discussion). A homomorphism $f$ between two semi-effective Abelian groups is called \emph{locally effective} if there is an
algorithm that, given a representative of an element $a$, computes a representative of $f(a)$.

\heading{Simplicial sets and objects with effective homology.}
All topological spaces in the algorithm are represented as \emph{simplicial sets}, which  
will be discussed in more detail in Section~\ref{s:simplsets}. Suffice it here to say that a simplicial 
set is a purely combinatorial description of how to build a space from simple building blocks (\emph{simplices}),  
similar to a simplicial complex, but allowing more general ways of gluing simplices together along their faces, which makes many constructions much simpler
and more conceptual.

For the purposes of our exposition we will occasionally talk about 
topological spaces specified in other ways, most notably, as  
\emph{CW-complexes}---e.g., in Sections~\ref{s:postni} 
and~\ref{s:hgroup}. However, we stress that in the algorithm, 
all spaces are represented as simplicial sets.

A finite simplicial set can be encoded explicitly  on a computer 
by a finite bit string, which describes a list of all (nondegenerate) simplices
and the way of gluing them together. 
However, the algorithm also uses a number of \emph{infinite} 
simplicial sets in its computation,
such as simplicial \emph{Eilenberg--MacLane spaces} discussed below.
For these, it is not possible to store the list of all nondegenerate
simplices.

Instead,  we use a general framework
developed by Sergeraert et al.\ (as surveyed, e.g., in \cite{SergerGenova}),
in which a possibly infinite simplicial set is represented by 
a \emph{black box} or \emph{oracle} (we speak of a
 \emph{locally effective} simplicial set).
This means that we have a specified encoding of the simplices of the simplical
set and a collection of algorithms for performing certain operations, such as 
computing a specific face of a given simplex. Similarly, a simplicial map 
between locally effective simplicial sets is \emph{locally effective} 
if there is an algorithm that evaluates it on any given simplex of 
the domain; i.e., given the encoding of an input
simplex, it produces the encoding of the image simplex.

To perform global computations with a given locally effective simplicial set,
e.g.,\ compute its homology and cohomology groups of any given dimension,
the black box representation of these locally 
effective simplicial sets is augmented with 
additional data structures and one
speaks about \emph{simplicial sets with effective homology}.
Sergeraert et al.\ then provide algorithms that 
construct basic topological spaces, 
such as finite simplicial sets or Eilenberg--MacLane spaces,
as simplicial sets with effective homology.
More crucially, the auxiliary data structures of a simplicial set
with effective homology are designed so that if we perform
various topological operations, such as the Cartesian product, 
the \emph{bar construction}, 
the \emph{total space of a fibration}, etc.,
the result is again a simplicial set with effective homology.

\heading{Postnikov systems.}
The target space $Y$ in Theorem~\ref{t:main}
enters the computation in the form of a
\emph{Postnikov system}. 
Roughly speaking, a Postnikov system of a space $Y$ is a
way of building $Y$ from ``canonical pieces'', called \emph{Eilenberg--MacLane
spaces}, whose homotopy structure is the simplest possible,
namely, they have a single non-trivial homotopy group.
The Eilenberg--MacLane spaces occurring in the algorithm will be denoted by
$K_{i}$ and $L_i$, and they depend only on the homotopy groups of~$Y$.

A Postnikov system has \emph{stages} $P_0,P_1,\ldots$,
where $P_i$ reflects the homotopy properties of $Y$ up to
dimension $i$; in particular, $\pi_j(P_i)\cong\pi_j(Y)$
for all $j\le i$, while $\pi_j(P_i)=0$ for $j>i$. The
isomorphisms of the homotopy groups for $j\le i$
are induced by maps $\varphi_i\:Y\to P_i$, which are also a part
of the Postnikov system. Crucially, these maps also induce
bijections $[X,Y]\to [X,P_i]$ whenever $\dim X\leq i$;
in words, homotopy classes of maps $X \to Y$ from any space $X$ of dimension at most $i$
are in bijective correspondence with homotopy classes
of maps $X\to P_i$.

The last component of a Postnikov system are mappings
$\kk_0,\kk_1,\ldots$, where $\kk_{i-1}\colon P_{i-1}\to K_{i+1}$ is called the 
$(i-1)$st \emph{Postnikov class}. Together with the group $\pi_{i}(Y)$, it 
describes how $P_{i}$ is obtained from $P_{i-1}$.

If $Y$ is $(d-1)$-connected, then for $i\le 2d-2$, the Postnikov stage
$P_i$ can be equipped with an $H$-group structure, which is, roughly speaking,
an Abelian group structure ``up to homotopy'' (this is where the connectivity
assumption enters the picture). This $H$-group structure on $P_i$ induces,
in a canonical way, an Abelian group structure on $[X,P_i]$, for every
space $X$, with no restriction on $\dim X$. 

Now assuming $\dim X\le 2d-2$, we have the bijection 
$[X,Y]\to [X,P_{2d-2}]$ as mentioned above, and this can serve as
the definition of the Abelian group structure on $[X,Y]$ used
in Theorem~\ref{t:main}. Therefore, instead of computing 
$[X,Y]$ directly, we actually compute $[X,P_{2d-2}]$, which
yields an isomorphic Abelian group. (However, the elements of $[X,P_{2d-2}]$
are not so easily related to continuous maps $X\to Y$;
this is the cause of the open problem, mentioned in the introduction,
of effectively finding actual maps $X\to Y$ as representatives of the generators.)

Thus, to prove Theorem~\ref{t:main}, we first compute 
the stages $P_0,\ldots,P_{2d-2}$ of a Postnikov system of $Y$, and then,
by induction on $i$, we determine $[X,P_i]$, $i\leq 2d-2$. We return
the description of $[X,P_{2d-2}]$ as an Abelian group.

For the inductive computation of $[X,P_i]$ we 
 do not need any dimension restriction on $X$ anymore, which
is important, because the induction will also involve computing, e.g., 
$[SX,P_{i-1}]$,
where $SX$ is another simplicial set, the  \emph{suspension} of $X$, 
with dimension one larger than that of~$X$.

The stages $P_i$ of the Postnikov system are built as simplicial
sets with a particular property (they are \emph{Kan} simplicial 
sets\footnote{The term \emph{Kan complex} is also 
commonly used in the literature.}),
which ensures that every \emph{continuous} map $X\to P_i$ is homotopic 
to a \emph{simplicial} map. In this way, instead of
the continuous maps $X\to Y$, which are problematic to represent,
we deal only with simplicial maps $X\to P_i$ in the algorithm,
which are discrete, and even finitely representable, objects.


\heading{Outline of the algorithm.} 
\begin{enumerate}
\item As a preprocessing step, we compute, using the algorithm from \cite{polypost}, a suitable representation of the first $2d-2$ 
stages of a Postnikov system for $Y$. We refer to Section~\ref{s:postni} for the full specification of the output provided by this computation;
in particular, we thus obtain the isomorphism types of the first $2d-2$ homotopy groups $\pi_i=\pi_i(Y)$ of $Y$, the Postnikov stages $P_i$ and the Eilenberg--MacLane spaces $L_i$ and $K_{i+1}$, $i\leq 2d-2$, as locally effective simplicial sets, and various maps between these spaces, e.g., the Postnikov classes $\kk_{i-1}\colon P_{i-1}\to K_{i+1}$, as locally effective simplicial maps. 
\item Given a finite simplicial set $X$, the main algorithm computes $[X,P_i]$ as a fully effective Abelian group 
by induction on $i$, $i\leq 2d-2$, and $[X,P_{2d-2}]$ is the desired output.

The principal steps are as follows:
\begin{itemize}
\item  We construct locally effective simplicial maps $\+_i\colon P_i\times P_i \to P_i$ and $\bmi_i\: P_i \to P_i$, $i\leq 2d-2$
(Section~\ref{s:implementing-grp-op}). 
These induce a binary operation $\+_{i\ast}$ and a unary operation $\bmi_{i\ast}$ on $\SM(X,P_i)$ that correspond to the the group operations in 
$[X,P_i]$ on the level of representatives.
This yields, in the terminology of Section~\ref{s:abelops},
a semi-effective representation for $[X,P_i]$. 
\item It remains to convert this semi-effective representation into a fully effective one; this is carried out in detail in Section~\ref{sec:main-algo}.
For this step, we use that $[X,L_i]$ and $[X,K_{i+1}]$ are straightforward to compute as fully effective Abelian groups since, by basic properties of Eilenberg--MacLane spaces, they are canonically isomorphic to certain cohomology groups of $X$. Moreover, we assume that, inductively, we have already computed $[SX,P_{i-1}]$ and $[X,P_{i-1}]$ as fully effective Abelian groups, where $SX$ is the suspension of $X$ mentioned above. 

These four Abelian groups, together with $[X,P_i]$, fit into an \emph{exact sequence} of Abelian groups (see Equation 
(\ref{theexact}) in Section~\ref{s:ind-step}), 
and this is then used to compute the desired fully effective representation 
of $[X,P_i]$---see Section~\ref{sec:main-algo}. Roughly speaking, what happens
here is that, among the maps $X\to P_{i-1}$, we ``filter out'' those that
can be lifted to maps $X\to P_i$ (this corresponds to evaluating
an appropriate obstruction, as was mentioned at the beginning of this section),
for each map that can be lifted we determine all possible liftings,
and finally, we test which of the lifted maps are homotopic.
Since there are infinitely many homotopy classes of maps involved in 
these operations, we have to work globally, with generators and relations
in the appropriate Abelian groups of homotopy classes. 
\end{itemize}
\end{enumerate}

\heading{Remarks. }
\subh{Evaluating Postnikov classes.}
For $Y$ fixed, the subroutines for evaluating the Postnikov classes $\kk_i$, $i\le 2d-2$, could be hard-wired once and
for all. In some particular cases, they are given by known
explicit formulas. In particular, for $Y=S^d$, $\kk_d$ corresponds to
the famous \emph{Steenrod square} \cite{Steenr47,Steenrod:CohomologyOperationsObstructionsExtendingContinuousFunctions-1972}\iffull\ (more
 precisely, to the reduction from integral cohomology to mod
2 cohomology followed by the Steenrod square
$\text{Sq}^2$)\fi, and $\kk_{d+1} $ to
\emph{Adem's secondary cohomology operation}.
However, in the general
case, the only way of evaluating the $\kk_i$ we are aware of is
using  \emph{simplicial sets with effective homology} mentioned earlier.
In this context, our
result can also be regarded as an algorithmization of certain
\emph{higher cohomology operations} (see, e.g.,
\cite{MosherTangora:CohomologyOperations-1968}), although our
development of the required topological underpinning is
somewhat different and, in a way, simpler.\footnote{Let us
also mention the paper by
Gonz\'{a}les-D\'{\i}az and Real \cite{GonzalezDiazReal:ComputationCohomologyOperations-2003},
which provides algorithms for calculating certain primary and secondary
cohomology operations on a finite simplicial complex
 (including the Steenrod square $\text{Sq}^2$ and Adem's
secondary cohomology operation). But both their goal and
approach are different from ours. The algorithms in
\cite{GonzalezDiazReal:ComputationCohomologyOperations-2003}
are based on explicit combinatorial formulas for these
operations on the cochain level. The goal is to speed up
the ``obvious'' way of computing
the image of a given cohomology class under the considered
operation. In our setting, we have no general explicit formulas
available, and we can  work only with the cohomology
classes ``locally,'' since they are usually defined on
infinite simplicial sets. That is, a cohomology class
is represented by a cocycle, and that cocycle is given as
an algorithm that can compute the value of the cocycle on
any given simplex.}


\subh{Avoiding iterated suspensions. } In order to compute $[X,P_i]$, our algorithm recursively computes all suspensions $[\susp X, P_j]$, $d\le j\le i-1$. In a straightforward implementation of the algorithm, for computing $[\susp X,P_{i-1}]$ we should also recursively compute 
$[\susp\susp X,P_{i-2}]$ etc., forming essentially a complete binary tree of recursive calls. We remark that by a slightly more complicated implementation of the algorithm, this tree of recursive calls can be truncated, since we do not really need the complete information about $[\susp X,P_{i-1}]$ to compute $[X,P_i]$. Essentially, we need only a system of generators of $[\susp X,P_{i-1}]$ and not the relations; 
see Remark~\ref{rem:only-need-generators}. We stress, however, that this is merely a way to speed up the algorithm, and only by a constant factor if $d$ is fixed.

\subh{A remark on methods. } From a topological point of view, the tools and ideas that we use and combine to establish Theorem~\ref{t:main}
have been essentially known. 

On the one hand, there is an enormous topological
literature with many beautiful ideas; indeed, in our
experience, a problem with algorithmization may sometimes be
an \emph{abundance} of topological results, and the need to
sort them out. On the other hand, the classical computational
tools have been mostly designed for the ``paper-and-pencil''
model of calculation, where a calculating mathematician can,
e.g., easily switch between different representations of an
object or fill in some missing information by clever ad-hoc
reasoning. Adapting the various methods to machine
calculation sometimes needs a different approach; for instance, a
recursive formulation may be preferable to an explicit, but
cumbersome, formula (see, for example,
\cite{RubioSergeraert:ConstructiveAlgebraicTopology-2002,
SergerCouples} for an explanation of algorithmic difficulties
with \emph{spectral sequences}, a basic and powerful
computational tool in topology).

We see our main contribution as that of synthesis: identifying suitable methods, 
putting them all together, and organizing the result in a hopefully accessible way, so that it can be built on in the 
future. 

Some technical steps are apparently new; in this direction,
our main technical contribution is probably a suitable implementation
of the group operation on $P_i$ (Section~\ref{s:implementing-grp-op}) 
and recursive testing of nullhomotopy (Section~\ref{s:homotopytest}).
The former was generalized and, in a sense, simplified in \cite{aslep},
and the latter was extended to a more general situation 
in~\cite{VokriFil-homotopic}.

\section{Operations with Abelian groups}\label{s:abelops}

On the top level, our algorithm works with finitely generated
Abelian groups. \iffull The structure of such groups is simple
(they are direct sums of cyclic groups) and well known,
but  we will need to deal with certain subtleties
in their algorithmic representations.

\fi
In our setting, an Abelian group $A$ is represented by a set $\AA$,
whose elements are called \indef{representatives}; we also assume
that the representatives can be stored in a computer.
For $\alpha\in\AA$, let $[\alpha]$ denote the element of $A$
represented by $\alpha$. The representation is generally non-unique;
we may have $[\alpha]=[\beta]$ for $\alpha\ne\beta$.

We call $A$ represented in this way \indef{semi-effective}
if algorithms for the following three tasks are available:
\iffull
\begin{enumerate}
\item[(SE1)]
 Provide an element $\oo\in\AA$ representing the neutral element $0\in A$.
\item[(SE2)] Given $\alpha,\beta\in \AA$, compute an element
$\alpha\+\beta\in \AA$
with $[\alpha\+\beta]=[\alpha]+[\beta]$
 (where $+$ is the group operation in $A$).
\item[(SE3)]
 Given $\alpha\in\AA$, compute an element $\bmi \alpha \in\AA$
with $[\bmi \alpha]=-[\alpha]$.
\end{enumerate}
We stress that as a binary operation on $\AA$, $\+$ is not
necessarily a group operation; e.g., we may have $\alpha\+(\beta\+\gamma)\ne
(\alpha\+\beta)\+\gamma$, although of course, $[\alpha\+(\beta\+\gamma)]=
[(\alpha\+\beta)\+\gamma]$.
\else
provide an element $\oo\in\AA$ with $[\oo]=0$ (the neutral element);
given $\alpha,\beta\in \AA$, compute $\gamma\in \AA$
with $[\gamma]=[\alpha]+[\beta]$;
given $\alpha\in\AA$, compute $\beta \in\AA$
with $[\beta]=-[\alpha]$.
\fi
\iffull

For a semi-effective Abelian group, we are generally unable
to decide, for $\alpha,\beta\in\AA$, whether $[\alpha]=[\beta]$
(and, in particular, to certify that some element is nonzero).

Even if such an \emph{equality test} is available,
we still cannot infer much global
information about the structure of $A$. For example,
without additional information
we cannot certify that $A$ it is infinite cyclic---it could always
be large but finite cyclic, no matter how many operations
and tests we perform.

We now introduce a much stronger notion, with all the structural
information explicitly available. We call a semi-effective Abelian
group $A$ \indef{fully effective}
 if it is finitely generated
and we have an explicit expression
of $A$ as a direct sum of cyclic groups.
More precisely,
we assume that the following are explicitly available:
\begin{enumerate}
\item[(FE1)] A list of generators $a_1,\ldots,a_k$ of $A$
(given by representatives $\alpha_1,\ldots,\alpha_k\in \AA$)
and a list $(q_1,\ldots,q_k)$,
$q_i\in \{2,3,4,\ldots\}\cup \{\infty\}$,
such that each $a_i$ generates a cyclic subgroup of $A$ of order $q_i$,
$i=1,2,\ldots,k$, and $A$ is the direct sum of these subgroups.
\item[(FE2)] An algorithm that, given $\alpha\in\AA$,
computes a representation of $[\alpha]$ in terms of the generators;
that is, it returns $(z_1,\ldots,z_k)\in\Z^k$
such that $[\alpha]=\sum_{i=1}^kz_i a_i$.
\end{enumerate}
\else 
We call a semi-effective Abelian group $A$
\indef{fully effective} if the following
are explicitly available:
a finite list of generators $a_1,\ldots,a_k$ of $A$
 (given by representatives) and their orders $q_1,\ldots,q_k
\in \{2,3,\ldots\}\cup\{\infty\}$ (so that each $a_i$ generates a cyclic subgroup of $A$ of order $q_i$,
$i=1,2,\ldots,k$, and $A$ is the direct sum of these subgroups);
and an algorithm that, given $\alpha\in\AA$,
computes integers $z_1,\ldots,z_k$ so that
$[\alpha]=\sum_{i=1}^kz_i a_i$.
\fi

\iffull
First we observe that, for full effectivity, it is enough to have
$A$ given by arbitrary generators and relations. That is, we consider
a semi-effective $A$  together with a list
$b_1,\ldots,b_n$ of generators of $A$ (again explicitly
given by representatives)
and an $m\times n$ integer matrix $U$ specifying
a complete set of relations for the $b_i$; i.e.,
$\sum_{i=1}^n z_i b_i=0$ holds iff $(z_1,\ldots,z_n)$
is an integer linear combination of the rows of~$U$. Moreover,
we have an algorithm as in (FE2) that allows us to express
a given element $a$ as a linear combination of
$b_1,\ldots,b_n$ (here the expression may not be unique).

\begin{lemma}\label{l:rels}
 A semi-effective $A$
with a list of generators and
relations as above can be converted to a fully effective Abelian group.
\end{lemma}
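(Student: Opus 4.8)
The plan is to bring the relation matrix $U$ into Smith normal form and read the cyclic decomposition off the result. Under the hypotheses of the lemma, the map $e_i\mapsto b_i$ induces an isomorphism $A\cong \Z^n/L$, where $L\subseteq\Z^n$ is the subgroup generated by the rows of $U$; this uses precisely that $U$ encodes a \emph{complete} set of relations and that $b_1,\dots,b_n$ generate $A$. So it suffices to produce, algorithmically, a basis of $\Z^n$ adapted to $L$, and to transport the given semi-effective data along it.

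First I would run the classical Smith normal form algorithm on $U$: it outputs unimodular integer matrices $S\in\mathrm{GL}_m(\Z)$ and $T\in\mathrm{GL}_n(\Z)$ and a diagonal matrix $D=SUT$ with diagonal entries $d_1\mid d_2\mid\cdots\mid d_r$ (all nonzero), followed by zeros. Reading the columns of $T$ as a change of generating set, put $b'_j:=\sum_{i=1}^n (T^{-1})_{ji}\,b_i$ for $j=1,\dots,n$. Since $T^{-1}$ is unimodular, $b'_1,\dots,b'_n$ again generate $A$; and because $b=Tb'$, a row vector $z$ is a relation among the $b_i$ iff $zT$ is a relation among the $b'_j$, so the relation lattice of $(b'_j)$ is the row lattice of $UT$, which equals the row lattice of $D$ (as $S^{-1}$ is unimodular). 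Hence $\sum_j z_j b'_j=0$ iff $d_j\mid z_j$ for $j\le r$ and $z_j=0$ for $j>r$; equivalently, $A$ is the direct sum of the cyclic groups generated by the $b'_j$, with $b'_j$ of order $d_j$ for $j\le r$ and of infinite order for $j>r$. Discarding those $b'_j$ with $d_j=1$ (which represent $0$), the remaining orders lie in $\{2,3,\dots\}\cup\{\infty\}$, which is (FE1). A representative in $\AA$ for each retained $b'_j$ is computed from the given representatives of the $b_i$ using the semi-effective primitives $\+$ and $\bmi$ (an integer multiple $z\cdot\beta$ being obtained by repeated addition), and $\oo$ serves for any element known to be $0$.

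For (FE2), given $\alpha\in\AA$ I would first invoke the algorithm guaranteed by the hypothesis to obtain integers $c_1,\dots,c_n$ with $[\alpha]=\sum_i c_i b_i$; then, using $b_i=\sum_j T_{ij} b'_j$, we get $[\alpha]=\sum_j (cT)_j\, b'_j$, and reducing each coordinate $(cT)_j$ modulo $d_j$ for the finite-order generators (and dropping the coordinates of the discarded generators) expresses $[\alpha]$ in terms of the chosen generators. The set $\AA$ and the three semi-effective operations are untouched, so the object remains semi-effective and now satisfies (FE1) and (FE2) as well, i.e.\ it is fully effective.

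The only genuine difficulty is the Smith normal form computation itself: a naive elimination over $\Z$ can make the intermediate entries explode, so one should appeal to the fact --- classical, and all that is needed here since we are only after computability --- that $D$, $S$, and $T$ are computable (and, with a little care, with polynomially bounded bit-size). Everything else is routine linear bookkeeping over $\Z$, whose correctness is pinned down by the identification $A\cong\Z^n/L$ coming from the completeness of the relation matrix.
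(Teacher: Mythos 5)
Your proposal is correct and follows essentially the same route as the paper: apply the Smith normal form $D=SUT$ to the relation matrix, pass to the new generating set $b'=T^{-1}b$, and read the cyclic decomposition off the diagonal entries of $D$, discarding the trivial ones. You spell out a few of the bookkeeping steps (the transformation of the relation lattice via $T$, and the (FE2) coordinate conversion) in somewhat more detail than the paper does, but the underlying argument is the same.
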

\begin{proof} This amounts to a computation of a Smith normal form,
a standard step in computing integral homology groups, for example
(see \cite{Storjohann:NearOptimalAlgorithmsSmithNormalForm-1996}
for an efficient algorithm and references).

Concretely, the Smith normal form algorithm applied on $U$ yields
an expression $D=SUT$ with $D$ diagonal and $S,T$ square and invertible
(everything over $\Z$). Letting $\bb=(b_1,\ldots,b_n)$ be
the (column) vector of the given generators,
we define another vector $\aa=(a_1,\ldots,a_n)$
of generators by $\aa:=T^{-1}\bb$.
Then $D\aa=0$ gives a complete set of relations for the $a_i$
(since $DT^{-1}=SU$ and the row spaces of $SU$ and of $U$ are the same).
Omitting the generators $a_i$ such that $|d_{ii}|=1$ yields
a list of generators as in (FE1).
\end{proof}

In the remainder of this section, the special form of the
generators as in (FE1) will bring no advantage---on the
contrary, it would make the notation more cumbersome.
We thus assume that, for the considered fully effective
Abelian groups, we have a list of generators
and an arbitrary integer matrix specifying a complete set of relations
among the generators.
\fi

\iffull \heading{Locally effective mappings. }\fi
Let $X,Y$ be sets. We call a mapping $\varphi\:X\to Y$
\emph{locally effective} if there is an algorithm that, given an arbitrary
$x\in X$, computes $\varphi(x)$.
\iffull

Next, for
\else
For \fi
semi-effective Abelian groups $A,B$,
with sets $\AA,\BB$ of representatives,
respectively, we call a mapping $f\:A\to B$
 \indef{locally effective} if there is a locally effective
mapping $\varphi\:\AA\to\BB$ such that
$[\varphi(\alpha)]=f([\alpha])$ for all $\alpha\in \AA$.
In particular, we speak of a \indef{locally effective homomorphism}
if $f$ is a group homomorphism.

\iffull\else The proofs of the following three lemmas
are not difficult (given an algorithm for computing the Smith
normal form of an integer matrix)
and are omitted from this extended abstract.\fi

\begin{lemma}[Kernel]\label{l:ker}
Let $f\:A\to B$ be a locally effective homomorphism
of fully effective Abelian groups.
Then $\ker(f)=\{a\in A:f(a)=0\}$ can be represented as fully effective.
\end{lemma}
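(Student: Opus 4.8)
The plan is to reduce the computation of $\ker(f)$ to a Smith normal form computation, in the spirit of Lemma~\ref{l:rels}, and then invoke that lemma. Write $A$ with generators $a_1,\dots,a_k$ and relation matrix $R$ (so $\sum z_i a_i = 0$ iff $(z_1,\dots,z_k)$ is an integer combination of the rows of $R$), and likewise $B$ with generators $b_1,\dots,b_\ell$ and relation matrix $S$; by (FE2) for $B$ we can, for any representative, write its class in terms of $b_1,\dots,b_\ell$. First I would compute an integer matrix $M$ describing $f$ on generators: applying the locally effective homomorphism to the representatives $\alpha_1,\dots,\alpha_k$ of $a_1,\dots,a_k$ and running the (FE2)-algorithm of $B$, we obtain columns giving $f(a_i)=\sum_j M_{ji} b_j$. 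Then for $\aa=(a_1,\dots,a_k)$, an element $\sum_i z_i a_i$ lies in $\ker f$ precisely when $M z$, viewed in $B$, is $0$, i.e.\ when $M z$ is an integer combination of the rows of $S$; equivalently, the vector $z$ together with some auxiliary vector $w$ satisfies $Mz = S^\top w$ (rows of $S$). So $\ker f$, as a subgroup of $A$, consists of the classes $\sum z_i a_i$ where $(z,w)$ ranges over the integer solutions of the homogeneous system $Mz - S^\top w = 0$ together with: we must also allow $z$ to change by rows of $R$ without leaving the kernel, which it does automatically.

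Concretely, I would compute a generating set for the solution lattice $L = \{(z,w)\in\Z^{k+\ell} : Mz = S^\top w\}$ — this is the kernel of the integer matrix $[\,M \mid -S^\top\,]$, which is a standard Hermite/Smith normal form computation over $\Z$. Projecting each lattice generator to its first $k$ coordinates yields vectors $g_1,\dots,g_m\in\Z^k$; the corresponding elements $c_t := \sum_i (g_t)_i a_i \in A$ generate $\ker f$ as an Abelian group. These generators are explicitly available as representatives (form $g_t$ as an integer combination of the $\alpha_i$ using (SE2),(SE3) of $A$), and since $\ker f$ inherits the semi-effective structure from $A$ (the operations $\oo,\+,\bmi$ of $A$ restrict), we now have $\ker f$ presented with an explicit generating set. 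It remains to supply a complete set of relations among $c_1,\dots,c_m$. A tuple $(y_1,\dots,y_m)$ gives $\sum_t y_t c_t = 0$ in $A$ iff $\sum_t y_t g_t$ is an integer combination of the rows of $R$; so the relation matrix for the $c_t$ is obtained by computing the integer kernel of the map $\Z^m \oplus \Z^{(\#\text{rows of }R)} \to \Z^k$, $(y,u)\mapsto \sum_t y_t g_t - u R$, and projecting onto the $y$-coordinates — again a Smith/Hermite normal form computation. Finally, we also need the (FE2)-type algorithm for $\ker f$: given a representative $\beta$ of an element of $\ker f\subseteq A$, run the (FE2)-algorithm of $A$ to get its coordinates $z\in\Z^k$ with respect to $a_1,\dots,a_k$, then solve the integer linear system expressing $z$ as an integer combination of $g_1,\dots,g_m$ modulo the rows of $R$ (solvable since $\sum z_i a_i\in\ker f$ by hypothesis). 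With generators, a complete relation matrix, and this expressing algorithm in hand, Lemma~\ref{l:rels} converts $\ker f$ into a fully effective Abelian group.

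The main obstacle, and the one step that is more than routine, is producing the complete set of relations for $\ker f$ and justifying its completeness: one must be careful that the relations among the $c_t$ come both from the relation matrix $R$ of $A$ and from linear dependences among the $g_t$ themselves, and that the combined Smith normal form computation captures exactly the lattice of integer combinations vanishing in $A$. Everything else — computing $M$ via the locally effective $\varphi$ and (FE2) of $B$, computing integer kernels of integer matrices, and assembling representatives using (SE1)--(SE3) — is standard linear algebra over $\Z$, so the work lies entirely in correctly setting up the two kernel-of-an-integer-matrix computations and checking that their outputs present $\ker f$ faithfully.
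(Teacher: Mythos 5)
Your argument is correct and follows the same route as the paper's: represent $f$ by an integer matrix with respect to the chosen generators, translate membership in $\ker f$ into solvability of the homogeneous Diophantine system $Mz=S^\top w$, extract a generating set of solutions via a Smith/Hermite normal form computation, project to get generators of $\ker f$, compute their relations by a second such computation, and conclude via Lemma~\ref{l:rels}. The only difference is that you flesh out the relations step (and the (FE2)-type algorithm) which the paper dismisses as ``similar (and routine) considerations''; your treatment of those is correct.
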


\begin{proof} This essentially amounts to solving a homogeneous
system of linear equations over the integers.

Let $a_1,\ldots,a_m$ be a list of generators
of $A$ and $U$ a matrix specifying a complete set
of relations among them, and similarly for $B$, $b_1,\ldots,b_n$,
and~$V$.
For every $i=1,2,\ldots,m$, we express $f(a_i)=\sum_{j=1}^n z_{ij} b_j$;
then the $m\times n$ matrix $Z=(z_{ji})$ represents $f$ in the sense
that, for $a=\sum_{i=1}^m x_i a_i$, we have $f(a)=\sum_{j=1}^n y_jb_j$
with $\yy=\xx Z$, where $\xx=(x_1,\ldots,x_m)$ and $\yy=(y_1,\ldots,y_n)$
are regarded as \emph{row} vectors.

Since $V$ is the matrix of relations in $B$,
$\sum_{j=1}^n y_jb_j$ equals $0$ in $B$ iff
$\yy=\ww V$ for an integer (row) vector $\ww$.
So $\ker f=\{\sum_{i}x_i a_i: \xx\in\Z^m,
\xx Z=\ww V\mbox{ for some }\ww\in\Z^n\}$.

Given a system of homogeneous linear equations over $\Z$,
we can use the Smith normal form to find
a system of generators for the set of all solutions
(see, e.g., \cite[Chapter~5]{Schrij86}).
In our case, dealing with the system $\xx Z=\ww V$, we can
thus compute integer vectors $\xx^{(1)},\ldots,\xx^{(\ell)}$
such that the elements $a'_k:=\sum_{i=1}^m x_i^{(k)} a_i$, $k=1,2,\ldots,\ell$,
generate $\ker f$. By similar (and routine) considerations, which we omit,
we can then compute a complete set of relations for the
generators $a'_k$, and finally we apply Lemma~\ref{l:rels}.
\end{proof}

The next operation is the dual of taking a kernel, namely,
factoring a given Abelian group by the image of a locally effective
homomorphism. For technical reasons, when applying this lemma later on,
we will need the resulting factor group to be equipped with an additional algorithm
that returns a ``witness'' for an element being zero.

\begin{lemma}[Cokernel]\label{l:coker}
Let $A,B$ be fully effective Abelian groups with sets of representatives
$\AA,\BB$, respectively, and let $f\:A\to B$ be a locally effective
homomorphism. Then we can obtain a fully effective representation
of the factor group $C:=\coker(f)=B/\im(f)$, again with the set $\BB$
of representatives. Moreover, there is an algorithm that,
given a representative $\beta\in\BB$, tests whether
$\beta$ represents $0$ in $C$,
and if yes, returns a representative $\alpha\in\AA$ such
that $[f(\alpha)]=[\beta]$ in $B$.
\end{lemma}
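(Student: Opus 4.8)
The plan is to turn everything into integer linear algebra and a single invocation of Lemma~\ref{l:rels}. By the convention fixed above, we may assume $B$ is given by a list of generators $b_1,\dots,b_n$ (with chosen representatives in $\BB$) together with an $s\times n$ integer matrix $V$ whose rows span the lattice of all relations among the $b_j$; and $A$ is given by generators $a_1,\dots,a_m$ with representatives $\alpha_1,\dots,\alpha_m\in\AA$.

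First I would compute an $m\times n$ integer matrix $Z=(z_{ij})$ representing $f$: for each $i$, local effectivity of $f$ yields a representative in $\BB$ of $f(a_i)$, and the algorithm (FE2) for $B$ then expresses $f(a_i)=\sum_{j=1}^n z_{ij}b_j$. Since $A$ is generated by the $a_i$ and $f$ is a homomorphism, $\im(f)$ is generated in $B$ by $f(a_1),\dots,f(a_m)$, i.e.\ by the ``rows of $Z$'' read through the $b_j$. Consequently $C=B/\im(f)$ is generated by the classes of $b_1,\dots,b_n$, and a complete set of relations among them is given by the matrix $W$ whose rows are the rows of $V$ together with the rows of $Z$: indeed $\sum_j z_j b_j=0$ holds in $C$ iff $\sum_j z_j b_j\in\im(f)$, i.e.\ iff $\zz=\xx Z+\ww V$ for some $\xx\in\Z^m$, $\ww\in\Z^s$, which is precisely the row span of $W$. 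We keep $\BB$ as the set of representatives of $C$, with $\beta\in\BB$ now standing for $[\beta]+\im(f)$; the primitives $\oo,\+,\bmi$ are inherited unchanged from $B$ (the quotient map $B\to C$ is a homomorphism), so $C$ is semi-effective, and the (FE2) algorithm for $B$ also expresses any $[\beta]$ as an integer combination of the generators of $C$. Feeding the data $(\,b_1,\dots,b_n\,;\,W\,)$ together with this expression algorithm to Lemma~\ref{l:rels} produces a fully effective representation of $C$ with representative set $\BB$.

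It remains to implement the zero test with a witness. Given $\beta\in\BB$, use (FE2) for $B$ to compute $\yy\in\Z^n$ with $[\beta]=\sum_{j=1}^n y_jb_j$ in $B$. By the description above, $\beta$ represents $0$ in $C$ iff the inhomogeneous linear Diophantine system $\xx Z+\ww V=\yy$ is solvable for $(\xx,\ww)\in\Z^m\times\Z^s$; this is decided, and a particular solution produced when it exists, by a Smith normal form computation (entirely analogous to the one in the proof of Lemma~\ref{l:ker}). If there is no solution, the algorithm reports $[\beta]\neq 0$ in $C$. Otherwise it takes a solution $(\xx,\ww)$ and forms $\alpha:=x_1\alpha_1\+\dots\+x_m\alpha_m\in\AA$, each term $x_i\alpha_i$ being computed by iterating $\+$ (and $\bmi$ when $x_i<0$) with the semi-effective operations of $A$. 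Then $[\alpha]=\sum_i x_i a_i$ in $A$, hence
\[
f([\alpha])=\sum_{i=1}^m x_i f(a_i)=\sum_{j=1}^n (\xx Z)_j\,b_j=\sum_{j=1}^n (\yy-\ww V)_j\,b_j=\sum_{j=1}^n y_j\,b_j=[\beta]
\]
in $B$, where the third equality uses $\xx Z=\yy-\ww V$ and the last one uses that $\ww V$ lies in the relation lattice of $B$; so the representative of $f([\alpha])$ computed from $\alpha$ is the required witness.

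I expect the only slightly delicate point to be bookkeeping: making sure the same matrix $Z$ is used consistently both to present $\im(f)$ as extra relations of $C$ and, via the map $\xx\mapsto\xx Z$ on coefficient vectors, to track $f$ itself. Once that is arranged, the lemma reduces to a Smith normal form of $W$ (for Lemma~\ref{l:rels}) plus the solution of one inhomogeneous integer system (for the witness), both completely standard.
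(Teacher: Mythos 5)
Your proof is correct and follows essentially the same route as the paper's: reuse $\BB$ as the representative set for $C$ with the inherited semi-effective operations, take the images of the $b_j$ as generators, stack the matrix $Z$ of $f$ on top of the relation matrix $V$ of $B$ to get a complete set of relations for $C$, apply Lemma~\ref{l:rels}, and for the witnessed zero test solve the integer system $\xx Z + \ww V = \yy$ via Smith normal form. Your final verification that $f([\alpha])=[\beta]$ is spelled out a bit more explicitly than in the paper, but the argument is the same.
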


\begin{Remark}
\label{rem:only-need-generators}
As will become apparent from the proof, the assumption that
$A$ is fully effective is not really necessary. Indeed, all that is needed is that $A$
be semi-effective and that we have an explicit list of (representatives of) generators
for $A$. In order to avoid burdening the reader with yet another piece of of terminology,
however, we refrain from defining a special name for such representations.
\end{Remark}

\begin{proof}[Proof of Lemma~\ref{l:coker}]
As a semi-effective representation for $C$, we we simply reuse the one we already
have for $B$. That is, we reuse $\mathcal{B}$ (and the same algorithms for (SE1--3)) to represent the elements of $C$ as well. To distinguish clearly between elements in $B$ and in $C$, for $\beta\in \mathcal{B}$, we use the notation $b=[\beta]$ in $B$ and $\overline{b}=\overline{[\beta]}$ for the corresponding element $b+\im(f)$ in $C$.

For a fully effective representation of $C$, we need the following, by Lemma~\ref{l:rels}:
first, a complete set of generators for $C$ (given by representatives); second, an algorithm as in (FE2)
that expresses an arbitrary element of $C$ (given as $\beta\in B$) as a linear combination
of the generators; and, third, a complete set of relations among the generators.

For the first two tasks, we again reuse the solutions provided by the representation for $B$.
Suppose $b_1,\ldots, b_n$ (represented by $\beta_1,\ldots,\beta_n$) generate $B$. Then $\overline{b}_1,\ldots, \overline{b}_n$ (with the same representatives) generate $C$. Moreover, by assumption, we
have an algorithm that, given $\beta\in \mathcal{B}$, computes integers $z_i$ such that $[\beta]=z_1 b_1+\ldots z_nb_n$ in $B$; then  $\overline{[\beta]}=z_1 \overline{b}_1+\ldots +z_n\overline{b}_n$ in $C$.

A complete set of relations among the the generators of $C$ is obtained as follows. Let the matrix $V$ specify a complete set of relations among the generators $b_j$ of $B$, let $a_1,\ldots, a_m$ be a complete list of generators for $A$, and let $Z$ be an integer matrix representing the homomorphism $f$ with respect to the generators $a_1,\ldots,a_m$ and $b_1,\ldots,b_n$ as in the proof of Lemma~\ref{l:ker}. Then
$$U:=\left(\begin{array}{c}
Z\\ V
\end{array}\right)$$
specifies a complete set of relations among the $\overline{b}_j$ in $C$. To see that this is the case,
consider an integer (row) vector $\yy=(y_1,\ldots, y_n)$ and $\overline{b}:=\sum_{j=1}^n y_j \overline{b}_j$. Then $\overline{b}=0$ in $C$ iff $b:=\sum_{j=1}^n y_j b_j \in \im(f)$, i.e., iff there exists an element $a=\sum_{i=1}^m x_i a_i \in A$ such that $b-f(a)=0$ in $B$. By definition of $Z$ and by assumption on $V$, this is the case iff there are integer vectors $\xx$ and $\xx'$ such that $\yy=\xx Z+ \xx' V$, an integer combination of rows of $U$.

It remains to prove the second part of Lemma~\ref{l:coker}, i.e., to provide an algorithm that, given $\beta\in \mathcal{B}$, tests whether $\overline{[\beta]}=0$ in $C$, or equivalently, whether $[\beta]\in \im(f)$, and if so, computes a preimage. For this, we express $[\beta]=\sum_{j=1}^n y_j \overline{b}_j$ as an integer linear combination of generators of $B$ and then solve the system $\yy=\xx Z+\xx' V$ of integer linear equations as above (where we rely again on Smith normal form computations).
\end{proof}

The last operation is conveniently described using a
\emph{short exact sequence} of Abelian groups:
\begin{equation}
\label{e:ses}
\xymatrix{0 \ar[r]& A \ar[r]^{f} & B \ar[r]^{g} & C \ar[r]&0}
\end{equation}
(in other words, we assume that $f\:A\to B$ is an injective
homomorphism, $g\:B\to C$ is a surjective homomorphism,
and $\im f=\ker g$). It is well known that the middle group $B$ is
determined, up to isomorphism, by $A,C,f$, and $g$. For computational
purposes, though, we also need to assume that the injectivity
of $f$ is ``effective'', i.e., witnessed by
a locally effective inverse mapping $r$,
and similarly for the surjectivity of~$g$. This is formalized
in the next lemma.

\begin{lemma}[Short exact sequence]\label{l:exact-s}
Let (\ref{e:ses}) be a short exact sequence of Abelian groups,
where $A$ and $C$ are fully effective, $B$ is
semi-effective, $f\:A\to B$ and $g\:B\to C$ are locally effective
homomorphisms, and suppose that, moreover, the following
locally effective maps (typically not homomorphisms) are
given:
\begin{enumerate}
\item[\rm(i)] $r\:\im f=\ker g\to A$ such that
$f(r(b))=b$ for every $b\in B$ with $g(b)=0$.\footnote{The
equality $f(r(b))=b$ is required on the level of group elements,
and not necessarily on the level of representatives; that is,
it may happen that $\varphi(\rho(\beta))\ne \beta$,
although necessarily $[\varphi(\rho(\beta))]=[\beta]$,
where $\varphi$ represents $f$ and $\rho$ represents~$r$.
}
\item[\rm (ii)] A map of representatives\footnote{For technical reasons, in the setting where we apply this lemma later, we do not get a well-defined map $s\colon C\to B$ on the level of group elements, that is, we cannot guarantee that $[\gamma_1]=[\gamma_2]$ implies $[\rez(\gamma_1)]=[\rez(\gamma_2)]$. Because of the injectivity of $f$, this problem does not occur for the map $r$.} $\rez\colon \CC\to \BB$ (where $\BB,\CC$ are the sets of representatives for $B,C$,
respectively) that behaves as
a \emph{section} for $g$, i.e.,
such that $g([\rez(\gamma)])=[\gamma]$ for all $\gamma \in \CC$.
\end{enumerate}
Then we can obtain a fully effective representation of~$B$.
\end{lemma}


\begin{proof} Let $a_1,\ldots,a_m$ be generators
of $A$ and $c_1,\ldots,c_n$ be generators of $C$, with \emph{fixed} representative $\gamma_j\in \CC$ for each $c_j$. We define $b_j:=[\rez(\gamma_j)]$ for $1\leq j \leq n$.

Given an arbitrary element $b\in B$, we set $c:=g(b)$,
express $c=\sum_{j=1}^n z_j c_j$, and let
$b^*:= b-\sum_{j=1}^n z_j bj$. Since $g(b^*)=
g(b)-\sum_{j=1}^n z_j g(b_j)=0$, we have $b^*\in \ker g$,
and so $a:= r(b^*)$ is well defined. Then we can express
$a=\sum_{i=1}^m y_i a_i$, and we finally get $b=\sum_{i=1}^m y_i f(a_i)+
\sum_{j=1}^n z_j b_j$.

Therefore, $(f(a_1),\ldots,f(a_m),b_1,\ldots,b_n)$ is a list of
generators of $B$, computable in terms of representatives,
and the above way of expressing $b$ in terms of generators
is algorithmic. Moreover, we have $b=0$ iff $g(b)=0$ and
$r(b)=0$, which yields equality test in~$B$.

It remains to determine a complete set of relations for the described
generators (and then apply Lemma~\ref{l:rels}). Let $U$ be a matrix
specifying a complete set of relations among the generators
$a_1,\ldots,a_m$ in $A$, and $V$ is an appropriate matrix
for $c_1,\ldots,c_n$.

Let $(v_{k1},\ldots,v_{kn})$ be the $k$th row of~$V$.
Since $\sum_{j=1}^n v_{kj} c_j=0$, we have
$b^*_k := \sum_{j=1}^n v_{kj} b_j\in\ker g$,
and so, as above, we can express $b^*_k=\sum_{i=1}^m y_{ik} f(a_i)$.
Thus, we have the relation $-\sum_{i=1}^m y_{ik} f(a_i)+\sum_{j=1}^n v_{kj} b_j=0$ for our generators of~$B$.

Let $Y=(y_{ik})$ be the matrix of the coefficients $y_{ik}$
constructed above.
We claim that the matrix
$$
\left(\begin{array}{cc} -Y & V \\ U & 0\end{array}\right)
$$
specifies a complete set of relations among the generators
$f(a_1),\ldots$, $f(a_m)$, $b_1,\ldots$, $b_n$ of~$B$.
Indeed, we have just seen that the rows in the upper part
of this matrix correspond to valid relations, and
the relations given by the rows in the bottom part are valid
because $U$ specifies relations among the $a_i$ in~$A$
and $f$ is a homomorphism.

Finally, let
\begin{equation}\label{e:therel}
x_1 f(a_1)+\cdots+x_m f(a_m)+z_1b_1+\cdots+z_n b_n=0
\end{equation}
be an arbitrary valid relation among the generators.
Applying $g$ and using $g\circ f=0$, we get that
$\sum_{j=1}^n z_j c_j=0$ is a relation in $C$,
and so $(z_1,\ldots,z_n)$ is a linear combination of the
rows of~$V$.

Let $(w_1,\ldots,w_m)$ be the corresponding
linear combination of the rows of $-Y$. Then we have
$\sum_{i=1}^m w_i f(a_i)+\sum_{j=1}^n z_j b_j=0$,
and subtracting this from (\ref{e:therel}),
we arrive at $\sum_{i=1}^m (x_i-w_i)f(a_i)=0$.
Since $f$ is an injective homomorphism, we have
$\sum_{i=1}^m (x_i-w_i)a_i=0$ in $A$, and so
$(x_1-w_1,\ldots,x_m-w_m)$ is a linear combination
of the rows of~$U$. This concludes the proof.
\end{proof}

\section{Topological preliminaries}
\label{sec:prelim}


In this part we summarize notions and results from the literature.
They are mostly standard in homotopy theory and can be found
in textbooks---see, e.g., Hatcher \cite{Hatcher}
for topological notions and May~\cite{PetMay67} for
simplicial notions
(we also refer to Steenrod \cite{Steenrod:CohomologyOperationsObstructionsExtendingContinuousFunctions-1972} as an excellent background text,
although its terminology differs somewhat from the more modern usage).
However, they are perhaps
not widely known to non-topologists, and they are somewhat scattered
in the literature. We also aim at conveying some simple intuition
behind the various notions and concepts, which is not always
easy to get from the literature.

On the other hand,
in order to follow the arguments in this paper, for some
of the notions it is sufficient to know some properties,
and the actual definition is never used directly.
Such definitions are usually omitted; instead, we illustrate
the notions with simple examples or with an informal explanation.

Even readers with a strong topological background may want to
skim this part because of the notation. Moreover, in Section~\ref{s:postni}
we discuss an algorithmic result on the construction of Postnikov
systems, which may not be well known.

\heading{CW-complexes. } Below we will state various topological
results. Usually they hold for fairly general topological
spaces, but not for all topological spaces. The appropriate
level of generality for such results is the class of \indef{CW-complexes}
(or sometimes spaces homotopy equivalent to CW-complexes).

A reader not familiar with CW-complexes may either look up the definition
(e.g., in \cite{Hatcher}), or take this just to mean
``topological spaces of a fairly general kind, including all
simplicial complexes and simplicial sets''. It is also good to
know that, similar to simplicial complexes, CW-complexes
are made of pieces (\indef{cells}) of various dimensions,
where the $0$-dimensional cells are also called \indef{vertices}.
There is only one place, in Section~\ref{s:hgroup},
where a difference between CW-complexes and simplicial
sets becomes somewhat important, and there we will stress this.

%
%

\subsection{Simplicial sets}\label{s:simplsets}

Simplicial sets are our basic device for representing topological
spaces and their maps in our algorithm. Here we introduce them
briefly, with emphasis on the ideas and intuition,
referring to Friedman \cite{Friedm08} for a very friendly thorough
introduction, to \cite{Curtis:SimplicialHomotopyTheory-1971,PetMay67}
for older compact sources, and to \cite{GoerssJardine}
for a more modern and comprehensive treatment.

A \emph{simplicial set} can be thought of as a generalization of simplicial
complexes. Similar to a simplicial complex,
a simplicial set is a space built of vertices, edges, triangles,
and higher-dimensional simplices, but simplices are allowed to be glued
to each other and to themselves in more general ways. For example,
one may have several 1-dimensional simplices connecting the same
pair of vertices, a 1-simplex forming a loop,
two edges of a  2-simplex identified
to create a cone, or the boundary of a 2-simplex all contracted
to a single vertex, forming an $S^2$.

\immfig{simplset}

However, unlike for the still more general CW-complexes,
a simplicial set can be described purely combinatorially.

Another new feature of a simplicial set, in comparison with a simplicial
complex, is the presence of \emph{degenerate simplices}. For example,
the edges of the triangle with a contracted boundary (in the last
example above) do not disappear---formally, each of them
keeps a phantom-like existence of a degenerate 1-simplex.

\heading{Simplices, face and degeneracy operators. }
A simplicial set $X$ is represented as a sequence $(X_0,X_1,X_2,\ldots)$
of mutually disjoint sets, where the elements of $X_m$
are called the \emph{$m$-simplices of $X$}. 
For every $m\ge 1$,
there are $m+1$ mappings $\partial_0,\ldots,\partial_m\:X_m\to X_{m-1}$
called \indef{face operators}; the meaning is that
for a simplex $\sigma\in X_m$, $\partial_i\sigma$ is the face of $\sigma$
obtained by deleting the $i$th vertex. Moreover, there are
$m+1$ mappings $s_0,\ldots,s_m\:X_m\to X_{m+1}$ (opposite direction)
called the \indef{degeneracy operators}; the meaning of $s_i\sigma$
is the degenerate simplex obtained from $\sigma$ by duplicating
the $i$th vertex. A simplex
is called \indef{degenerate} if it lies in the image of
some $s_i$; otherwise, it is \indef{nondegenerate}.
There are natural axioms that the $\partial_i$
and the $s_i$ have to satisfy, but we will not list them here,
since we won't really use them
(and the usual definition of a simplicial
set is formally different anyway, expressed
in the language of category theory). 

We call $X$ \emph{finite} if it has finitely many
nondegenerate simplices (every nonempty simplicial
set has infinitely many degenerate simplices).

\iffull 
\heading{Examples. } Here we sketch some basic examples of simplicial
sets; again, we won't provide all details, referring to \cite{Friedm08}.
Let $\Delta^n$ denote the standard
$n$-dimensional simplex regarded as a simplicial set.
For $n=0$, $(\Delta^0)_m$ consists of a single simplex,
denoted by $0^m$,  for every $m=0,1,\ldots$;  $0^0$ is the
only nondegenerate simplex. The face and degeneracy operators
are defined in the only possible way.

For $n=1$, $\Delta^1$ has two $0$-simplices (vertices), say
$0$ and $1$, and in general there are $m+2$ simplices in
$(\Delta^1)_m$; we can think of the $i$th one as containing
$i$ copies of the vertex $0$ and $m+1-i$ copies of the vertex
$1$, $i=0,1,\ldots,m+1$. For $n$ arbitrary, the $m$-simplices
of $\Delta^n$ can be thought of as all nondecreasing
$(m+1)$-term sequences with entries in $\{0,1,\ldots,n\}$;
the ones with all terms distinct are nondegenerate.

In a similar fashion,
every simplicial complex $K$ can be converted into a simplicial set $X$
in a canonical way; however, first we need to fix a linear
ordering of the vertices. The nondegenerate $m$-simplices
of $X$ are in one-to-one correspondence with the $m$-simplices of $K$,
but many degenerate simplices show up as well.

Finally we mention a ``very infinite'' but extremely
instructive example, the  \emph{singular set},
which contributed significantly to the invention
of simplicial sets---as Steenrod \cite{Steenrod:CohomologyOperationsObstructionsExtendingContinuousFunctions-1972} puts it, the definition
of a simplicial set is obtained by writing down fairly obvious properties
of the singular set. For a topological space $Y$,
the singular set $S(Y)$ is the simplicial set
whose $m$-simplices are all continuous
maps of the standard $m$-simplex into $Y$.
The $i$th face operator $\partial_i\:S(Y)_m\to S(Y)_{m-1}$
is given by the composition with
a canonical mapping that sends the standard $(m-1)$-simplex
to the $i$th face of the standard $m$-simplex. Similarly,
the $i$th degeneracy operator is induced by the canonical mapping
that collapses the standard $(m+1)$-simplex to its $i$th $m$-dimensional
face and then identifies this face with the standard $m$-simplex,
preserving the order of the vertices.

\heading{Geometric realization. } Similar to a simplicial
complex, each simplicial set $X$ defines a topological space
$|X|$ (the \indef{geometric realization of $X$}), uniquely up
to homeomorphism. Intuitively, one takes disjoint geometric
simplices corresponding to the nondegenerate simplices of
$X$, and glues them together according to the identifications
implied by the face and degeneracy operators
(we again refer to the
literature, especially to \cite{Friedm08}, for a formal
definition).

\heading{$k$-reduced simplicial sets. } A simplicial
set $X$ is called \indef{$k$-reduced} if it has a single
vertex and no nondegenerate simplices in dimensions $1$ through $k$.
Such an $X$ is necessarily $k$-connected.

A similar terminology can also be used for CW-complexes;
$k$-reduced means a single vertex (0-cell) and no cells
in dimensions $1$ through~$k$.

\heading{Products. } The \emph{product} $X\times Y$ of two
simplicial sets is formally defined in an incredibly simple way:
we have $(X\times Y)_m:=X_m\times Y_m$ for every $m$, and
the face and degeneracy operators work componentwise;
e.g., $\partial_i(\sigma,\tau):=(\partial_i\sigma,\partial_i\tau)$.
As expected, the product of simplicial sets corresponds
to the Cartesian product of the geometric realizations,
i.e., $|X\times Y|\cong|X|\times |Y|$.\footnote{To be precise,
the product of topological spaces on the right-hand side
should be taken in the category of $k$-spaces; but for the
spaces we encounter, it is the same as the usual product
of topological spaces.}
The simple definition hides some intricacies, though, as one can guess
after observing that, for example, the product of two 1-simplices
is not a simplex---so the above definition has to imply some
canonical way of triangulating the product. It indeed does,
and here the degenerate simplices deserve their bread.

\heading{Cone and suspension. } Given a simplicial set $X$,
the \indef{cone $CX$} is a simplicial set
obtained by adding a new vertex $*$ to $X$, taking
all simplices of $X$, and, for every $m$-simplex $\sigma\in X_m$
and every $i\ge 1$,
adding to $CX$ the $(m+i)$-simplex obtained from $\sigma$
by adding $i$ copies of $*$. In particular, the nondegenerate
simplices of $CX$ are the nondegenerate simplices of $X$
plus the cones over these (obtained by adding a single copy of $*$).
We skip the definition of face and degeneracy operators for $CX$ as usual.
The definitions
are discussed, e.g., in \cite[Chapter III.5]{GoerssJardine},
although there they are given in a more abstract language, and
later (in Section~\ref{s:factor} below)
we will state the concrete properties of $CX$ that we will
need.

We will also need the \indef{suspension $\susp X$}; this is the simplicial
set $CX/X$ obtained from $CX$ by contracting all simplices of $X$ into a single
vertex. The following picture illustrates both of the constructions
for a 1-dimensional $X$:
\immfig{CXSX}
Topologically, $SX$ is the usual (unreduced) suspension of $X$,
which is often presented as erecting a double cone
over $X$ (or a join with an $S^0$).
This would also be the ``natural'' way of defining the suspension
for a simplicial complex, but the above definition for simplicial sets
is combinatorially different, although topologically equivalent.
Even if $X$ is a simplicial complex, $\susp X$ is not. For us, the main
advantage is that the simplicial structure of $\susp X$
is particularly simple; namely, for $m>0$, the $m$-simplices
of $\susp X$ are in one-to-one correspondence with the
$(m-1)$-simplices of~$X$.\footnote{Let us also remark that in homotopy-theoretic
literature, one often works with \emph{reduced} cone
and suspension, which are appropriate for the category of
pointed spaces and maps. For example, the \emph{reduced suspension}
$\Sigma X$ is obtained from $\susp X$ by collapsing the segment
that connects the apex of $CX$ to the basepoint of~$X$.
For CW-complexes, $\Sigma X$ and $SX$ are homotopy equivalent,
so the difference is insignificant for our purposes.}

\heading{Simplicial maps and homotopies. }
Simplicial sets serve as a combinatorial way of describing a topological
space; in a similar way, simplicial maps provide a combinatorial
description of continuous maps.

A \indef{simplicial map} $f\:X\to Y$ of simplicial sets $X,Y$ consists of maps
$f_m\:X_m\to Y_m$, $m=0,1,\ldots$, that commute with the face and degeneracy
operators. We denote the set of all simplicial maps $X\to Y$
by $\SM(X,Y)$.\footnote{There is a technical issue to be clarified
here, concerning \emph{pointed maps}. We recall that a \emph{pointed
space} $(X,x_0)$ is a topological space $X$ with a choice
of a distinguished point $x_0\in X$ (the \emph{basepoint}).
In a CW-complex or simplicial set, we will always assume the
basepoint to be a vertex.
A \emph{pointed map} $(X,x_0)\to(Y,y_0)$ of pointed spaces is
a continuous map sending $x_0$ to $y_0$. Homotopies of pointed
maps are also meant to be pointed; i.e., they must keep the image of the basepoint fixed. The reader may recall that,
for example, the homotopy groups $\pi_k(Y)$ are really defined
as homotopy classes of pointed maps.

If $X,Y$ are simplicial sets, $X$ is arbitrary,
and $Y$ is a $1$-reduced
(thus, it has a single vertex, which is the basepoint),
as will be the case for the targets of simplicial maps
in our algorithm,
then every simplicial map is automatically pointed.
Thus, in this case, we need not worry about pointedness.

A topological counterpart of this is that, if $Y$ is a $1$-connected
CW-complex,
then every map  $X\to Y$ is (canonically) homotopic to a map sending
$x_0$ to $y_0$, and thus $[X,Y]$ is canonically isomorphic
to the set of all homotopy classes of pointed maps $X\to Y$.
}

It is useful to observe that it suffices to specify
a simplicial map $f\: X\to Y$ on the \emph{nondegenerate} simplices
of $X$; the values on the degenerate simplices are then determined
uniquely. In particular, if $X$ is finite,
then such an $f$ can be specified as a finite object.

A simplicial map $f\:X\to Y$ induces a continuous map $|f|\:|X|\to|Y|$
of the geometric realizations
in a natural way (we again omit the precise definition).
Often we will take the usual liberty of omitting $|\cdot|$
and not distinguishing between simplicial sets and maps
and their geometric realizations.

Of course,
not all continuous maps are induced by simplicial maps. But the usefulness
of simplicial sets for our algorithm (and many other applications)
stems mainly from the fact that, if the target $Y$
has the \emph{Kan extension property},
then \emph{every} continuous map $\varphi\:|X|\to|Y|$ is homotopic
to a simplicial map $f\:X\to Y$.\footnote{The reader may be familiar
with the \emph{simplicial approximation theorem}, which states that
for every continuous map $\varphi\:|K|\to|L|$ between the polyhedra
of simplicial complexes, there is a simplicial map of a \emph{sufficiently
fine subdivision} of $K$ into $L$ that is homotopic to $\varphi$.
The crucial difference is that in the case of simplicial sets,
if $Y$ has the Kan extension property, we need not subdivide $X$ at all!}

The Kan extension property is a certain property
of a simplicial set (and the simplicial sets having it are
called \indef{Kan simplicial sets}),
which need not be spelled out here---it will
suffice to refer to standard results to check the property where needed.
In particular,  every \indef{simplicial group}
is a Kan simplicial set, where a simplicial group
$G$ is a simplicial set for which every $G_m$ is endowed with
a group structure, and the face and degeneracy operators
are group homomorphisms
(we will see examples in Section~\ref{s:eilenb} below).

Homotopies of simplicial maps into a Kan simplicial set can
also be represented simplicially. Concretely,  a
\indef{simplicial homotopy} between two simplicial maps
$f,g\:X\to Y$ is a simplicial map $F\:X\times\Delta^1\to Y$
such that $F|_{X\times \{0\}}=f$ and $F|_{X\times \{1\}}=g$;
here, as we recall, $\Delta^1$ represents the geometric
1-simplex (segment) as a simplicial set, and, with some abuse
of notation, $\{0\}$ and $\{1\}$ are the simplicial subsets
of $\Delta^1$ representing the two vertices. Again, if $Y$ is
a Kan simplicial set, then two simplicial maps $f,g$ into $Y$
are simplicially homotopic iff they are homotopic in
the usual sense as continuous maps.

\heading{Locally effective simplicial sets and simplicial maps. } Unsurprisingly, there 
is a price to pay for the convenience of representing all continuous maps and homotopies 
simplicially: a Kan simplicial set necessarily has infinitely many simplices
in every dimension (except for some trivial cases); thus we need nontrivial techniques 
for representing it in a computer. Fortunately, the Kan simplicial sets relevant in our case 
have a sufficiently regular structure and can be handled; suitable techniques were developed and presented in
\cite{Sergeraert:ComputabilityProblemAlgebraicTopology-1994,RubioSergeraert:ConstructiveAlgebraicTopology-2002,RomeroRubioSergeraert,SergRub-homtypes,SergerGenova}.
\fi 

For algorithmic purposes, a simplicial set $X$ is represented in a \emph{black box} or \emph{oracle}  manner,
by a collection of various algorithms that allow us to access certain information about $X$. 
Specifically, let $X$ be a simplicial set, and suppose that some encoding for the simplices of $X$ by strings (finite
sequences over some fixed alphabet, say $\{0,1\}$) has been fixed. 

We say that $X$ is \emph{locally effective} if we have algorithms for evaluating the face and degeneracy maps, i.e.,
i.e., given (the encoding of) a $\thedim $-simplex $\sigma$ of $X$ and $i \in\{0,1,\ldots,\thedim\}$, we can compute 
the simplex $\partial_i\sigma$, and similarly for the degeneracy operators~$s_i$.

A simplicial map $f\colon X\to Y$  is called \emph{locally effective} if there is an algorithm that, given (an encoding of) a simplex $\sigma$ of 
$X$, computes (the encoding of) the simplex $f(\sigma)$.



\subsection{Eilenberg--MacLane spaces and cohomology}\label{s:eilenb}

%

\iffull

\heading{Cohomology. }  We will need some terminology from (simplicial)
cohomology, such as cochains, cocycles, and cohomology groups.
However, these will be mostly a convenient bookkeeping device for us,
and we won't need almost any properties of cohomology.

For a simplicial complex $X$, an integer $n\ge 0$,
 and an Abelian group $\pi$, an \indef{$n$-dimensional cochain}
with values in $\pi$ is an arbitrary mapping $c^n\:X_n\to \pi$,
i.e., a labeling of the $n$-dimensional simplices of $X$
with elements of $\pi$. The set of all $n$-dimensional
cochains is (traditionally) denoted by $C^n(X;\pi)$;
with componentwise addition, it forms an Abelian group.

For a simplicial \emph{set} $X$, we define $C^n(X;\pi)$
to consist only of cochains in which all \emph{degenerate} simplices receive value $0$ (these are sometimes called \emph{normalized cochains}).

Given an $n$-cochain $c^n$, the \emph{coboundary} of $c^n$
is the $(n+1)$-cochain $d^{n+1}=\delta  c^n$
whose value on a $\tau\in X_{n+1}$
is the sum of the values of $c^n$ over the $n$-faces of $\tau$
(taking orientations into account); formally,
$$
d^{n+1}(\tau)= \sum_{i=0}^{n+1} (-1)^i c^n(\partial_i\tau).
$$

A cochain $c^n$ is a \indef{cocycle} if $\delta c^n=0$;
$Z^n(X;\pi)\subseteq C^n(X;\pi)$ is the subgroup of all
cocycles (Z for \emph{koZyklus}), i.e., the kernel of $\delta$.
The subgroup $B^n(X;\pi)\subseteq C^n(X;\pi)$ of all \emph{coboundaries}
is the image of $\delta$; that is, $c^n$ is a coboundary if
$c^n=\delta b^{n-1}$ for some $(n-1)$-cochain $b^{n-1}$.

The $n$th (simplicial) \indef{cohomology group} of $X$
is the factor group
$$H^n(X;\pi):= Z^n(X;\pi)/B^n(X;\pi)$$
(for this to make sense,
of course, one needs the basic fact $\delta\circ\delta=0$).

\heading{Eilenberg--MacLane spaces as ``simple ranges''. }
The homotopy groups $\pi_k(Y)$ are among the most important
invariants of a topological space $Y$. The group $\pi_k(Y)$
collects information about the ``$k$-dimensional structure''
of $Y$ by probing $Y$ with all possible maps \emph{from}
$S^k$. Here the sphere $S^k$ plays a role of the ``simplest
nontrivial'' $k$-dimensional space; indeed, in some respects,
for example concerning homology groups, it is as simple as
one can possibly get.

However, as was first revealed by the famous \emph{Hopf map}
$S^3\to S^2$, the spheres are not at all simple concerning
maps going \emph{into} them. In particular, the groups
$\pi_k(S^n)$ are complicated and far from understood, in
spite of a huge body of research devoted to them. So if one
wants to probe a space $X$ with maps going \emph{into} some
``simple nontrivial'' space, then spaces other than spheres
are needed---and the Eilenberg--MacLane spaces can play this
role successfully.

Given an Abelian group $\pi$ and an integer $n\ge 1$, an
\indef{Eilenberg--MacLane space} $K(\pi,n)$ is defined as
any topological space $T$ with $\pi_n(T)\cong \pi$ and
$\pi_k(T)=0$ for all $k\ne n$. It is not difficult to show
that a $K(\pi,n)$ exists (by taking a wedge of $n$-spheres
and inductively attaching balls of dimensions
$n+1,n+2,\ldots$ to kill elements of the various homotopy
groups), and it also turns out that $K(\pi,n)$ is unique up
to homotopy equivalence.\footnote{Provided that we restrict
to spaces that are homotopy equivalent to CW-complexes.}

The circle $S^1$ is (one of the incarnations of) a $K(\Z,1)$,
and $K(\Z_2,1)$ can be represented as the
infinite-dimensional real projective space, but generally
speaking, the spaces $K(\pi,n)$  do not look exactly like
very simple objects.

\heading{Maps into $K(\pi,n)$. } Yet the following elegant
fact shows that the $K(\pi,n)$ indeed constitute ``simple''
targets of maps.

\begin{lemma}\label{l:Kpi-cohomo} For every $n\ge1$
and every Abelian group $\pi$, we have
$$
[X,K(\pi,n)]\cong H^n(X;\pi),
$$
where $X$ is a simplicial complex (or a CW-complex).
\end{lemma}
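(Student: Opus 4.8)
The plan is to establish the bijection $[X,K(\pi,n)]\cong H^n(X;\pi)$ by first replacing $K(\pi,n)$ with a convenient concrete model whose simplices and face/degeneracy operators are explicitly governed by cochains, and then reading off the correspondence directly. Since $[X,K(\pi,n)]$ depends only on the homotopy type of the target, I am free to choose the simplicial Eilenberg--MacLane model $K(\pi,n)$ of Eilenberg and MacLane, the one built so that its nondegenerate $m$-simplices correspond to \emph{normalized $n$-cochains on the standard $m$-simplex}, equivalently so that an $m$-simplex of $K(\pi,n)$ is (essentially) an element of $Z^n(\Delta^m;\pi)$. This model is a simplicial group, hence Kan, so every continuous map $|X|\to|K(\pi,n)|$ is homotopic to a simplicial one and every homotopy of simplicial maps can be realized simplicially; this reduces the whole statement to a purely combinatorial computation with $\SM(X,K(\pi,n))$ modulo simplicial homotopy.

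The key steps, in order, are as follows. First, I would recall (or set up) the defining property of the chosen model: there is a natural bijection, the \emph{fundamental-cocycle correspondence}, between simplicial maps $f\colon X\to K(\pi,n)$ and normalized $n$-cocycles $z_f\in Z^n(X;\pi)$, obtained by pulling back the tautological ``fundamental cocycle'' $\iota_n\in Z^n(K(\pi,n);\pi)$; concretely $f$ assigns to each $m$-simplex of $X$ a cocycle on $\Delta^m$, and evaluating on the top cell for $m=n$ recovers $z_f$, while naturality of $\delta$ forces $z_f$ to be a cocycle. This gives $\SM(X,K(\pi,n))\cong Z^n(X;\pi)$, an isomorphism of sets (in fact of groups, using the simplicial-group structure on the target, though we only need the set bijection). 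Second, I would show that two simplicial maps $f,g$ are simplicially homotopic iff $z_f-z_g\in B^n(X;\pi)$: a simplicial homotopy $F\colon X\times\Delta^1\to K(\pi,n)$ corresponds, again by the fundamental-cocycle correspondence applied to $X\times\Delta^1$, to a cocycle on $X\times\Delta^1$ whose two restrictions are $z_f$ and $z_g$; a standard acyclic-carrier / prism argument (the simplicial analogue of the chain homotopy between the two inclusions $X\rightrightarrows X\times\Delta^1$) extracts from $F$ an $(n-1)$-cochain $c$ with $\delta c=z_g-z_f$, and conversely any such $c$ can be used to build $F$. Passing to quotients, $[X,K(\pi,n)]=\SM(X,K(\pi,n))/{\sim}\;\cong\;Z^n(X;\pi)/B^n(X;\pi)=H^n(X;\pi)$, and the construction is manifestly natural in $X$.

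The main obstacle is the homotopy step: verifying carefully that the simplicial-homotopy relation on $\SM(X,K(\pi,n))$ corresponds \emph{exactly} to cohomologous cocycles, in both directions. One direction (cohomologous $\Rightarrow$ homotopic) requires producing an honest simplicial map $X\times\Delta^1\to K(\pi,n)$ from a bounding cochain, which uses the Kan property of the target (to fill in the combinatorial prisms coherently) together with the explicit cochain description of the model; the other direction requires that the homotopy relation is transitive here, which again leans on Kan-ness. A secondary technical point is the passage from arbitrary continuous maps and homotopies to simplicial ones: this is exactly what the Kan extension property of $K(\pi,n)$ buys us, and it is why choosing the simplicial-group model (rather than an arbitrary topological $K(\pi,n)$) at the very start is essential. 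Everything else—naturality of $\delta$, the fundamental cocycle, the prism operator—is routine once the model is fixed, so I would state those as standard facts and cite Friedman~\cite{Friedm08} or May~\cite{PetMay67} rather than grind through the identities.
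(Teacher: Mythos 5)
Your proof is correct, but it takes a genuinely different route from what the paper presents as its own proof of this lemma. You pass immediately to the simplicial Kan model $K(\pi,n)$ (with $K(\pi,n)_m = Z^n(\Delta^m;\pi)$), use the fundamental-cocycle correspondence $\SM(X,K(\pi,n))\cong Z^n(X;\pi)$, and then argue that simplicial homotopy corresponds exactly to cohomologous cocycles. The paper develops these very facts \emph{separately}, as Lemma~\ref{l:sKpi-cohomo} and Lemma~\ref{l:shomotopy0} (both attributed to May \cite{PetMay67}), and it explicitly remarks after Lemma~\ref{l:shomotopy0} that the simplicial version of Lemma~\ref{l:Kpi-cohomo} is obtained by combining those two --- which is precisely what you do. The paper's own proof sketch of Lemma~\ref{l:Kpi-cohomo}, by contrast, is a ``geometric'' cellular argument in the spirit of obstruction theory: homotope $f$ to be constant on $X^{(n-1)}$ using $(n-1)$-connectedness and the homotopy extension property, read off an element of $\pi_n(K(\pi,n))=\pi$ on each $n$-simplex, verify that vanishing on boundaries of $(n+1)$-simplices gives the cocycle condition, and build a map from any cocycle by extending skeleton by skeleton. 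That version has the advantage of working for any CW-model of $K(\pi,n)$ and any continuous map, and the authors offer it to build intuition for why maps into $K(\pi,n)$ admit a discrete cochain representation; your version is cleaner and is exactly the form the paper relies on algorithmically in the sequel, but is necessarily tied to the combinatorial Kan model from the outset. Both arguments are valid, and the two-lemma decomposition you use matches the paper's subsequent simplicial treatment rather than the proof it actually attaches to this lemma.
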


This is a basic and standard result (e.g.,
\cite[Lemma~24.4]{PetMay67} in a simplicial setting), but
nevertheless we will sketch an intuitive geometric proof,
since it explains why maps into $K(\pi,n)$ can be represented
discretely, by cocycles, and this is a key step towards
representing maps in our algorithm.

\begin{proof}[Sketch of proof. ]
For simplicity, let $X$ be a finite simplicial complex (the argument works
for a CW-complex in more or less the same way), and let us
consider an arbitrary continuous map $f\:|X|\to K(\pi,n)$,
$n\ge 2$.

First, let us consider the restriction of $f$ to the $(n-1)$-skeleton
$X^{(n-1)}$ of $X$.
Since by definition, $K(\pi,n)$
is $(n-1)$-connected,  $f|_{X^{(n-1)}}$ is homotopic
to the constant map sending ${X^{(n-1)}}$ to
a single point $y_0$ (we can imagine pulling the images
of the simplices to $y_0$ one by one, starting with vertices,
continuing with 1-simplices, etc., up to $(n-1)$-simplices).
Next, the homotopy of $f|_{X^{(n-1)}}$ with this constant map
can be extended to a homotopy of $f$ with a map $\tilde f$
defined on all of $X$ (this is a standard fact
known as the \emph{homotopy extension property} of $X$, valid
for all CW-complexes, among others). Thus,
$\tilde f\sim f$ sends ${X^{(n-1)}}$ to~$y_0$.

Next, we consider an $n$-simplex $\sigma$ of $X$. All of its boundary
now goes to $y_0$,  and so the restriction of $\tilde f$ to
$\sigma$ can be regarded as a map $S^n\to K(\pi,n)$ (since
collapsing the boundary of an $n$-simplex to a point
yields an $S^n$). Thus, up to homotopy, $\tilde f|_{\sigma}$
is described by an element of $\pi_n(K(\pi,n))=\pi$.
In this way, $\tilde f$ defines a cochain $c^n=c^n_{\tilde f}
\in C^n(X;\pi)$. The following picture captures this schematically:
\immfig{kochain}
The target space $K(\pi,n)$ is illustrated as having a hole
``responsible'' for the nontriviality of~$\pi_n$.

We note that $\tilde f$ is not determined uniquely by $f$,
and $c^n_{\tilde f}$ may also depend on the choice of~$\tilde f$.

Next, we observe that every cochain of the form $c^n_{\tilde f}$
is actually a \emph{cocycle}. To this end, we consider
an $(n+1)$-simplex $\tau\in X_{n+1}$. Since $\tilde f$ is defined
on all of $\tau$, the restriction $\tilde f|_{\partial\tau}$
to the boundary is nullhomotopic. At the same time,
$\tilde f|_{\partial\tau}$ can be regarded as the sum
of the elements of $\pi_n(K(\pi,n))$ represented by the restrictions
of $\tilde f$ to the $n$-dimensional faces of $\tau$.

Indeed, for any space $Y$ the sum $[f]$ of two elements
$[f_1],[f_2]\in \pi_n(Y)$ can be represented by contracting
an $(n-1)$-dimensional ``equator'' of $S^n$ to the basepoint,
thus obtaining a wedge of two $S^n$'s, and then defining $f$
to be $f_1$ on one of these and $f_2$ on the other, as
indicated in the picture below on the left (this time for
$n=2$). Similarly, in our case, the sum of the maps on the
facets of $\tau$ can be represented by contracting the
$(n-1)$-skeleton of $\tau$ to a point, and thus obtaining a
wedge of $n+2$ $n$-spheres. \immfig{kozyklus}

Therefore, we have $(\delta c^n)(\tau)=0$, and $c^n=c^n_{\tilde f}\in Z^n(X;\pi)$ as claimed.

Conversely, given any $z^n\in Z^n(X;\pi)$, one can exhibit a
map $\tilde f\:X\to K(\pi,n)$ with $c^n_{\tilde f}=z^n$. Such
an $\tilde f$ is build one simplex of $X$ at a time. First,
all simplices of dimension at most $n-1$ are sent to $y_0$.
For every $\sigma\in X_n$, we choose a representative of the
element $z^n(\sigma)\in \pi_n(K(\pi,n))$, which is a
(pointed) map $S^n\to K(\pi,n)$, and use it to map $\sigma$.
Then for $\tau\in X_{n+1}$, $\tilde f$ can be extended to
$\tau$, since $\tilde f|_{\partial\tau}$ is nullhomotopic by
the cocycle condition for~$z^n$. Finally, for a simplex
$\omega$ of dimension larger than $n+1$, the $\tilde f$
constructed so far is necessarily nullhomotopic on
$\partial\omega$ because $\pi_k(K(\pi,n))=0$ for all $k>n$,
and thus an extension to $\omega$ is always possible.

We hope that this may convey some idea where the cocycle representation
of maps into $K(\pi,n)$
comes from. By similar, but a little more complicated
 considerations, which we omit here, one can
convince oneself that two maps $f,g\:X\to K(\pi,n)$ are
homotopic exactly when the corresponding cocycles
$c^n_{\tilde f}$ and $c^n_{\tilde g}$ differ by a coboundary.
In particular, for a given $f$, the cocycle $c^n_{\tilde f}$
may depend on the choice of $\tilde f$, but the cohomology
class $c^n_{\tilde f}+B^n(X;\pi)$ does not. This finishes the
proof sketch.
\end{proof}
\fi 

\iffull

\heading{A Kan simplicial model of $K(\pi,n)$. } The
Eilenberg--MacLane spaces $K(\pi,n)$ can be represented as
Kan simplicial sets, and actually as simplicial groups, in an
essentially unique way; we will keep the notation $K(\pi,n)$
for this simplicial set as well.

Namely, the set of $m$-simplices of $K(\pi,n)$ is given by the amazing formula
$$
K(\pi,n)_m := Z^n(\Delta^m;\pi).
$$
More explicitly, an $m$-simplex $\sigma$ can be regarded as a
labeling of the $n$-dimensional faces of the standard
$m$-simplex by elements of the group $\pi$; moreover, the
labels must add up to $0$ on every $(n+1)$-face. There are
${m+1\choose n+1}$ nondegenerate $n$-faces of $\Delta^m$, and
so an $m$-simplex $\sigma\in K(\pi,n)_m$ is determined by an
ordered ${m+1\choose n+1}$-tuple of elements of~$\pi$.

It is not hard to define the face and degeneracy operators
for $K(\pi,n)$, but we omit this since we won't use them
explicitly (see, e.g., \cite{PetMay67,SergerGenova}). It
suffices to say that the \emph{degenerate} $\sigma$ are
precisely those labelings with two facets of $\Delta^m$
labeled identically and zero everywhere else.

In particular, for every $m\ge 0$, we have an $m$-simplex in
$K(\pi,n)$ formed by the zero $n$-cochain, which is
nondegenerate for $m=0$ and degenerate for $m>0$, and which
we write simply as $0$ (with the dimension understood from
context). It is remarkable that the zero $n$-cochain on
$\Delta^0$ is the only vertex of the simplicial set
$K(\pi,n)$ for $n>0$.

We won't prove that this is indeed a simplicial model of $K(\pi,n)$.
Let us just note that $K(\pi,n)$ is $(n-1)$-reduced, and its $n$-simplices
correspond to elements of $\pi$ (since an $n$-cocycle on $\Delta^n$
is a labeling of the single nondegenerate $n$-simplex of $\Delta^n$
by an element of $\pi$). Thus, each $n$-simplex of $K(\pi,n)$ ``embodies''
one of the possible ways of mapping the interior of $\Delta^n$
into $K(\pi,n)$, given that the boundary goes to the basepoint.
The $(n+1)$-simplices then ``serve'' to get the appropriate
addition relations among the just mentioned maps, so that this addition
works as that in $\pi$, and the higher-dimensional simplices
kill all the higher homotopy groups.

The (elementwise) addition of cochains makes $K(\pi,n)$ into a simplicial
group, and consequently, $K(\pi,n)$ is a Kan simplicial set.

\heading{The simplicial sets $E(\pi,n)$. } The $m$-simplices
in the simplicial Eilenberg--MacLane spaces as above are all
\emph{$n$-cocycles} on $\Delta^m$. If we take all
\emph{$n$-cochains}, we obtain another simplicial set called
$E(\pi,n)$. Thus, explicitly,
$$
E(\pi,n)_m := C^n(\Delta^m;\pi).
$$
As a topological space, $E(\pi,n)$ is contractible,
and thus not particularly interesting
topologically in itself, but it makes a useful companion to
$K(\pi,n)$. Obviously, $K(\pi,n)\subseteq E(\pi,n)$, but there are
also other, less obvious relationships.

Since an $m$-simplex $\sigma\in E(\pi,n)$ is formally an $n$-cochain, we
can take its coboundary $\delta\sigma$. This is an $(n+1)$-coboundary
(and thus also cocycle), which we can interpret as an $m$-simplex
of $K(\pi,n+1)$. It turns out that this induces a \emph{simplicial}
map $E(\pi,n)\to K(\pi,n+1)$, which is (with the usual abuse of notation)
also denoted by~$\delta$. This map is actually surjective, since
the relevant cohomology groups of $\Delta^m$ are all zero and thus all cocycles
are also coboundaries.


\else

We will use $K(\pi,n)$ represented by a Kan simplicial set.
The set of $m$-simplices is given by the amazing formula
$K(\pi,n)_m := Z^n(\Delta^m;\pi)$, where $\Delta^m$ is the standard
$m$-simplex. Thus, an $m$-simplex $\sigma$ can be regarded
as a labeling of the $n$-dimensional faces of $\Delta^m$ by elements
of $\pi$; moreover, the labels (with appropriate signs) must add up to $0$
on the boundary of every $(n+1)$-face. We also need a related simplicial
set $E(\pi,n)$ with $E(\pi,n)_m:=C^n(\Delta^m;\pi)$.

For every simplicial set $X$, $\SM(X,K(\pi,n))$ is in a bijective
correspondence with $Z^n(X;\pi)$, and $\SM(X,E(\pi,n))\cong
C^n(X;\pi)$. Two maps $s_1,s_2\in \SM(X,K(\pi,n))$
represented by $c_1,c_2\in Z^n(X;\pi)$
are homotopic iff $c_1-c_2\in B^n(X;\pi)$. All of this
can be found in \cite{PetMay67}.

\fi  

\iffull

\heading{Simplicial maps into $K(\pi,n)$ and $E(\pi,n)$.}
We have the following ``simplicial'' counterpart of Lemma~\ref{l:Kpi-cohomo}:

\begin{lemma}\label{l:sKpi-cohomo} For every simplicial complex (or simplicial
set) $X$,
we have
$$\SM(X,K(\pi,n))\cong Z^n(X;\pi)\mbox{ and }
\SM(X,E(\pi,n))\cong C^n(X;\pi).$$
\end{lemma}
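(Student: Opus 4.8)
The plan is to establish the two bijections $\SM(X,K(\pi,n))\cong Z^n(X;\pi)$ and $\SM(X,E(\pi,n))\cong C^n(X;\pi)$ directly from the defining formulas $K(\pi,n)_m=Z^n(\Delta^m;\pi)$ and $E(\pi,n)_m=C^n(\Delta^m;\pi)$, using the Yoneda-type observation that simplices of a simplicial set correspond to simplicial maps out of standard simplices. The key structural fact is that for any simplicial set $W$, an $m$-simplex $w\in W_m$ is the same thing as a simplicial map $\Delta^m\to W$ (the map sending the top nondegenerate simplex of $\Delta^m$ to $w$), and these identifications are natural in the sense that precomposing with the face/degeneracy maps $\Delta^{m-1}\to\Delta^m$ and $\Delta^{m+1}\to\Delta^m$ recovers $\partial_i w$ and $s_i w$.

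First I would treat $E(\pi,n)$, which is cleaner because there is no cocycle condition. Given a simplicial map $f\:X\to E(\pi,n)$, I define a cochain $c_f\in C^n(X;\pi)$ as follows: for each nondegenerate $n$-simplex $\sigma\in X_n$, the value $f(\sigma)\in E(\pi,n)_n=C^n(\Delta^n;\pi)$ is an $n$-cochain on $\Delta^n$, i.e.\ (since $\Delta^n$ has a single nondegenerate $n$-face) an element of $\pi$; set $c_f(\sigma)$ to be that element. One checks that $c_f$ is a normalized cochain: if $\sigma$ is degenerate, say $\sigma=s_i\sigma'$, then $f(\sigma)=s_i f(\sigma')$, and the degeneracy operator in $E(\pi,n)$ sends any cochain to one vanishing on all nondegenerate $n$-faces (this is where one uses the explicit degeneracy formula, or just that $s_i\:E(\pi,n)_{n-1}\to E(\pi,n)_n$ pulls back along the collapse $\Delta^n\to\Delta^{n-1}$, which kills the top $n$-face). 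Conversely, given $c\in C^n(X;\pi)$, I build a simplicial map $f_c$ by sending an $m$-simplex $\sigma\in X_m$ to the cochain on $\Delta^m$ whose value on an $n$-face $[j_0<\dots<j_n]$ of $\Delta^m$ is $c$ evaluated on the corresponding $n$-face of $\sigma$ (obtained by applying the appropriate composite of face operators to $\sigma$). Naturality of this construction under face and degeneracy operators — i.e.\ that $f_c$ genuinely commutes with all $\partial_i$ and $s_i$ — is the routine bookkeeping step, and one verifies that $c\mapsto f_c$ and $f\mapsto c_f$ are mutually inverse.

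For $K(\pi,n)$ the argument is identical except that one must check the image lands in cocycles and not merely cochains. In one direction, if $f\:X\to K(\pi,n)$ then for each $\sigma\in X_m$ the cochain $f(\sigma)\in Z^n(\Delta^m;\pi)$ already satisfies the cocycle condition on $\Delta^m$; restricting $\sigma$ along the inclusions of $(n+1)$-faces and using that $f$ commutes with face operators shows $\delta c_f$ vanishes on every $(n+1)$-simplex of $X$, so $c_f\in Z^n(X;\pi)$. In the other direction, if $c\in Z^n(X;\pi)$ then the cochain $f_c(\sigma)$ on $\Delta^m$ is a cocycle precisely because its value on any $(n+1)$-face of $\Delta^m$ equals $(\delta c)$ evaluated on the corresponding $(n+1)$-face of $\sigma$, which is $0$; hence $f_c(\sigma)\in K(\pi,n)_m$ and $f_c$ is a well-defined simplicial map $X\to K(\pi,n)$.

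I expect the main obstacle to be purely notational rather than conceptual: spelling out carefully how an $n$-face of a simplex $\sigma\in X_m$ is extracted as an iterated face $\partial_{i_1}\cdots\partial_{i_{m-n}}\sigma$, matching this with the corresponding inclusion of a standard $n$-simplex into $\Delta^m$, and verifying that all the simplicial identities make the two assignments natural and mutually inverse. Since the paper explicitly says it will not use the simplicial-set axioms in detail, I would present the correspondence at the level of ``an $m$-simplex of $K(\pi,n)$ is a rule assigning to each $n$-face a label with the labels summing to zero on each $(n+1)$-face, and a simplicial map $X\to K(\pi,n)$ amounts to compatible such rules on all simplices of $X$, which is exactly a normalized $n$-cocycle on $X$,'' citing \cite{PetMay67} for the full verification. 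The homotopy statement (two maps into $K(\pi,n)$ are simplicially homotopic iff the cocycles differ by a coboundary) is not part of this lemma, so I would not address it here.
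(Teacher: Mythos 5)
Your proposal takes essentially the same approach as the paper: the paper also describes the bijection via pullback along the characteristic maps $i_\sigma\colon\Delta^m\to X$ (equivalently, your Yoneda observation) and defers the full verification to \cite[Lemma~24.3]{PetMay67}, which you likewise cite for the routine bookkeeping.
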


We refer to \cite[Lemma~24.3]{PetMay67}
for a proof; here we just describe how the isomorphism\footnote{Both sets carry an Abelian group structure, and the bijection between them preserves these.
For the set $Z^n(X;\pi)$ of cocycles, the group structure is given by the usual addition of cocycles. For the set $\SM(X,K(\pi,n))$ of simplicial maps, the group structure is given by the fact that $K(\pi,n)$ is a \emph{simplicial Abelian group}, so simplicial maps into it can be added componentwise (simplexwise).} works,
i.e., how one passes between cochains and simplicial maps.
This is not hard to guess from the formal definition---there
is just one way to make things match formally.

Namely, given a $c^n\in C^n(X;\pi)$, we want to
construct the corresponding simplicial map $s=s(c^n)\:X\to E(\pi,n)$.
We consider an
$m$-simplex $\sigma\in X_m$. There is exactly one way
of inserting the standard $m$-simplex $\Delta^m$
to the ``place of $\sigma$'' into $X$; more formally,
there is a unique simplicial map $i_\sigma\:\Delta^m\to X$
that sends the $m$-simplex of $\Delta^m$ to $\sigma$
(indeed, a simplicial map has to respect the ordering of
vertices, implicit in the face and degeneracy operators).
Thus, for every such $\sigma$, the cochain $c^n$ defines
a cochain $i_\sigma^*(c^n)$ on $\Delta^m$ (the labels
of the $n$-faces of $\sigma$ are pulled back to $\Delta^m$),
and that cochain is taken as the image $s(\sigma)$.

For the reverse direction, i.e., from a simplicial map $s$
to a cochain, it suffices to look at the images of
the $n$-simplices under $s$: these are $n$-simplices
of $E(\pi,n)$ which, as we have seen, can be regarded
as elements of $\pi$---thus, they define the values
of the desired $n$-cochain.

\heading{Simplicial homotopy in $\SM(X,K(\pi,n))$. } Now that
we have a description of simplicial maps $X\to K(\pi,n)$, we
will also describe homotopies (or equivalently, simplicial
homotopies) among them. It turns out that the additive
structure (cocycle addition) on $\SM(X,K(\pi,n))\cong
Z^n(X;\pi)$ reduces the question of whether two maps
represented by cocycles $c_1$ and $c_2$ are homotopic to the
question whether their difference $c_1-c_2$ is
\emph{nullhomotopic} (homotopic to a constant map).

\begin{lemma}\label{l:shomotopy0}
Let $c_1^n,c_2^n\in Z^n(X;\pi)$ be two cocycles.
Then the simplicial maps $s_1,s_2\in\SM(X,K(\pi;n))$
represented by $c_1^n,c_2^n$, respectively, are simplicially
homotopic iff $c_1$ and $c_2$ are \emph{cohomologous},
i.e., $c_1-c_2\in B^n(X;\pi)$.
\end{lemma}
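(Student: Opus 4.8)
The strategy is to reduce everything to the special case of the constant map, using the simplicial group structure on $K(\pi,n)$, and then to analyze that special case directly in terms of a cochain-level homotopy. First I would set up notation: let $s_1,s_2$ be the simplicial maps corresponding to $c_1^n,c_2^n$ under Lemma~\ref{l:sKpi-cohomo}. Since $K(\pi,n)$ is a simplicial Abelian group, the difference $s_1-s_2$ (taken simplexwise) is again a simplicial map into $K(\pi,n)$, and by the additivity of the correspondence in Lemma~\ref{l:sKpi-cohomo} it corresponds to the cocycle $c_1^n-c_2^n$. The translation $y\mapsto y+s_2(\cdot)$ is a simplicial automorphism of the function complex sending the pair $(s_1-s_2,\,0)$ to $(s_1,s_2)$; homotopies are preserved under this operation, so $s_1\sim s_2$ if and only if $s_1-s_2\sim 0$ (the constant map at the basepoint, i.e.\ the zero cochain). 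Hence it suffices to prove: a simplicial map $s\colon X\to K(\pi,n)$ represented by a cocycle $c^n$ is nullhomotopic iff $c^n\in B^n(X;\pi)$.

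The core of the argument is then a concrete description of a simplicial homotopy $F\colon X\times\Delta^1\to K(\pi,n)$ between $s$ and the constant map, in terms of cochains. Via Lemma~\ref{l:sKpi-cohomo} applied to $X\times\Delta^1$, such an $F$ corresponds to a cocycle in $Z^n(X\times\Delta^1;\pi)$ whose restrictions to $X\times\{0\}$ and $X\times\{1\}$ are $c^n$ and $0$ respectively. So I would invoke (or re-derive) the cochain description of the prism $X\times\Delta^1$: an $n$-cochain on $X\times\Delta^1$ that vanishes on $X\times\{1\}$ is determined by its restriction $c^n$ to $X\times\{0\}$ together with a ``prism part,'' and the classical prism decomposition formula for $\Delta^{n}\times\Delta^1$ gives, for the coboundary, the identity of the shape $\delta(\text{prism cochain}) = c^n|_{X\times\{0\}} - c^n|_{X\times\{1\}} \pm (\text{prism of }\delta c^n)$. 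Since $\delta c^n=0$, a cocycle extension over the prism restricting correctly at both ends exists precisely when one can choose the prism part to be an honest cochain making the whole thing a cocycle, and unwinding this says exactly that $c^n$ must equal $\delta b^{n-1}$ for the $(n-1)$-cochain $b^{n-1}$ obtained by ``integrating $F$ along the $\Delta^1$ direction.'' Conversely, given $c^n=\delta b^{n-1}$, one reads this construction backwards to build $F$ explicitly. I would present this as: the homotopy corresponds to a pair $(b^{n-1},c^n)$ with the chain-homotopy relation $\delta b^{n-1}=c^n$ (the $X\times\{0\}$-end) and $0$ (the $X\times\{1\}$-end), which is literally the statement that $c^n$ is a coboundary.

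The step I expect to be the main obstacle is making the prism calculation clean: one has to pin down the exact simplicial structure of $X\times\Delta^1$ (the shuffle/prism triangulation, where an $n$-simplex of the product is a pair of $n$-simplices matched up via degeneracies), verify that the map sending a simplicial homotopy to its ``prism cochain'' $b^{n-1}$ is well-defined and realizes the chain homotopy operator between $\mathrm{id}$ on $C^*(X)$ (at the $0$-end) and the zero map (at the $1$-end), and check the sign conventions so that $\delta b^{n-1}=\pm(c^n_{(0)}-c^n_{(1)})$ comes out right. Rather than grinding through the combinatorics of shuffles, I would cite the standard fact that $C^*(-\times\Delta^1;\pi)$ computes the Eilenberg--Zilber / prism operator and that the two end-inclusions are chain-homotopic via exactly such an operator (see \cite{PetMay67}), and then merely observe that under the bijections of Lemma~\ref{l:sKpi-cohomo} this chain homotopy is precisely a simplicial homotopy $F$. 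This keeps the proof short while isolating the one genuinely combinatorial ingredient. Finally I would remark that combining this with the reduction of the first paragraph yields the lemma in full: $c_1^n$ and $c_2^n$ give homotopic maps iff $c_1^n-c_2^n$ is a coboundary, i.e.\ iff $c_1^n$ and $c_2^n$ are cohomologous.
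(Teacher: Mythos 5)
The paper does not actually prove this lemma; it simply points to \cite[Theorem~24.4]{PetMay67}, so there is no in-paper argument to compare your proof against. Your first reduction is correct and is exactly the standard move: $K(\pi,n)$ is a simplicial Abelian group, translation by $s_2$ is invertible and preserves simplicial homotopies, and by the additivity noted in Lemma~\ref{l:sKpi-cohomo} it carries $(s_1,s_2)$ to $(s_1-s_2,0)$, so the problem reduces to showing that $s\sim 0$ iff $c^n\in B^n(X;\pi)$. The forward direction is also fine: if $F\colon X\times\Delta^1\to K(\pi,n)$ is a simplicial nullhomotopy with corresponding cocycle $z$ and $h$ is the prism (Eilenberg--Zilber/Shih) chain homotopy between the two end inclusions, then $c^n=i_0^*z-i_1^*z=\delta(h^*z)+h^*(\delta z)=\delta(h^*z)$.

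Where the proposal comes up short is the converse, and that is precisely where the content sits. Your description of $C^n(X\times\Delta^1;\pi)$ as ``$X\times\{0\}$ part plus a single prism cochain on $X$'' is not right---the prism over a nondegenerate $(n-1)$-simplex is triangulated into $n$ shuffle simplices, so the ``prism part'' is not a cochain on $X$---and ``reading the prism-operator computation backwards'' is not a construction, since $h^*$ is a fixed operator and need not realize a prescribed $b^{n-1}$. To stay with the cylinder you would either need an explicit $z$ (for instance, the external cup product of $c^n$ with the $0$-cochain on $\Delta^1$ indicating the vertex $0$, corrected by an external cup product of $b^{n-1}$ with the generating $1$-cochain of $\Delta^1$; the Leibniz rule and $\delta b^{n-1}=c^n$ then give $\delta z=0$ and the required end-restrictions), or cite \cite{PetMay67} for this half as well, as you half-suggest. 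A cheaper and self-contained route---the one the paper itself uses when it restates the lemma in the subsection ``Factoring by maps from $\susp X$''---is to trade the cylinder for the cone: since the target is Kan, $s\sim 0$ iff $s$ extends over $CX$, and there the cochain complex genuinely does split, $C^n(CX;\pi)\cong C^{n-1}(X;\pi)\oplus C^n(X;\pi)$ with $\delta(e,c)=(-\delta e+c,\delta c)$. A cocycle on $CX$ restricting to $c^n$ on $X$ therefore exists iff $c^n=\delta e$ for some $e$, and both implications fall out at once with no shuffle combinatorics at all.
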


We refer to \cite[Theorem~24.4]{PetMay67} for a proof.
We also remark that a simplicial version of Lemma~\ref{l:Kpi-cohomo}
is actually proved using Lemmas~\ref{l:sKpi-cohomo}
and~\ref{l:shomotopy0}.
\fi 

\iffull

\subsection{Postnikov systems}\label{s:postni}

\else

\heading{Simplicial Postnikov systems. }%
A Postnikov system for a space $Y$ consists of spaces
$P_0,P_1,\ldots$ (the \indef{stages}), maps $p_i\:P_i\to
P_{i-1}$, and maps $\varphi_i\:Y\to P_i$.\footnote{These should satisfy certain technical conditions; specifically, for every $i$, $p_i\circ \varphi_i=\varphi_{i-1}$, $\pi_j(P_i)=0$ for $j>i$, and $\varphi_i$ induces isomorphisms $\pi_j(Y)\cong \pi_j(P_i)$ for $i\leq j$. However, these will not be explicitly used in this extended abstract.} The $P_i$ can be
thought of as successive stages in a process of building $Y$
(or rather, a space homotopy equivalent to $Y$) ``layer by
layer'' from the Eilenberg--MacLane spaces $K(\pi_i,i)$,
where $\pi_i:=\pi_i(Y)$.

A key fact for our use of Postnikov systems is that if $X$
is a CW-complex with $\dim X\le i$, then there is a bijection
between $[X,Y]$ and $[X,P_i]$ (induced by composition with~$\varphi_i$).
Thus, for computing $[X,Y]$ in Theorem~\ref{t:main},
it suffices to compute $[X,P_{2d-2}]$.
\fi

\iffull

Now that we have a combinatorial representation of maps from
$X$ into an Eilenberg--MacLane space, and of their
homotopies, it would be nice to have similar descriptions for
other target spaces $Y$. Expressing $Y$ through its
simplicial \emph{Postnikov system} comes as close to
fulfilling this plan as seems reasonably possible.

Postnikov systems are somewhat complicated objects, and so we will
not discuss them in detail, referring to standard textbooks
(\cite{Hatcher} in general and \cite{PetMay67} for the simplicial case)
instead. First we will explain some features of a Postnikov
system in the setting of topological spaces and continuous maps;
this part, strictly speaking, is not necessary for the algorithm.
Then we introduce a simplicial version of a Postnikov system,
and summarize the properties we will actually use.
Finally, we will present the subroutine used to
compute Postnikov systems.

\heading{Postnikov systems on the level of spaces and continuous maps.}
Let $Y$ be a CW-complex.
A \indef{Postnikov system} (also
called a \emph{Postnikov tower}) for $Y$
is a sequence of spaces $P_0,P_1,P_2,\ldots$,
where $P_0$ is a single point, together with maps
$\varphi_i\:Y\to P_i$ and $p_i\:P_i\to P_{i-1}$ such that
$p_i\circ \varphi_i=\varphi_{i-1}$, i.e., the following
diagram commutes:
%
%
%
%

\begin{center}\begin{tikzpicture}[inner sep=1mm]
 \pgftransformyscale{1.2}
 \pgftransformxscale{2.0}

%

 \node (VDOTS0) at (0,2.0) {\raisebox{.4em}{$\vdots$}};

  \node (P2) at (0,1.5) {$P_{2}$};

 \node (P1) at (0,0.5) {$P_{1}$};

 \node (P0) at (0,-0.5) {$P_0$};
\node (Y) at (-1,-0.5) {$Y$};

 \path[->]
             (P2) edge node[auto]{$p_2$} (P1)
             (P1) edge node[auto]{$p_1$} (P0)
             (Y) edge node[fill=white,inner sep=1pt]{$\varphi_2$} (P2)
             (Y) edge node[fill=white,inner sep=0pt]{$\varphi_1$} (P1)
             (Y) edge node[fill=white,inner sep=0pt]{$\varphi_0$} (P0);
\end{tikzpicture}\end{center}

Informally, the $P_i$, called the \indef{stages} of the
Postnikov system, can be thought of as successive stages in a
process of building $Y$ (or rather, a space homotopy
equivalent to $Y$) ``layer by layer'' from the Eilenberg--Mac
Lane spaces $K(\pi_i(Y),i)$.

More formally, it is required that for each $i$,
the mapping $\varphi_i$ induces an isomorphism
$\pi_j(Y)\cong \pi_j(P_i)$ of homotopy groups for every $j\le i$,
while $\pi_j(P_i)=0$ for all $j>i$.
These properties suffice
to define a Postnikov system uniquely up to homotopy equivalence,
provided that $Y$ is $0$-connected and the $P_i$ are
assumed to be CW-complexes; see, e.g.,
Hatcher \cite[Section~4.3]{Hatcher}.

For the rest of this paper, we will abbreviate $\pi_i(Y)$
to $\pi_i$.

One usually works with Postnikov systems with additional favorable properties,
sometimes called \emph{standard Postnikov systems}, and for these
to exist, more assumptions on $Y$ are needed---in particular,
they do exist if $Y$ is 1-connected. In this case, the first
two stages, $P_0$ and $P_1$, are trivial, i.e., just one-point spaces.

Standard Postnikov systems on the level of topological spaces
are defined using the notion of \emph{principal fibration},
which we do not need/want to define here. Let us just sketch
informally how $P_i$ is built from $P_{i-1}$ and $K(\pi_i,i)$.
\emph{Locally}, $P_i$ ``looks like'' the product
$P_{i-1}\times K(\pi_i,i)$, in the sense that the \emph{fiber}
$p^{-1}_i(x)$ of every point $x\in P_{i-1}$ is (homotopy equivalent to)
$K(\pi_i,i)$. However, \emph{globally} $P_i$ is usually
\emph{not} the product as above; rather, it is ``twisted''
(technically, it is the total space of  the fibration
$K(\pi_i,i)\to P_i\stackrel{p_i}{\to} P_{i-1}$).
A somewhat simple-minded analogue is the way the M\"obius band
is made by putting a segment ``over'' every point of $S^1$,
looking locally like the product $S^1\times [-1,1]$ but
globally, of course, very different from that product.

The way of ``twisting'' the $K(\pi_i,i)$ over $P_{i-1}$
to form $P_i$ is specified, for reasons that would need a somewhat
lengthy explanation, by a mapping $k_{i-1}\: P_{i-1}\to K(\pi_{i},i+1)$.
As we know, each such map $k_{i-1}$ can be represented by
a cocycle in $Z^{i+1}(P_{i-1}; \pi_{i})$, and since it really
suffices to know $k_{i-1}$ only up to homotopy, it is enough to specify it
by an element of the cohomology group $H^{i+1}(P_{i-1}; \pi_{i})$.
This element is also commonly denoted by $k_{i-1}$ and called
the  $(i-1)$st \indef{Postnikov class}\footnote{In the literature,
\emph{Postnikov factor} or \emph{Postnikov invariant}
are also used with the same meaning.} of~$Y$.

The beauty of the thing is that $P_i$, which conveys, in a
sense, complete information about the homotopy of $Y$ up to
dimension $i$, can be reconstructed from the \emph{discrete}
data given by $\pi_2,k_2,\pi_3,k_3,\ldots,k_{i-1},\pi_{i}$.

For our purposes, a key fact, already mentioned
in the outline section, is the following:

\begin{proposition}
\label{prop:XYvsXPi}
If $X$ is a CW-complex of dimension at most $i$, and $Y$
is a $1$-connected CW-complex,
then there is a bijection between $[X,Y]$
and $[X,P_i]$ (which is induced by composition with the map $\varphi_i$).
\end{proposition}


\heading{Simplicial Postnikov systems. } To use Postnikov systems algorithmically,
we represent the objects by simplicial sets and maps (this was actually
the setting in which Postnikov originally defined them). Concretely,
we will use the so-called \emph{pullback representation} (as opposed to some
other sources, where a \emph{twisted product} representation can
be found---but these representations can be converted into one another
without much difficulty).

We let $K(\pi,n)$ and $E(\pi,n)$ stand for the particular
simplicial sets as in Section~\ref{s:eilenb}. The $i$-th
stage $P_i$ of the Postnikov system for $Y$ is represented as
a simplicial subset of the product $P_{i-1}\times
E_i\subseteq E_0\times E_{1}\times\cdots\times E_i$, where
$E_j:=E(\pi_j,j)$. An $m$-simplex of $P_i$ can thus be
written as $(\sigma^0,\ldots,\sigma^{i-1},\sigma^i)$, where
$\sigma^j\in C^j(\Delta^m,\pi_j)$ is a simplex of $E_j$. It
will also be convenient to write
$(\sigma^0,\ldots,\sigma^{i-1})\in P_{i-1}$ as $\bsigma$ and
thus write a simplex of $P_i$ in the form
$(\bsigma,\sigma^i)$.
\fi 

\iffull \else In a simplicial Postnikov system, $P_i$ is a
simplicial subset of the product $P_{i-1}\times E_i\subseteq
E_0\times E_{1}\times\cdots\times E_i$, where
$E_j:=E(\pi_j,j)$. We remark that for a $(d-1)$-connected $Y$
the stages $P_0,\ldots,P_{d-1}$ are trivial, since
$\pi_0,\ldots,\pi_{d-1}$ are trivial. We will usually write
an $m$-simplex of $P_i$ as $(\bsigma,\sigma^i)$, where
$\bsigma\in P_{i-1}$ and $\sigma^i\in C^i(\Delta^m,\pi_i)$ is
a simplex of $E_i$. The projection map $p_i\: P_i\to P_{i-1}$
is given by $p_i(\bsigma,\sigma^i)=\bsigma$.

We will also need the \indef{Postnikov
classes} $\kk_{i-1}\in\SM(P_{i-1},K_{i+1})$ (with $K_{i+1}:= K(\pi_{i},i+1)$),
which can also be represented by a cocycle
in $Z^{i+1}(P_{i-1},\pi_i)$. They are used
to ``cut out'' $P_i$ from the product $P_{i-1}\times E_i$,
as follows:
$P_i:=\{(\bsigma,\sigma^i)\in P_{i-1}\times E_i: \kk_{i-1}(\bsigma)=
\delta\sigma^i\}$, where $\delta\:E_i\to K_{i+1}$ is induced by
the coboundary operator.

We also introduce the notation $L_i:=K(\pi_i,i)$, and
$\lambda_i\:L_i\to P_i$ is the insertion to the last
component, $\lambda_i(\sigma^i):= (\bzero,\sigma^i)\in P_i$.
Here $\bzero=(0,\ldots,0)$ denotes the zero $m$-simplex in
$P^{i-1}$, made of a zero cochain in every component,
where $m=\dim(\sigma^i)$.


A second key fact we need is that the stages $P_i$ of the simplicial
Postnikov system of a $1$-connected $Y$ are Kan simplicial sets
(see, e.g., \cite{Brown}).
Thus, for every simplicial set $X$, there is a bijection
between the set of simplicial maps
$X\rightarrow P_i$ modulo simplicial homotopy and the set of homotopy classes of
continuous maps between the geometric realizations. Slightly abusing notation,
we will denote both sets by $[X,P_i]$ from now on.

A simplicial Postnikov system as above is \indef{locally effective}
if the homotopy groups  $\pi_i(Y)$ are fully effective and
algorithms are available for evaluating the
simplicial maps $\varphi_i\:Y\to P_i$ and the cocycles
$\kk_{i-1}\in Z^{n+1}(P_{i-1},\pi_i)$.
\fi  

We will introduce the following convenient abbreviations for the
Eilenberg--MacLane spaces appearing in the Postnikov system
(the first of them is quite standard):
\begin{eqnarray*}
K_{i+1}&:=&K(\pi_{i},i+1),\\
L_i&:=& K(\pi_i,i).
\end{eqnarray*}

The simplicial version of (a representative of)
the Postnikov class $k_{i-1}$ is a simplicial map
$$
\kk_{i-1}\in\SM(P_{i-1},K_{i+1}).
$$
Since $K_{i+1}$ is an Eilenberg--MacLane space, we can, and
will, also represent $\kk_{i-1}$ as a cocycle in
$Z^{i+1}(P_{i-1},\pi_i)$.

In this version, instead of ``twisting'', $\kk_{i-1}$ is used
to ``cut out'' $P_i$ from the product $P_{i-1}\times E_i$,
as follows:
\begin{equation}
\label{eq:Pi}
P_i:=\{(\bsigma,\sigma^i)\in P_{i-1}\times E_i: \kk_{i-1}(\bsigma)=
\delta\sigma^i\},
\end{equation}
where $\delta\:E_i\to K_{i+1}$ is given by the coboundary operator,
as was described above after the definition of $E(\pi,n)$. The map
$p_i\:P_i\to P_{i-1}$ in this setting is simply the projection
forgetting the last coordinate, and so it need not be specified
explicitly.

We remark that this describes what the simplicial Postnikov system looks like,
but it does not say when it really is a Postnikov system for~$Y$.
We won't discuss the appropriate conditions here;
we will just accept a guarantee of the algorithm in Theorem~\ref{t:eff-postni}
below, that it computes a valid Postnikov system for $Y$, and in particular,
such that it fulfills Proposition~\ref{prop:XYvsXPi}.

We also state another important property of the stages $P_i$ of the simplicial
Postnikov system of a simply connected $Y$: they are Kan simplicial sets
(see, e.g. \cite{Brown}).
Thus, for any simplicial set $X$, there is a bijection between the set of simplicial maps
$X\rightarrow P_i$ modulo simplicial homotopy and the set of homotopy classes of
continuous maps between the geometric realizations. Slightly abusing notation,
we will denote both sets by $[X,P_i]$.

\heading{Computing Postnikov systems. } Let $Y$ be a $1$-connected locally effective simplicial set.
For our purposes, we shall say that $Y$ has a \indef{locally effective (truncated) Postnikov 
system with $n$ stages} if the following are available:

\begin{itemize}
\item The homotopy groups $\pi_i=\pi_i(Y)$, $2\leq i\leq n$
(provided with a fully effective representation).\footnote{For our algorithm,
it suffices to have the $\pi_i$ represented as abstract Abelian groups,
with no meaning attached to the elements.
However, if we ever wanted to translate the elements of $[X,P_i]$
to actual maps $X\to Y$, we would need the generators of each
$\pi_i$ represented as actual mappings, say simplicial, $S^i\to Y$.}
\item The stages $P_i$ and the Eilenberg--MacLane spaces $K_{i+1}$ and $L_i$, $i\leq n$, 
as locally effective simplicial sets.
\item The maps $\varphi_i\colon Y\to P_i$, $p_i\: P_i\to P_{i-1}$, and $\kk_{i-1}\: P_{i-1}\to K_{i+1}$, $i\le n$, 
as locally effective simplicial maps.\footnote{As explained above, the map $\kk_{i-1}$ is represented by an $(i+1)$-dimensional 
cocycle on $P_{i-1}$; thus, we assume that we have an algorithm that, given an $(i+1)$-simplex $\bsigma\in P_{i-1}$, returns the value
$\kk_{i-1}(\bsigma)\in \pi_{i}$. Let us also remark that, by unwrapping the definition, we get that
the input $\bsigma\in P_{i-1}$ for $\kk_{i-1}$ means a labeling
of the faces of $\Delta^{i+1}$ of all dimensions up to $i-1$,
where $j$-faces are labeled by elements of $\pi_j$. Readers
familiar with obstruction theory may see some formal similarity here:
the $(i-1)$st obstruction determines extendability of a map defined
on the $i$-skeleton to the $(i+1)$-skeleton, after possibly modifying
the map on the interiors of the $i$-simplices.}

\end{itemize}
As a preprocessing step for our main algorithm, we need the following result:

\begin{theorem}[{\cite[Theorem~1.2]{polypost}}]\label{t:eff-postni} 
There is an algorithm that, given a $1$-connected simplicial set $Y$ with finitely many nondegenerate simplices (e.g., as obtained
from a finite simplicial complex) and an integer $n$, computes a locally effective Postnikov system with $n$ stages for $Y$.
\end{theorem}

\begin{Remarks}\label{rem:postni-algo}
\begin{enumerate}
\item In the case with  $\pi_2$ through $\pi_n$ all finite,
each $P_i$, $i\le n$, has finitely many simplices in the relevant dimensions,
and so a locally effective Postnikov system can be represented
simply by a lookup table. Brown
 \cite{Brown}
gave an algorithm for computing a simplicial Postnikov system in this
restricted setting.
\item The algorithm for proving Theorem~\ref{t:eff-postni} combines the basic construction
of Brown with the framework of \emph{objects with effective homology} (as explained, e.g., in \cite{SergerGenova}).
We remark that the algorithm works under the weaker assumption that $Y$ is a simplicial set with 
effective homology,  possibly with infinitely many nondegenerate simplices.\footnote{We also note that, for $Y$ with only finitely 
many  nondegenerate simplices, the maps $\varphi_i\colon Y\to P_i$ can be represented by finite lookup tables, so we do not need to require specifically that they be locally effective. 
}
\item In \cite{polypost}, it is shown that for fixed $n$, the construction of the first $n$ stages of a Postnikov system for $Y$ can actually carried out in time polynomial in the size (number of nondegenerate simplices) of $Y$. The (lengthy) analysis, and even the precise formulation of this result, involve some technical subtleties and depend on the notions \emph{locally polynomial-time simplicial sets} and \emph{objects with polynomial-time homology}, which refine the framework of locally effective simplicial sets and of objects with effective homology and were developed in \cite{pKZ1,polypost}. We refer to \cite{polypost} for a detailed treatment.
\end{enumerate}
\end{Remarks}



\heading{An example: the Steenrod square $\text{Sq}^2$. } The
Postnikov classes $\kk_i$ are not at all simple to describe
explicitly, even for very simple spaces. As an illustration,
we present an example, essentially following \cite{Steenr47},
where an explicit description is
available: this is for $Y=S^d$, $d\ge 3$, and it concerns the
first $\kk_i$ of interest, namely, $\kk_d$. It corresponds to the
\emph{Steenrod square} $\text{Sq}^2$ in cohomology, which
Steenrod \cite{Steenr47} invented for the purpose of
classifying all maps from a $(d+1)$-dimensional complex $K$
into $S^d$---a special case of the problem treated in our
paper.

For concreteness, let us take $d=3$. Then $\kk_3$ receives as the input
a labeling of the $3$-faces of $\Delta^5$ by elements of $\pi_3(S^3)$,
i.e., integers (the lower-dimensional faces are labeled with $0$s
since $\pi_j(S^3)=0$ for $j\le 2$),
and it should return an element of $\pi_4(S^3)\cong\Z_2$.
Combinatorially, we can thus think of the input as a function
$c\:{\{0,1,\ldots,5\}\choose 4}\to \Z$, and the value of $\kk_3$
turns out to be
$$
\sum_{\sigma,\tau} c(\sigma)c(\tau)~({\rm mod}~2),
$$
where the sum is over three pairs of 4-tuples $\sigma,\tau$
as indicated in the following picture ($\sigma$ consists of the
circled points and $\tau$ of the points marked by squares---there
is always a two-point overlap):
\immfig{steenrodsq}
This illustrates the nonlinearity of the Postnikov classes.

%
%
%

\section{Defining and implementing the group operation on ${[X,P_i]}$}
\label{s:implementing-grp-op}

We recall that the device that allows us to handle the
generally infinite set $[X,Y]$ of homotopy classes of maps,
under the dimension/connectedness assumption of Theorem~\ref{t:main},
is an Abelian group structure. We will actually use the group structure
on the sets $[X,P_i]$, $d\le i\le 2d-2$.
These will be computed inductively, starting
with $i=d$ (this is the first nontrivial one).

Such a group structure
with good properties
exists, and is determined uniquely,
 because $P_i$ may have nonzero homotopy groups
only in dimensions $d$ through $2d-2$; these are standard topological
considerations, which we will review in Section~\ref{s:hgroup} below.

However, we will need to work with the underlying binary operation
$\+_{i*}$ on the level of representatives, i.e., simplicial
maps in $\SM(X,P_i)$. This operation
lacks some of the pleasant properties of a group---e.g.,
it may fail to be associative.
Here considerable care and attention
to detail seem to be needed, and for an algorithmic implementation,
we also need to use the \emph{Eilenberg--Zilber reduction},
a tool related to the methods of effective homology.

\iffull

\subsection{An $H$-group structure on a space}\label{s:hgroup}

\heading{$H$-groups.}
Let $P$ be a CW-complex.
We will consider a \emph{binary
operation} on $P$ as a \emph{continuous map} $\mu\:P\times P\to P$.
For now, we will stick to writing $\mu(p,q)$ for the result of
applying $\mu$ to $p$ and $q$; later on, we will call
the operation $\+$ (with a subscript, actually)
and write it in the more usual
way as $p\+ q$.

The idea of $H$-groups is that the binary operation $\mu$
satisfies the usual group axioms but only \emph{up to homotopy}.
To formulate the existence of an inverse in this setting, we will
also need an explicit mapping $\nu\:P\to P$, continuous of course,
representing \emph{inverse up to homotopy}.

We thus say that
\begin{enumerate}
\item[(HA)] $\mu$ is \indef{homotopy associative} if
the two maps $P\times P\times P\to P$ given
by $(p,q,r)\mapsto \mu(p,\mu(q,r))$ and
by $(p,q,r)\mapsto \mu(\mu(p,q),r)$ are homotopic;
\item[(HN)] a distinguished element $\oo\in P$ (basepoint, assumed to
be a vertex in the simplicial set representation)
 is a \emph{homotopy neutral element}
if 
the maps $P\to P$ given by
$p\mapsto \mu(\oo,p)$ and $p\mapsto \mu(p,\oo)$ are
both homotopic to the identity $\id_P$;
\item[(HI)] $\nu$ is a \indef{homotopy inverse}
if the maps $p\mapsto \mu(\nu(p),p)$ and $p\mapsto \mu(p,\nu(p))$
are both  homotopic to the constant map
$p\mapsto\oo$;
\item[(HC)] $\mu$ is \indef{homotopy commutative} if $\mu$
is homotopic to $\mu'$ given by $\mu'(p,q):=\mu(q,p)$.
\end{enumerate}

An \indef{Abelian $H$-group} thus consists of $P$, $\oo$,
$\mu$, $\nu$ as above satisfying (HA), (HN), (HI), and (HC).

Of course, every Abelian topological group is also an Abelian
$H$-group. A basic example of an $H$-group that is typically
not a group is the \emph{loop space} $\Omega Y$ of a topological
space $Y$ (see, e.g. \cite[Section 4.3]{Hatcher}).
For readers familiar with the definition of the
fundamental group $\pi_1(Y)$, it suffices to say that $\Omega Y$
is like the fundamental group but \emph{without} factoring
the loops according to homotopy.

\begin{sloppypar}
We also define an \indef{$H$-homomorphism} of an $H$-group
$(P_1,\oo_1,\mu_1,\nu_1)$ into an $H$-group $(P_2,\oo_2,\mu_2,\nu_2)$
in a natural way, as a continuous map $h\:P_1\to P_2$
with $h(\oo_1)=\oo_2$ and such that the two maps
$(x,y)\mapsto h(\mu_1(x,y))$ and $(x,y)\mapsto \mu_2(h(x),h(y))$
are homotopic.
\end{sloppypar}

\heading{A group structure on homotopy classes of maps. }
For us, an $H$-group structure on $P$ is a device
for obtaining a group structure
on the set $[X,P]$ of homotopy classes of maps.
In a similar vein, an $H$-homomorphism $P_1\to P_2$
yields a group homomorphism $[X,P_1]\to [X,P_2]$.
Here is a more explicit statement:

\begin{fact}\label{f:PtoXP}
Let $(P,\oo,\mu,\nu)$ be an Abelian $H$-group,
 and let $X$ be a space. Let $\mu_*$, $\nu_*$ be the
operations defined on continuous maps $X\to P$ by pointwise
composition with $\mu$, $\nu$, respectively (i.e.,
$\mu_*(f,g)(x):=\mu(f(x),g(x))$, $\nu_*(f)(x):=\nu(f(x))$).
Then $\mu_*$, $\nu_*$ define an Abelian group structure on
the set of homotopy classes $[X,P]$ by $[f]+[g]:=[\mu_*(f,g)]$
and $-[f]:=[\nu_*(f)]$ (with the zero element given by the
homotopy class of the map sending all of $X$ to $\oo$).

If $h\:P_1\to P_2$ is an $H$-homomorphism of Abelian $H$-groups
$(P_1,\oo_1,\mu_1,\nu_1)$ and $(P_2,\oo_2,\mu_2,\nu_2)$,
then the corresponding map $h_*$, sending a continuous map
$f\:X\to P_1$ to $h_*(f)\:X\to P_2$ given by $h_*(f)(x):=
h(f(x))$, induces a homomorphism $[h_*]\:[X,P_1]\to [X,P_2]$
of Abelian groups.
\end{fact}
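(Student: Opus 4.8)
The plan is to reduce every verification to a single elementary observation: homotopy of maps is compatible with composition. Precisely, if $\alpha,\beta\colon A\to B$ are homotopic maps and $\gamma\colon X\to A$ is arbitrary, then $\alpha\circ\gamma\sim\beta\circ\gamma$ (compose a homotopy $A\times[0,1]\to B$ with $\gamma\times\id$), and symmetrically $\gamma\sim\gamma'$ implies $\alpha\circ\gamma\sim\alpha\circ\gamma'$. Combined with the remark that the homotopy class of $(f,g)\colon X\to P\times P$ depends only on $[f]$ and $[g]$ (combine componentwise homotopies into one homotopy $X\times[0,1]\to P\times P$), this lets us pull back each of the axioms (HA), (HN), (HI), (HC), and the $H$-homomorphism axiom, essentially verbatim.

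First I would check well-definedness. Writing $F=(f,g)\colon X\to P\times P$ we have $\mu_*(f,g)=\mu\circ F$, so if $f\sim f'$ and $g\sim g'$ then $F\sim F'$ and hence $\mu\circ F\sim\mu\circ F'$; likewise $\nu_*(f)=\nu\circ f$ respects homotopy. Thus $[f]+[g]:=[\mu\circ(f,g)]$ and $-[f]:=[\nu\circ f]$ are well defined on $[X,P]$.

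Next, the group axioms, each obtained by pulling back the corresponding homotopy on $P$. Associativity: let $a,b\colon P\times P\times P\to P$ be $a(p,q,r)=\mu(p,\mu(q,r))$ and $b(p,q,r)=\mu(\mu(p,q),r)$, homotopic by (HA); for $G=(f,g,h)\colon X\to P\times P\times P$ one computes $\mu_*(f,\mu_*(g,h))=a\circ G$ and $\mu_*(\mu_*(f,g),h)=b\circ G$, so $a\sim b$ yields $[f]+([g]+[h])=([f]+[g])+[h]$. Neutral element: the class $0$ of the constant map $c_\oo\colon X\to\{\oo\}$ works, since $\mu_*(c_\oo,f)=n\circ f$ with $n(p)=\mu(\oo,p)$ and $n\sim\id_P$ by (HN), so $0+[f]=[f]$; the right identity is symmetric. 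Inverses: $\mu_*(\nu_*f,f)=i\circ f$ with $i(p)=\mu(\nu(p),p)$, and (HI) says $i$ is homotopic to the constant map at $\oo$, whence $-[f]+[f]=0$, and analogously on the other side. Commutativity: with $\mu'(p,q)=\mu(q,p)$ one has $\mu_*(f,g)=\mu\circ(f,g)$ and $\mu_*(g,f)=\mu'\circ(f,g)$, homotopic by (HC).

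Finally, for an $H$-homomorphism $h\colon P_1\to P_2$: the induced $h_*(f)=h\circ f$ respects homotopy by the congruence principle, and $h_*(c_{\oo_1})=c_{\oo_2}$ because $h(\oo_1)=\oo_2$, so $[h_*]$ is well defined and sends $0$ to $0$. For additivity note $h_*(\mu_{1*}(f,g))=h\circ\mu_1\circ(f,g)$ while $\mu_{2*}(h_*f,h_*g)=\mu_2\circ(h\times h)\circ(f,g)$; the defining homotopy $h\circ\mu_1\sim\mu_2\circ(h\times h)$ of an $H$-homomorphism, precomposed with $(f,g)\colon X\to P_1\times P_1$, shows these agree in $[X,P_2]$, so $[h_*]([f]+[g])=[h_*][f]+[h_*][g]$; preservation of inverses is then automatic. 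I do not anticipate any real obstacle; the only thing requiring care is bookkeeping --- rewriting each pointwise operation on maps $X\to P$ as a composition through a suitable product of copies of $P$ (or of $P_1$), so that the homotopies furnished by (HA)--(HC) and the $H$-homomorphism axiom can be pulled back.
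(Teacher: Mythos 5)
Your proof is correct and is exactly the standard routine argument the paper alludes to (the paper itself states this fact without proof, remarking only that it is ``entirely routine''). Rewriting each pointwise operation as a post-composition with a map on a product of copies of $P$, then pulling back the homotopies furnished by (HA), (HN), (HI), (HC), and the $H$-homomorphism axiom, is precisely the intended verification.
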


This fact is standard, and also entirely routine to prove.
We will actually work mostly with a simplicial counterpart
(which is proved in exactly the same way, replacing topological
notions with simplicial ones everywhere). Namely,
if $X$ is a simplicial set, $P$ is a Kan simplicial set, and
$\mu,\nu$ are simplicial maps, then by a composition as above,
we obtain maps $\mu_*\:\SM(X,P)\times \SM(X,P)\to \SM(X,P)$
and $\nu_*\:\SM(X,P)\to \SM(X,P)$, which induce an Abelian
group structure on the set $[X,P]$ of simplicial homotopy classes.
Similarly, if $h\:P_1\to P_2$ is a simplicial $H$-homomorphism
(with everything else in sight simplicial), then $h_*\:\SM(X,P_1)\to
\SM(X,P_2)$ defines a homomorphism $[h_*]\:[X,P_1]\to [X,P_2]$.

Moreover, if $\mu,\nu$ are locally effective (i.e.,
given $\sigma,\tau\in P$, we can evaluate $\mu(\sigma,\tau)$ and
$\nu(\sigma)$) and $X$ has finitely many nondegenerate simplices,
then $\mu_*,\nu_*$ are locally effective as well.
Indeed, as we have remarked, simplicial maps $X\to P$ are finitely
representable objects, and we will have them represented by vectors
of cochains.

Thus, under the above conditions, we have the Abelian group
$[X,P_i]$ semi-effectively represented, where the set of representatives
is $\SM(X,P)$.
Similarly, if $h\:P_1\to P_2$ is locally effective and $X$
is has finitely many nondegenerate simplices,
then $h_*\:\SM(X,P_1)\to \SM(X,P_2)$ is locally effective,
too.


\heading{A canonical $H$-group structure from connectivity. }
In our algorithm, the existence of a suitable $H$-group structure on $P_i$
 follows from the fact that $P_i$ has nonzero homotopy
groups only in the range from $d$ to $i$, $i\le 2d-2$. \iffull \else It is most convenient to
formulate this for CW complexes that are \emph{$(d-1)$-reduced}, i.e., that have no cells
in dimensions $1,\ldots, d-1$ (e.g., the stages $P_i$ of the
simplicial Postnikov system for a $(d-1)$-connected $Y$ have this property).
\fi

\begin{lemma} \label{lem:CW-h-group}
Let $d\geq 2$ and let $P$ be a $(d-1)$-reduced
CW complex
with distinguished vertex (basepoint) $\oo$, and with nonzero
$\pi_i(P)$ possibly occurring only for $i=d,d+1,\ldots,2d-2$.
Then there are $\mu$ and $\nu$ such that $(P,\oo,\mu,\nu)$
is an Abelian $H$-group, and moreover, $\oo$ is a
\indef{strictly neutral} element, in the sense
that $\mu(\oo,p)=\mu(p,\oo)=p$ (equalities, not only homotopy).

Moreover, if $\mu'$ is any continuous binary operation on $P$
with $\oo$ as a strictly neutral element, then $\mu'\sim\mu$
by a homotopy stationary on
the subspace $P\vee P:= (P\times\{\oo\})\cup (\{\oo\}\times P)$
(and, in particular, every such $\mu'$ automatically satisfies (HA), (HC),
and (HI) with a suitable $\nu'$).
\end{lemma}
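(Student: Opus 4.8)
The plan is to construct $\mu$ by relative obstruction theory as an extension of the fold map, and to read everything else off from a single rigidity statement.

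First I would record the cell structure. Since $P$ is $(d-1)$-reduced, every cell of $P$ is either the basepoint $\oo$ or has dimension $\geq d$; hence, for $n\geq 2$, the power $P^n$ is obtained from the subcomplex $T_n:=\{(x_1,\ldots,x_n)\in P^n:\ x_j=\oo\text{ for some }j\}$ (the ``fat wedge'', with $T_2=P\vee P$) by attaching cells of dimension $\geq nd$ only: a product cell $c_1\times\cdots\times c_n$ lies in $T_n$ unless every $c_j$ is positive-dimensional, hence of dimension $\geq d$. Since $P$ is simply connected --- so that obstruction theory with coefficients $\pi_k(P)$ involves no local coefficients --- and $\pi_k(P)=0$ for all $k\geq 2d-1$, one obtains the following \emph{rigidity principle}: for every $n\geq 2$ and every subcomplex $A$ with $T_n\subseteq A\subseteq P^n$, (a) every map $A\to P$ extends over $P^n$, and (b) any two maps $P^n\to P$ that agree on $A$ are homotopic rel $A$. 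For (a) the obstructions lie in $H^{k+1}(P^n,A;\pi_k P)$, which vanishes either because there are no relative $(k{+}1)$-cells (when $k+1<nd$) or because $\pi_k P=0$ (when $k\geq nd-1\geq 2d-1$); (b) is the same computation applied to the pair $(P^n\times I,\,(P^n\times\partial I)\cup(A\times I))$, whose relative cells have dimension $\geq nd+1$.

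Next I would apply (a) with $n=2$, $A=P\vee P$, to the fold map $\nabla\colon P\vee P\to P$ (identity on each summand), obtaining $\mu\colon P\times P\to P$ with $\mu|_{P\vee P}=\nabla$; thus $\mu(\oo,p)=\mu(p,\oo)=p$, so $\oo$ is strictly neutral and (HN) is automatic. If $\mu'$ is any binary operation with $\oo$ strictly neutral, then $\mu'|_{P\vee P}=\nabla=\mu|_{P\vee P}$, so (b) with $n=2$ gives $\mu'\simeq\mu$ rel $P\vee P$; this is the ``moreover'' part. It then follows that $\mu$ is homotopy commutative, since $\mu^{\mathrm{op}}(p,q):=\mu(q,p)$ is again strictly unital, hence $\simeq\mu$; and $\mu$ is homotopy associative, since $(p,q,r)\mapsto\mu(\mu(p,q),r)$ and $(p,q,r)\mapsto\mu(p,\mu(q,r))$ agree on $T_3$ (a direct check from $\mu|_{P\vee P}=\nabla$, e.g.\ both equal $\mu(p,r)$ on $P\times\{\oo\}\times P$) and so are homotopic by (b) with $n=3$. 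The same argument applies verbatim to any strictly unital $\mu'$, giving (HA) and (HC) for it as well.

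Finally, for (HI): $\mu$ is homotopy associative and $P$ is path-connected (being $(d-1)$-reduced, $d\geq 2$), so it carries a two-sided homotopy inverse $\nu$ by the classical fact that a path-connected homotopy-associative $H$-space is grouplike (see e.g.\ \cite{Hatcher}). Concretely: the shearing map $(x,y)\mapsto(x,\mu(x,y))$ of $P\times P$ restricts on each fibre $\{x\}\times P$ of the first projection to the left translation $L_x$, which is homotopic to $L_{\oo}=\id$ by path-connectedness, so the shearing map is a fibrewise homotopy equivalence, hence a homotopy equivalence; a homotopy inverse of it yields, on specializing the second coordinate to $\oo$, a right homotopy inverse for $\mu$, which homotopy-associativity upgrades to a two-sided one, and one arranges $\nu(\oo)=\oo$ since $\oo$ is a nondegenerate basepoint. (Equivalently, replace $\mu$ by a fibration: its homotopy fibre over $\oo$ maps to $P$ with contractible fibres --- the homotopy fibres of the equivalences $L_x$ --- hence admits a section, which is precisely $(\id,\nu)$ together with a nullhomotopy of $\mu(\id,\nu)$.) The same $\nu$, adjusted by a homotopy, serves any $\mu'\simeq\mu$. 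Hence $(P,\oo,\mu,\nu)$ is an Abelian $H$-group with all the claimed properties. I expect the cell bookkeeping and the relative obstruction computations to be entirely routine; the one step that genuinely falls outside naive obstruction theory --- and hence the main thing to get right --- is the existence of the homotopy inverse (HI), which rests on the classical $H$-space argument.
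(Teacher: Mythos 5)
Your proof is correct and follows essentially the same route as the paper: extend the fold map $\varphi\colon P\vee P\to P$ over $P\times P$ by a baby obstruction-theory argument using that cells of $P\times P$ outside $P\vee P$ have dimension $\geq 2d$ while $\pi_i(P)=0$ for $i\geq 2d-1$, read off (HN) for free, derive (HC), (HA) and the uniqueness claim from the same rigidity (extension and rel-homotopy-uniqueness) principle, and fall back on the classical fact that a path-connected homotopy-associative $H$-space is grouplike for (HI). The only cosmetic differences are that for (HA) you fix the fat wedge $T_3$ rather than the wedge $P\vee P\vee P$ used in the paper (both work, since cells outside either subcomplex have dimension $\geq 2d$), and that you spell out the shearing-map argument for (HI) where the paper simply cites Whitehead.
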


This lemma is essentially well-known, and the necessary arguments
appear, e.g., in Whitehead \cite{Whitehead:HomotopyTheory-1978}. We nonetheless sketch
a proof, because we  are not aware of a specific reference for the lemma as
stated, and also because it sheds some light on how the
assumption of $(d-1)$-connectedness of $Y$ in Theorem~\ref{t:main}
is used.

The proof is based on the repeated application of the following basic fact
(which is a baby version
of obstruction theory and can be proved by induction of the dimension
of the cells on which the maps or homotopies have to be extended).

\begin{fact}
\label{lem:baby-obstruction}
Suppose that $X$ and $Y$ are CW complexes,
 $A\subseteq X$ is a subcomplex,
and assume that there is some integer $k$ such that all
cells in
$X\setminus A$ have dimension at least $k$ and that $\pi_i(Y)=0$
for all $i\geq k-1$. Then the following hold:
\begin{enumerate}
\item[\rm(i)] If $f\colon A\to Y$ is a continuous map,
 then there exists an extension $f'\colon X\to Y$
of $f$ (i.e., $f'|_A=f$).
\item[\rm(ii)] If $f \sim g\colon A\to Y$ are homotopic maps, and
if $f',g'\colon X\to Y$
are arbitrary extensions of $f$ and of $g$, respectively,
then $f'\sim g'$ (by a homotopy extending the given one on~$A$).
\end{enumerate}
\end{fact}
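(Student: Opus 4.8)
The plan is to prove (i) by the standard skeleton-by-skeleton induction, and then to obtain (ii) for free by recognizing it as an instance of (i) applied to an auxiliary pair built from $X\times I$.

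\textbf{Part (i).} For each $j$ put $X_j:=A\cup X^{(j)}$, the union of $A$ with the $j$-skeleton of $X$. Since every cell of $X\setminus A$ has dimension at least $k$, every cell of dimension $\le k-1$ already lies in $A$, so $X_{k-1}=A$; moreover $X_j\subseteq X_{j+1}$ and $\bigcup_j X_j=X$. Starting from $f\colon A=X_{k-1}\to Y$, I would extend over $X_k,X_{k+1},\dots$ one dimension at a time. Passing from $X_{j-1}$ to $X_j$ (for $j\ge k$) amounts to extending the current map over each $j$-cell $e$ of $X\setminus A$ individually: composing the attaching map $S^{j-1}\to X^{(j-1)}\subseteq X_{j-1}$ with the map already defined on $X_{j-1}$ yields a map $S^{j-1}\to Y$, and since $j-1\ge k-1$ we have $\pi_{j-1}(Y)=0$, so this map is nullhomotopic and therefore extends over the disk attached along $e$. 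Doing this for all $j$-cells and all $j\ge k$ produces compatible maps on the $X_j$, hence a map $f'\colon X\to Y$ with $f'|_A=f$; continuity is automatic because $X$ carries the weak topology with respect to the $X_j$ (equivalently, with respect to its closed cells). For $j=k=1$ the step uses $\pi_0(Y)=0$, i.e.\ path-connectedness of $Y$; this is harmless, as the spaces $Y$ to which we apply the fact are connected, indeed $(d-1)$-reduced.

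\textbf{Part (ii).} Let $H\colon A\times I\to Y$ be a homotopy with $H_0=f$, $H_1=g$, and let $f',g'\colon X\to Y$ be the given extensions. Consider the subcomplex $W:=(X\times\{0\})\cup(A\times I)\cup(X\times\{1\})$ of $X\times I$ (with the product CW structure), and define $G\colon W\to Y$ to be $f'$ on $X\times\{0\}$, $H$ on $A\times I$, and $g'$ on $X\times\{1\}$. These pieces agree on the overlaps $A\times\{0\}$ and $A\times\{1\}$ because $f'|_A=f=H_0$ and $g'|_A=g=H_1$, so $G$ is a well-defined continuous map. The cells of $X\times I$ lying outside $W$ are exactly the products of an open cell of $X\setminus A$ with the open $1$-cell of $I$, so each of them has dimension at least $k+1\ge k$. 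Hence the pair $(X\times I,\,W)$ satisfies the hypotheses of (i) with the same $k$, and applying (i) extends $G$ to $H'\colon X\times I\to Y$. By construction $H'_0=f'$, $H'_1=g'$, and $H'|_{A\times I}=H$, which is precisely the homotopy required.

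I do not expect a real obstacle here: this is the textbook inductive argument, and the only genuinely topological ingredient is the elementary fact that a map $S^{j-1}\to Y$ extends over $D^j$ iff it is nullhomotopic, which is exactly where each hypothesis $\pi_{j-1}(Y)=0$ enters. The remaining points are bookkeeping: identifying the product CW structure on $X\times I$ and which of its cells miss $W$, and checking continuity of maps assembled cell by cell (the colimit / weak-topology property of CW complexes). The one mild insight is that (ii) is not a separate argument but simply (i) applied to the pair $(X\times I,W)$.
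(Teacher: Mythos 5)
Your argument is correct and is exactly the cell-by-cell induction the paper alludes to (the Fact is stated there without a written proof, only with the remark that it ``can be proved by induction of the dimension of the cells on which the maps or homotopies have to be extended''). The reduction of (ii) to (i) via the subcomplex $(X\times\{0\})\cup(A\times I)\cup(X\times\{1\})$ of $X\times I$ is the standard way to implement that induction for homotopies and matches the intended proof.
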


\begin{proof}[Proof of Lemma~\ref{lem:CW-h-group}.]
This proof is the \emph{only} place where it is important
that we work with CW-complexes, as opposed to simplicial
sets. This is because the \emph{product}  of CW-complexes
is defined differently from the product of simplicial sets.
In the product of CW-complexes, an $i$-cell times a $j$-cell
yields an $(i+j)$-cell (and nothing else), while
in products of simplicial sets, simplices of
problematic intermediate dimensions appear.

Let $\varphi\colon P\vee P\to P$ be the \indef{folding map}
given by $\varphi(\oo,p):=p$, $\varphi(p,\oo):=p$, $p\in P$.
Thus, the strict neutrality of $\oo$ just means that $\mu$
extends $\varphi$, and we can employ Fact~\ref{lem:baby-obstruction}.

Namely, all cells in $(P\times P)
\setminus (P\vee P)$ have dimension at least $2d$, and
$\pi_i(P)=0$ for $i\geq 2d-1$. Thus, $\varphi$ can be
extended to some $\mu\:P\times P\to P$, uniquely up to
homotopy stationary on $P\vee P$.

From the homotopy uniqueness we get the homotopy commutativity (HC)
immediately (for free). Indeed, if we define $\mu'(p,q):=
\mu(q,p)$, then the homotopy uniqueness applies and yields
$\mu'\sim\mu$. The homotopy associativity (HA) is also simple.
Let $\psi_1,\psi_2\colon P^3\to P$ be given by $\psi_1(p,q,r)
:=\mu(\mu(p,q),r)$ and $\psi_2(p,q,r):=\mu(p,\mu(q,r))$.
Then $\psi_1=\psi_2$ on the subspace
$P\vee P\vee P:=(P\times \{\oo\} \times \{\oo\})\cup (\{\oo\} \times P\times \{\oo\} )\cup (\{\oo\} \times \{\oo\} \times P )$. Since all cells in $(P\times P\times P)\setminus (P\vee P\vee P)$ are of dimension at least $2d$,
Fact~\ref{lem:baby-obstruction} gives $\psi_1\sim \psi_2$.

The existence of a homotopy inverse is not that simple,
and actually, we won't need it (since we will construct
an inverse explicitly). For a proof,
we thus refer to the literature: every 0-connected CW-complex
with an operation satisfying (HA) and (HN) also satisfies (HI);
see, e.g., \cite[Theorem~X.2.2, p.~461]{Whitehead:HomotopyTheory-1978}.
\end{proof}

\subsection{A locally effective $H$-group structure on the Postnikov stages}
\label{s:eff-boxplus}

Now we are in the setting of Theorem~\ref{t:main}; in particular,
$Y$ is a $(d-1)$-connected  simplicial set.
Let $P_i$, $i\geq 0$, denote the $i$th stage
of a locally effective simplicial Postnikov system for $Y$,
as in Section~\ref{sec:prelim};
we will consider only the first $2d-2$ stages. Since $Y$ is $(d-1)$-connected,
 $P_0$ through $P_{d-1}$ are trivial (one-point), and
each $P_i$ is $(d-1)$-reduced.
We will occasionally
refer to the $P_d,P_{d+1},\ldots,P_{2d-2}$ as the \emph{stable stages} of
the Postnikov system.

By Lemma~\ref{lem:CW-h-group}, we know that the stable stages possess
a (canonical) $H$-group structure. But we need to define the underlying
operations on $P_i$ concretely as simplicial maps and, mainly, make them effective.
Since $P_i$ is  typically an infinite object, we will have
just \emph{local effectivity}, i.e., the operations can be evaluated
algorithmically on any given pair of simplices.

From now on, we will denote the ``addition'' operation on $P_i$
by $\+_i$, and use the infix notation $\bsigma\+_i\btau$.
Similarly we write $\bmi_i\bsigma$ for the ``inverse'' of
$\bsigma$.  For a more convenient notation,
we also introduce a \emph{binary} version of $\bmi_i$
by setting $\bsigma \bmi_i\btau := \bsigma \+_i (\bmi_i\btau)$.

\heading{Preliminary considerations. }
We recall that an $m$-simplex of $P_i$ is written as
$(\sigma^0,\sigma^1,\ldots,\sigma^i)$, with $\sigma^i\in C^i(\Delta^m;\pi_i(Y))$.
Thus, its components are cochains. One potential source of confusion
is that we already \emph{have} a natural addition of such cochains
defined; they can simply be added componentwise, as effectively as
one might ever wish.

However, this  \emph{cannot} be used as the desired addition $\+_i$.
The reason is that the Postnikov classes $\kk_{i-1}$ are generally
nonlinear, and thus $\kk_{i-1}$ is typically not a homomorphism
with respect to cochain addition. In particular, we recall that
$P_i$ was defined as the subset of $P_{i-1}\times E_i$
``cut out'' by $\kk_{i-1}$, i.e., via $\kk_{i-1}(\bsigma)=\delta\sigma^i$,
where $\bsigma=(\sigma^0,\ldots,\sigma^{i-1})$. Therefore, $P_i$
is usually not even closed under the cochain addition.

Our approach to define a suitable operation $\+_i$ is inductive.
Suppose that we have already defined $\+_{i-1}$ on $P_{i-1}$.
Then we will first define $\+_i$ on special elements
of $P_i$ of the form $(\bsigma,0)$, by just adding the
$\bsigma$'s according to $\+_{i-1}$ and leaving $0$ in the last component.

Another important special case of $\+_i$ is on elements of
the form $(\bsigma,\sigma^i)\+_i (\bzero,\tau^i)$.
In this case, in spite of the general warning above against
the  cochain addition, the last components \emph{are} added
as cochains: $(\bsigma,\sigma^i)\+_i (\bzero,\tau^i)=
(\bsigma,\sigma^i+\tau^i)$. The main result of this section
constructs a locally effective $\+_i$ that extends the
two special cases just discussed.

Let us remark that by definition, $\+_i$ and $\bmi_i$, as simplicial maps,
operate on simplices of every dimension $m$. However,
in the algorithm, we will be using them only up to $m\le 2d-2$, and
so in the sequel we always implicitly assume that the considered simplices
satisfy this dimensional restriction.

\heading{The main result on $\+_i,\bmi_i$. }
The following proposition summarizes everything about
$\+_i,\bmi_i$ we will need.

\else
In the proof of the next proposition (omitted in this extended abstract)
we construct such operations (which, interestingly, are non-associative
in general) by induction on $i$. A key idea, which allows
us to get the local effectivity, is to employ the Eilenberg--Zilber
reduction as presented in \cite{GonRea05}.

\fi

\begin{proposition}
\label{l:effective-h-group} Let $Y$ be a $(d-1)$-connected
simplicial set, $d\ge 2$, and let
$P_d,P_{d+1},\ldots,P_{2d-2}$ be \iffull the stable \fi
stages of a locally effective Postnikov system with $2d-2$
stages for $Y$. Then each $P_i$ has an Abelian $H$-group
structure,\iffull\ \else\footnote{An Abelian $H$-group
structure on a CW-complex $P$ with basepoint $\oo$ is given
by continuous maps $\mu\:P\times P\to P$ and $\nu\:P\to P$
(representing the binary group operation and the group
inverse, respectively) such that $\mu(p,\oo)=\mu(\oo,p)=p$,
$\mu$ is \indef{homotopy associative} (meaning that the maps
$(p,q,r)\mapsto \mu(p,\mu(q,r))$ and $(p,q,r)\mapsto
\mu(\mu(p,q),r)$ are homotopic), \indef{homotopy commutative}
($\mu$ is homotopic to $(p,q)\mapsto \mu(q,p)$), and $\nu$ is
a \indef{homotopy inverse} ($p\mapsto \mu(p,\nu(p))$ is
nullhomotopic). We have $\+_i$ in the role of $\mu$ and
$\bmi_i$ in the role of~$\nu$. } \fi
 given by
locally effective simplicial maps $\+_i\colon P_i\times P_i\to P_i$
and $\bmi_i\colon P_i\to P_i$\iffull with
the following additional properties:

\begin{enumerate}
\item[\rm(a)] $(\bsigma,\sigma^i)\+_i(\bzero,\tau^i)=(\bsigma, \sigma^i+\tau^i)
$ for all
$(\bsigma,\sigma^i)\in P_i$ and
$\tau^i\in L_i$ (we recall that $L_i=K(\pi_i,i)$\iffull\else,
and $+$ stands for the addition of cocycles representing
the simplices of~$L_i$\fi.)
\item[\rm(b)]
$\bmi_i(\bzero,\sigma^i)=(\bzero,-\sigma^i)$
 for all $\sigma^i\in L_i$.
\item[\rm(c)] The projection $p_i\colon P_i\rightarrow P_{i-1}$
is a strict homomorphism, i.e.,
$p_i(\bsigma\+_i \btau )=p_i(\bsigma)\+_{i-1}p_i(\btau)$
and $p_i(\bmi_i \bsigma)=\bmi_{i-1} p_i(\bsigma)$ for all $\bsigma,\btau \in P_i$.
\item[\rm(d)] If, moreover, $i<2d-2$, then
the Postnikov class $\kk_i\: P_i\to K_{i+2}$
\iffull
is an $H$-homomorphism (with respect to $\+_i$ on $P_i$ and
the simplicial group operation $+$, addition of cocycles, on $K_{i+2}$).
\else
is an homomorphism up to homotopy, i.e., the simplicial maps
$(\bsigma,\btau)\mapsto \kk_{i}(\bsigma\+_i\btau)$
and $(\bsigma,\btau)\mapsto \kk_i(\bsigma)+k_i(\btau)$ are homotopic.
\fi
\end{enumerate}
\else.\fi
\end{proposition}

\iffull

As was announced above, the proof of this proposition proceeds by induction
on $i$. The heart is an explicit and effective version of (d), which we
state and prove as a separate lemma.

\begin{lemma}\label{l:key}
Let $P_i$ be a $(d-1)$-connected simplicial set,
and let $\bzero,\+_i,\bmi_i$ be an Abelian $H$-group structure on $P_i$,
with $\+_i,\bmi_i$ locally effective. Let $\kk_i\:P_i\to K_{i+2}$
be a simplicial map, where $i<2d-2$. Then there is
a locally effective simplicial map $\Nonadd_i\: P_i\to E_{i+1}$ such
that, for all simplices
$\bsigma,\btau$ of equal dimension,
 $\Nonadd_i(\bsigma,\bzero)=\Nonadd_i(\bzero,\btau)=0$, and
$$
\kk_i(\bsigma\+_i\btau)=k_i(\bsigma)+k_i(\btau)+\delta \Nonadd_i(\bsigma,\btau).
$$
\end{lemma}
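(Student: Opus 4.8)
The plan is to realize $\Nonadd_i$ as an explicit cochain primitive. Let $z\in Z^{i+2}(P_i;\pi_{i+1})$ be the cocycle representing $k_i$. The two simplicial maps $P_i\times P_i\to K_{i+2}$ occurring in the claim, $(\bsigma,\btau)\mapsto k_i(\bsigma\+_i\btau)$ and $(\bsigma,\btau)\mapsto k_i(\bsigma)+k_i(\btau)$, are represented by the $(i+2)$-cocycles on $P_i\times P_i$ whose values on a simplex $(\bsigma,\btau)$ are $z(\bsigma\+_i\btau)$ and $z(\bsigma)+z(\btau)$; let $w\in Z^{i+2}(P_i\times P_i;\pi_{i+1})$ be their difference. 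Because $\bzero$ is a strict unit for $\+_i$ (so $\bsigma\+_i\bzero=\bsigma$ and $\bzero\+_i\btau=\btau$) and $z$, being normalized, vanishes on the degenerate simplices $\bzero$, the cocycle $w$ vanishes identically on every simplex lying in the wedge $W:=(P_i\times\{\bzero\})\cup(\{\bzero\}\times P_i)$. Thus it suffices to exhibit a locally effective $(i+1)$-cochain $b$ on $P_i\times P_i$ with $\delta b=w$ that vanishes on all simplices of $W$: reading $b$ as the simplicial map $\Nonadd_i\colon P_i\times P_i\to E_{i+1}$ corresponding to it under $\SM(-,E_{i+1})\cong C^{i+1}(-;\pi_{i+1})$, the equation $\delta b=w$ unwinds to $k_i(\bsigma\+_i\btau)=k_i(\bsigma)+k_i(\btau)+\delta\Nonadd_i(\bsigma,\btau)$, and the vanishing on $W$ yields $\Nonadd_i(\bsigma,\bzero)=\Nonadd_i(\bzero,\btau)=0$.

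To build $b$ I would use the \emph{Eilenberg--Zilber reduction} $(\AW,\EML,\SHI)$ between $C_*(P_i\times P_i)$ and $C_*(P_i)\otimes C_*(P_i)$, with $\AW$ the Alexander--Whitney map, $\EML$ the shuffle (Eilenberg--MacLane) map, and $\SHI$ the homotopy operator, all given by the explicit combinatorial formulas of \cite{GonRea05}; recall that $\AW\circ\EML=\id$ and $\partial\SHI+\SHI\partial=\id-\EML\circ\AW$, and that $\SHI$ raises dimension by one. Set $b:=w\circ\SHI$. Then for an $(i+2)$-chain $\xi$ one has $\delta b(\xi)=w(\SHI\partial\xi)=w(\xi)-w(\EML\AW\,\xi)-w(\partial\SHI\,\xi)$; the last term equals $(\delta w)(\SHI\,\xi)=0$, so $\delta b=w$ follows once $w\circ\EML$ is shown to vanish on $(C_*(P_i)\otimes C_*(P_i))_{i+2}$. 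This is exactly where $(d-1)$-connectedness and $i<2d-2$ enter: working with \emph{normalized} chains, $C_p(P_i)=0$ for $1\le p\le d-1$, and since $i+2\le 2d-1$ the degree-$(i+2)$ part of the tensor product is only $C_0(P_i)\otimes C_{i+2}(P_i)\oplus C_{i+2}(P_i)\otimes C_0(P_i)$; the shuffle map $\EML$ sends each of these summands to chains supported on $W$, on which $w$ vanishes. For the boundary conditions, $\SHI$ is natural and is identically zero whenever one factor is a point, so it vanishes on $C_*(P_i\times\{\bzero\})$ and on $C_*(\{\bzero\}\times P_i)$; hence $b$ restricts to the zero cochain on $P_i\times\{\bzero\}$ and on $\{\bzero\}\times P_i$, giving $\Nonadd_i(\bsigma,\bzero)=\Nonadd_i(\bzero,\btau)=0$.

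Local effectivity is then immediate: $w$ is locally effective because $k_i$ is part of the locally effective Postnikov data and $\+_i$ is locally effective by hypothesis, and $\SHI$ is locally effective since it is given by a finite explicit formula; therefore $\Nonadd_i=w\circ\SHI$ is locally effective. The main obstacle is not conceptual but a matter of careful bookkeeping: one has to verify, against the chosen explicit formulas for the Eilenberg--Zilber reduction, that $\SHI$ is natural and trivial on a point factor and that $\EML$ carries the two surviving degree-$(i+2)$ summands into $W$, and to track the signs in $\partial\SHI+\SHI\partial=\id-\EML\circ\AW$. The passage to \emph{normalized} chains is essential here: it is what makes $P_i$ being $(d-1)$-reduced collapse the degree-$(i+2)$ tensor component onto its ``wedge part''.
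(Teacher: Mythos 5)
Your proof is correct and follows essentially the same route as the paper: you define the nonadditivity cocycle, set $\Nonadd_i$ equal to its composition with the Eilenberg--Zilber homotopy operator $\SHI$, and use the $(d-1)$-reducedness of $P_i$ (via normalized chains) together with $i+2<2d$ to kill the composite with $\EML\circ\AW$ in degree $i+2$, which is precisely the paper's argument. The only cosmetic difference is that you assert $\SHI$ vanishes on chains with a basepoint factor, while the paper states the slightly weaker fact that $\SHI$ maps $C_*(P_i\vee P_i)$ into itself; both are correct and both give the required vanishing of $\Nonadd_i$ on the wedge.
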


We recall that $\delta\:E_{i+1}\to K_{i+2}$ is the simplicial map
induced by the coboundary operator, and that a simplicial map
$f\: P_i\to K_{i+2}$ is nullhomotopic iff it is of the form
$\delta\circ F$ for some $F\:P_i\to E_{i+1}$
(see Lemma~\ref{l:shomotopy0}).
Therefore, the map $\Nonadd_i$ is an ``effective witness'' for the nullhomotopy
of the map $(\bsigma,\btau)\mapsto k_i(\bsigma\+_i\btau)-k_i(\bsigma)-k_i(\btau)$,
and so it shows that $k_i$ is an $H$-homomorphism.

We postpone the proof of the lemma, and prove the proposition first.

\begin{proof}[Proof of Proposition~\ref{l:effective-h-group}. ]
As was announced above, we proceed by induction
on $i$. As an inductive hypothesis, we assume that, for some
$i<2d-2$, locally effective simplicial maps $\+_i,\bmi_i$ providing an
$H$-group structure on $P_i$ have been defined satisfying (a)--(c)
in the proposition.

This inductive hypothesis is satisfied in the base case $i=d$:
in this case we have $P_d=L_d$, and $\+_d$ and $\bmi_d$ are
the addition and additive inverse of cocycles (under which $L_d$
is even a simplicial Abelian group). Then (a),(b)
obviously hold and (c) is void.

In order to carry out the inductive step from $i$ to $i+1$, we
first apply Lemma~\ref{l:key}  for $P_i$, $\+_i$, and $\kk_i$,
which yields a locally effective simplicial
map $\Nonadd_i\: P_i\times P_i\to E_{i+1}$
with $\Nonadd_i(\bsigma,\bzero)=\Nonadd_i(\bzero,\btau)=0$ and
$\kk_i(\bsigma\+_i\btau)=\kk_i(\bsigma)+\kk_i(\btau)+\delta A_i(\bsigma,\btau)$,
for all $\bsigma,\btau$. As was remarked after the lemma,
this implies that $\kk_i$ is an $H$-homomorphism with respect to~$\+_i$.

Next, using $\Nonadd_i$, we define the operations $\+_{i+1},\bmi_{i+1}$
on $P_{i+1}$. We set
\begin{equation}\label{e:defplusi+1}
(\bsigma,\sigma^{i+1})\+_{i+1} (\btau,\tau^{i+1}):=
(\bsigma\+_{i}\btau,\omega^{i+1}),
\ \ \ \mbox{where }\omega^{i+1}:=\sigma^{i+1}+\tau^{i+1}+\Nonadd_i(\bsigma,\btau).
\end{equation}
Why is $\+_{i+1}$ simplicial? Since $\+_i$ is simplicial, it suffices
to consider the last component, and this is a composition of simplicial
maps, namely, of projections, $\Nonadd_i$, and the operation $+$
in the simplicial group~$E_{i+1}$. Clearly, $\+_{i+1}$ is also locally
effective.

We also need to check that $P_{i+1}$ is closed under this
$\+_{i+1}$. We recall that, for $\bsigma\in P_i$, the condition
for $(\bsigma,\sigma^{i+1})\in P_{i+1}$ is
$\kk_i(\bsigma)=\delta \sigma^{i+1}$.
Using this condition for $(\bsigma,\sigma^{i+1}),(\btau,\tau^{i+1})\in P_{i+1}$,
together with $\bsigma\+_i\btau\in P_i$ (inductive assumption),
and the property of $\kk_i$ above, we calculate
$\kk_{i}(\bsigma\+_{i} \btau)=\kk_i(\bsigma)+\kk_i(\btau)+\delta A_i(\bsigma,\btau)=\delta\sigma^{i+1}+\delta\btau^{i+1}+\delta A_i(\bsigma,\btau)=
\delta\omega^{i+1}$,
and thus $(\bsigma,\sigma^{i+1})\+_{i+1}(\btau,\tau^{i+1})\in P_{i+1}$
as needed.

Part (a) of the proposition for $\+_{i+1}$ follows
from (\ref{e:defplusi+1}) and the property
$\Nonadd_i(\bzero,\btau)=0=\Nonadd_i(\bsigma,\bzero)$. In particular,
$(\bzero,0)$ is a strictly neutral element for $\+_{i+1}$.

Moreover, as a continuous map, $\+_{i+1}$ fulfills the assumptions
on $\mu'$ in Lemma~\ref{lem:CW-h-group}, and thus it satisfies
the axioms of an Abelian $H$-group operation.

Next, we define the inverse operation $\bmi_{i+1}$ by
\[
\bmi_{i+1}(\bsigma,\sigma^{i+1}):=
(\bmi_{i}\bsigma, -\sigma^{i+1}-\Nonadd_{i}(\bsigma,\bmi_{i}\bsigma)).
\]
It is simplicial for the same reason as that for $\+_{i+1}$, and
 by a computation similar to the one for $\+_{i+1}$ above,
we verify that $P_{i+1}$ is closed under $\bmi_{i+1}$.

To verify that this $\bmi_{i+1}$ indeed defines a homotopy
inverse to $\+_{i+1}$, we check that it actually is a \emph{strict}
inverse. Inductively, we assume $\bsigma\bmi_i\bsigma=\bzero$
for all $\bsigma\in P_i$, and from the formulas defining
$\+_{i+1}$ and $\bmi_{i+1}$, we check that
$(\bsigma,\sigma^{i+1})\bmi_{i+1}(\bsigma,\sigma^{i+1})=(\bzero,0)$.
Another simple calculation yields (b) for $\bmi_{i+1}$.

Part (c) for $\+_{i+1}$ and $\bmi_{i+1}$
follows from the definitions and from $\Nonadd_i(\bzero,\bzero)=0$.
This finishes the induction step and proves the proposition.
\end{proof}

\begin{proof}[Proof of Lemma~\ref{l:key}. ]
Here we will use (``locally'') some terminology concerning
chain complexes (e.g., chain homotopy,
homomorphism of chain complexes), for which we refer to the literature
(standard textbooks, say \cite{Hatcher}).

First we define the \indef{nonadditivity map}
$\nonadd_i\colon P_i\times P_i \rightarrow K_{i+2}$ by
$$\nonadd_i(\bsigma,\btau):=
\kk_i (\bsigma\+_i \btau) - \kk_i(\bsigma) -\kk_i(\btau).
$$
(Thus, the map $\nonadd_i$ measures the failure of $\kk_i$
to be strictly additive with respect to $\+_i$.)
We want to show that $\nonadd_i=\delta \Nonadd_i$ for a
locally effective $\Nonadd_i$.

Let us remark that the \emph{existence} of $\Nonadd_i$ can be
proved by an argument similar to the one in Lemma~\ref{lem:CW-h-group}.
That argument works for CW-complexes, and as was remarked
in the proof of that lemma, it is essential
that the product of an $i$-cell and a $j$-cell is an $(i+j)$-cell
and \emph{nothing else}. For simplicial sets the product
is defined differently, and if we consider $P_i\times P_i$
as a simplicial set, we do get simplices of ``unpleasant''
intermediate dimensions there.

We will get around this using the
\emph{Eilenberg--Zilber reduction} (which is also one of the basic tools
in effective homology---but we won't need effective homology directly);
here, we follow the exposition in~\cite{GonRea05} (see also \cite[Sections~7.8 and 8.2]{SergerGenova}).
Loosely speaking, it will allow us to convert the setting of the simplicial set
$P_i\times P_i$ to a setting of a tensor product of chain complexes,
where only terms of the ``right'' dimensions appear.

We note that $\Nonadd_i$ is defined on an infinite
object, so we cannot compute it globally---we need a local
algorithm for evaluating it, yet its answers have to be
globally consistent over the whole computation.

First we present the Eilenberg--Zilber reduction for an
arbitrary simplicial set $P$ with basepoint (and single
vertex) $\oo$. The reduction consists of three locally
effective maps\footnote{The acronyms stand for the
mathematicians \emph{Alexander} and \emph{Whitney},
\emph{Eilenberg} and \emph{Mac Lane}, and \emph{Shih},
respectively.} $\AW$, $\EML$ and $\SHI$ that fit into the
following diagram:
\begin{center}
\begin{tikzpicture}[inner sep=1mm]
\matrix (m) [row sep=4em, column sep=4em, text height=1.5ex,
text depth=0.25ex]{
 \node(Tens) {$C_*(P)\otimes C_*(P)$};&\node(Prod) {$C_*(P\times P)$};&
    \node(Bend)[label=0:{}]{$\SHI$};\\
};
 \path[->,bend left=10]
  (Tens.north east) edge node[auto] {$\EML$} (Prod.north west)
  (Prod.south west) edge node[auto] {$\AW$} (Tens.south east);
 \path[-, bend left=85] (Bend.west) edge (Prod);
 \path[->,bend right=85] (Bend.west) edge (Prod);
\end{tikzpicture}
\end{center}

Here $C_*(\cdot)$ denotes the (normalized) chain complex of a simplicial set,
with integer coefficients (so we omit the coefficient group
in the notation).
For brevity, chains of all dimensions are collected into a single structure
(whence the star subscript),
and $\otimes$ is the tensor product. Thus,
$(C_\ast(P)\otimes C_\ast(P))_n=\bigoplus_{i+j=n}C_i(P)\otimes C_j(P)$.
The operators $\AW$ and $\EML$ are homomorphisms
of chain complexes,
while $\SHI$ is a \emph{chain homotopy} operator raising
the degree by $+1$.
Thus, for each $n$, we have
$\AW_n\: C_n(P\times P)\to (C_\ast(P)\otimes C_\ast(P))_n$,
 $\EML_n\:(C_\ast(P)\otimes C_\ast(P))_n\to
C_n(P\times P)$, and
$\SHI_n\colon C_n(P\times P)\rightarrow C_{n+1}(P\times P)$.

We refer to \cite[pp.~1212--1213]{GonRea05} for explicit formulas for $\AW$ and $\EML$
in terms of the face and degeneracy operators.
We give only the formula for $\SHI$,
since $\Nonadd_i$ will be defined using $\SHI_{i+1}$,
and we summarize the properties of $\AW,\EML,\SHI$ relevant for our purposes.

The operator $\SHI_n$ operates on $n$-chains
on $P\times P$.
The formula given below specifies its values on the ``basic'' chains
of the form $(\sigma^n,\tau^n)$; here $\sigma^n,\tau^n$ are $n$-simplices
of $P$, but $(\sigma^n,\tau^n)$ is interpreted as the chain with coefficient $1$
on $(\sigma^n,\tau^n)$ and $0$ elsewhere. The definition then
extends to arbitrary chains by linearity.

Let $p$ and $q$ be non-negative integers. A \indef{$(p,q)$-shuffle}
 $(\alpha,\beta)$ is a partition
$$\{\alpha_1<\cdots<\alpha_p\}\cup\{\beta_1<\cdots<\beta_q\}$$
of the set $\{0,1,\ldots,p+q-1\}$.  Put
$$\sig(\alpha,\beta)=\sum_{i=1}^p (\alpha_i-i+1).$$
Let $\gamma=\{\gamma_i,\ldots,\gamma_r\}$ be a set of
integers. Then $s_\gamma$ denotes the compositions of the
degeneracy operators $s_{\gamma_1}\ldots s_{\gamma_r}$ (the $s_m$ are the
degeneracy operators of $P$, and $\partial_m$ are its face operators).
The operator $\SHI$ is defined by
$$\SHI(\sigma^0,\tau^0)=0,$$
$$\SHI(\sigma^m,\tau^m)=\sum_{T(m)} (-1)^{\epsilon(\alpha,\beta)}
(s_{\bar\beta+\bar m} \partial_{m-q+1}\cdots \partial_m \sigma^m,
s_{\alpha+\bar m} \partial_{\bar m}\cdots \partial_{m-q-1}\tau^m),$$
where $T(m)$ is the set of all $(p+1,q)$-shuffles such that
$0\le p+q\le m-1$,
\begin{multline*}
\bar m=m-p-q, \quad \epsilon(\alpha,\beta)=\bar m -1 + \sig(\alpha,\beta),\\
\alpha+\bar m=\{\alpha_1+\bar m,\ldots,\alpha_{p+1}+\bar m\}, \quad
\bar\beta+\bar m=\{\bar m-1,\beta_1 + \bar
m,\ldots,\beta_q+\bar m\}.
\end{multline*}
The above formula shows that $\SHI_n$ is locally effective, in the sense
that, if a chain $c_n\in C_n(P\times P)$ is given in a locally effective way,
by an algorithm that can evaluate the coefficient for each given $n$-simplex of $P\times P$,
then a similar algorithm is available for the $(n+1)$-chain $\SHI_n(c_n)$ as well.

The first fact we will need is that for every $n$, the maps satisfy the following identity (where $\partial$ denotes the boundary operator
in $C_\ast(P\times P)$):
\begin{eqnarray}
 \id_{C_n(P\times P)} - \EML_n\circ \AW_n&=&\SHI_{n-1}\circ \partial + \partial \circ \SHI_n.\label{ChainHom}
\end{eqnarray}
This identity says that $\SHI_n$ is a chain homotopy
between $\EML_n\circ \AW_n$ and the identity on $C_n(P\times P)$.

The second fact, which follows directly from the formulas in \cite{GonRea05}, is
that the operators $\EML$ and $\SHI$ behave well with respect to the basepoint $\oo$
and its degeneracies, in the following sense: For every $n$ and for every (nondegenerate) $n$-dimensional simplex $\tau^n$ of $P$ (regarded as a chain),
\begin{equation}
\label{EMLFact}
\EML_n(\oo\otimes \tau^n)=\pm (\oo^n,\tau^n), \qquad \EML_n(\tau^n\otimes \oo)=\pm(\tau^n,\oo^n),
\end{equation}
where $\oo^n$ is the (unique) $n$-dimensional degenerate simplex obtained from $\oo$.
The images in (\ref{EMLFact}) lie in the subgroup
$C_n(P\vee P) \subseteq C_n(P\times P)$.
Moreover, the operator $\SHI_n$ maps $C_n(P\vee P)$ into $C_{n+1}(P\vee P)$,
i.e., the chains $\SHI(\oo^n,\tau^n)$ and $\SHI(\tau^n,\oo^n)$ are linear
combinations of simplices of the form $(\oo^{n+1},\sigma^{n+1})$
and $(\sigma^{n+1},\oo^{n+1})$, respectively,
where $\sigma^{n+1}$ ranges over certain $(n+1)$-dimensional simplices of $P$.

We now apply this to $P=P_i$
(with basepoint $\bzero$).
We consider the nonadditivity map $\nonadd_i$ as
an $(i+2)$-cocycle on $P_i\times P_i$, which can be
regarded as a homomorphism
$\nonadd_i \colon C_{i+2}(P_i\times P_i)\to \pi_{i+1}$.
If we compose this homomorphism $\nonadd_i$ on the left
with both sides of the identity (\ref{ChainHom}),
for $n=i+2$, we get
\begin{equation}\label{e:shishi}
 \nonadd_i \circ \id_{C_{i+2}(P\times P)}-\nonadd_i\circ \EML_{i+2}\circ \AW_{i+2} =
   \nonadd_i\circ \SHI_{i+1}\circ\partial + \nonadd_i\circ \partial \circ \SHI_{i+2}.
\end{equation}
Now $\nonadd_i\circ \partial=0$ since
 $\nonadd_i$ is a cocycle. Moreover, every basis element of $C_*(P_{i}) \otimes
C_*(P_{i})$ in degree $i+2<2d$ is of the form
$\bzero \otimes \tau^{i+2}$ or $\tau^{i+2}\otimes \bzero$
(since $P_i$ has no nondegenerate simplices in dimensions $1,\ldots,d-1$).
Such elements are taken by $\EML$ into $C_{i+1}(P\vee P)$, on which
$\nonadd_i$ vanishes because $\bzero$ is a strictly neutral element for $\+_i$.
Thus, $\nonadd_i\circ \EML_{i+2}=0$  for $i+2<2d$.

Therefore, (\ref{e:shishi}) simplifies to
$\nonadd_i=\nonadd_i\circ \SHI_{i+1}\circ \partial$.
Thus, if we set $\Nonadd_i:=\nonadd_i\circ \SHI_{i+1}$,
then $\nonadd_i=\delta \Nonadd_i$, as desired (since applying $\delta$ to a cochain $\alpha$
corresponds to the composition $\alpha \circ \partial$ on the level of homomorphisms from
chains into $\pi_{i+1}$). Finally, the property $\Nonadd_i(\bzero,\cdot)=\Nonadd_i(\cdot,\bzero)=0$ follows because
the corresponding property holds for $\nonadd_i$ and $\SHI_{i+1}$ maps $C_{i+1}(P_i\vee P_i)$ to
$C_{i+2}(P_i \vee P_i)$.
\end{proof}

\fi

\iffull

\subsection{A semi-effective representation of $[X,P_i]$} \label{s:smops}

Now let $X$ be a finite simplicial complex or, more generally,
a simplicial set with finitely many nondegenerate simplices
(as we will see, the greater flexibility offered by simplicial
sets will be useful in our algorithm, even if we want to
prove Theorem~\ref{t:main} only for simplicial \emph{complexes}~$X$).

Having the locally effective $H$-group structure on the
stable Postnikov stages $P_i$, we obtain the desired
locally effective Abelian group structure on $[X,P_i]$ immediately.

\else

In our subsequent use of this proposition,
we also need several additional properties
of $\+_i$, $\bmi_i$; most notably, that
the induced maps $[k_{i*}]\:[X,P_i]\to [X,K_{i+2}]$ and $[p_{i*}]\:[X,P_i]\to [X,P_{i-1}]$ are homormorphisms.
We also omit a precise statement
of these additional properties from this extended abstract.

Once we have the operations $\+_i$, $\bmi_i$ on $P_i$, by the discussion at the end of Section~\ref{sec:prelim},
 we obtain the desired
locally effective Abelian group structure on $[X,P_i]$ immediately. Specifically,
\fi
\iffull
Indeed, according to the remarks following Fact~\ref{f:PtoXP},
 a simplicial map
 $s\:P\to Q$ of  arbitrary simplicial sets induces a map
$s_{*}\:\SM(X,P)\to \SM(X,Q)$ by composition,
i.e., by $s_{*}(f)(\sigma)=(s\circ f)(\sigma)$ for each simplex $\sigma \in P$.
If $P$ and $Q$ are Kan, we also get a well-defined map $[s_{*}]\: [X,P]\to
 [X,Q]$. Moreover, if  $s$ is locally effective,
then so is $s_{*}$ (since $X$ has only finitely
many nondegenerate simplices). In particular,
\fi 
 the group operations on $[X,P_i]$ are represented
by  locally effective maps
$\+_{i\ast}\:\SM(X,P_i)\times \SM(X,P_i)\to \SM(X,P_i)$
and $\bmi_{i*}\: \SM(X,P_i)\to \SM(X,P_i)$.

\iffull
\heading{The cochain representation. }
However, we can make the algorithm considerably more efficient if we use the
special structure of $P_i$ and work with cochain representatives of the simplicial maps in $\SM(X,P_i)$.

We recall from Section~\ref{sec:prelim} that simplicial maps into $K(\pi,n)$
and $E(\pi,n)$ are canonically represented by cocycles and cochains,
respectively. Simplicial maps $X\to P_i$ are, in particular,
maps into the product $E_0\times\cdots\times E_i$, and so
they can be represented by $(i+1)$-tuples of cochains
$\cc=(c^0,\ldots,c^i)$, with $c^j\in C^j:=C^j(X;\pi_j)$.

The ``simplicial'' definition of $\+_{i*},\bmi_{i*}$ can easily be translated
to a ``cochain'' definition, using the correspondence explained after Lemma~\ref{l:sKpi-cohomo}.
For simplicity, we describe the result concretely for the unary
operation $\bmi_{i*}$;
the case of $\+_{i*}$ is entirely analogous, it just
would require more notation.

Thus, to evaluate $(d^0,\ldots,d^i):=\bmi_{i*}\cc$, we need to compute the value
of $d^j$ on each $j$-simplex $\omega$ of $X$, $j=0,1,\ldots,i$.
To this end, we first identify $\omega$  with
the standard $j$-simplex $\Delta^j$ via the unique order-preserving
map of vertices. Then the restriction of $(c^0,\ldots,c^i)$
to $\omega$ (i.e., a labeling of the faces of $\omega$ by the elements
of the appropriate Abelian groups) can be regarded
as a $j$-simplex $\bsigma$ of $P_i$. We compute $\btau:=\bmi_j\bsigma$,
again a $j$-simplex of $P_i$. The component $\tau^j$ of $\btau$
is a $j$-cochain on $\Delta^j$, i.e., a single element of $\pi_j$,
and this value, finally, is the desired value of $d^j(\omega)$.
For $\+_{i*}$ everything works similarly.

We also get that $\bzero\in \SM(X,P_i)$, the simplicial map represented by
the zero cochains, is a strictly neutral element under $\+_{i*}$.

We have made $[X,P_i]$ into a \emph{semi-effectively
represented Abelian group} in the sense of Section~\ref{s:abelops}.
The representatives are the $(i+1)$-tuples $(c^0,\ldots,c^i)$
of cochains as above. However, our state of knowledge of $[X,P_i]$
is rather poor at this point; for example, we have as yet no equality test.

A substantial amount of work still lies ahead
to make $[X,P_i]$ fully effective.
\fi

\section{The main algorithm}
\label{sec:main-algo}

\iffull
In order to prove
our main result, Theorem~\ref{t:main}, on computing $[X,Y]$,
we will prove the following statement by induction on $i$.
\else
Our main result, Theorem~\ref{t:main}, is an immediate consequence of
the following statement.
\fi

\begin{theorem}\label{t:induct}
 Let $X$ be a simplicial set
with finitely many nondegenerate simplices,
and let $Y$ be a $(d-1)$-connected simplicial set, $d\ge 2$, for which
a locally effective Postnikov system with $2d-2$ stages
$P_0,\ldots,P_{2d-2}$ is available. Then, for every
$i=d,d+1,\ldots,2d-2$, a fully effective representation of
$[X,P_{i}]$ can be computed, with the cochain representations
of simplicial maps $X\to P_i$ as representatives.
\end{theorem}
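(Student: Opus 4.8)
The plan is to argue by induction on $i$, with base case $i=d$; the induction step will invoke the statement recursively for both $X$ and its suspension $\susp X$ (and, deeper down, further iterated suspensions), but always at strictly smaller values of the index $i$, so the recursion will terminate. First I would record a standing fact: if $W$ is a simplicial set with finitely many nondegenerate simplices and $\pi$ is a fully effective Abelian group, then each cochain group $C^n(W;\pi)$ is a finite direct sum of copies of $\pi$, hence fully effective, and $\delta\colon C^n(W;\pi)\to C^{n+1}(W;\pi)$ is a locally effective homomorphism; so $Z^n(W;\pi)=\ker\delta$ is fully effective by Lemma~\ref{l:ker}, and $H^n(W;\pi)=\coker\bigl(\delta\colon C^{n-1}(W;\pi)\to Z^n(W;\pi)\bigr)$ is fully effective by Lemma~\ref{l:coker}, including the clause that a bounding cochain for a cocycle representing $0$ in $H^n$ can be computed. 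For the base case, $(d-1)$-connectedness of $Y$ forces $\pi_0=\dots=\pi_{d-1}=0$, so $P_{d-1}$ is a point and $P_d=L_d=K(\pi_d,d)$; the cochain representatives of simplicial maps $X\to P_d$ are then precisely the cocycles in $Z^d(X;\pi_d)$, two being homotopic iff they differ by a coboundary, so $[X,P_d]\cong H^d(X;\pi_d)$ is fully effective.

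For the step, I would fix $d<i\le 2d-2$ and assume the theorem for index $i-1$, for $X$ and for $\susp X$. The $i$th stage sits in the pullback fibration $L_i\xrightarrow{\lambda_i}P_i\xrightarrow{p_i}P_{i-1}$ along $k_{i-1}\colon P_{i-1}\to K_{i+1}$. By Proposition~\ref{l:effective-h-group}, $\lambda_i$ (part~(a), as $\lambda_i(\sigma)=(\bzero,\sigma)$) and $p_i$ (part~(c)) are strict homomorphisms of $H$-groups and $k_{i-1}$ is an $H$-homomorphism (part~(d), valid since $i-1<2d-2$), so applying $[X,-]$ and extending the fiber sequence one step to the left gives an exact sequence of Abelian groups
\[
[X,\Omega P_{i-1}]\xrightarrow{[\Omega k_{i-1}]_*}[X,L_i]\xrightarrow{[\lambda_{i*}]}[X,P_i]\xrightarrow{[p_{i*}]}[X,P_{i-1}]\xrightarrow{[k_{(i-1)*}]}[X,K_{i+1}],
\]
with $[X,L_i]\cong H^i(X;\pi_i)$, $[X,K_{i+1}]\cong H^{i+1}(X;\pi_i)$ fully effective and $[k_{(i-1)*}]$ locally effective. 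Setting $C:=\ker[k_{(i-1)*}]=\im[p_{i*}]$, Lemma~\ref{l:ker} (applied to $[k_{(i-1)*}]\colon[X,P_{i-1}]\to[X,K_{i+1}]$, using that $[X,P_{i-1}]$ is fully effective by induction) makes $C$ fully effective. The remaining piece is $\ker[p_{i*}]=\im[\lambda_{i*}]\cong[X,L_i]/D$, where $D:=\ker[\lambda_{i*}]$.

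Computing $D$ is the crux, and I expect it — together with verifying that all the maps above are genuine homomorphisms and the sequences genuinely exact throughout the stable range using only the $H$-structures of Proposition~\ref{l:effective-h-group} — to be the main obstacle; this is the ``recursive testing of nullhomotopy'' advertised in the introduction. By left-exactness of the displayed sequence $D=\im[\Omega k_{i-1}]_*$, and under the adjunction $[X,\Omega P_{i-1}]\cong[\susp X,P_{i-1}]$ together with the suspension isomorphism $[\susp X,K_{i+1}]\cong H^i(X;\pi_i)\cong[X,L_i]$ this reads
\[
D=\im\bigl([k_{(i-1)*}]\colon[\susp X,P_{i-1}]\longrightarrow[\susp X,K_{i+1}]\bigr)\ \subseteq\ H^i(X;\pi_i).
\]
Since $\susp X$ again has finitely many nondegenerate simplices and the index $i-1$ is smaller, the induction hypothesis gives a fully effective $[\susp X,P_{i-1}]$, and Lemma~\ref{l:coker} applied to the locally effective homomorphism $[\susp X,P_{i-1}]\to[\susp X,K_{i+1}]\cong H^i(X;\pi_i)$ then furnishes a fully effective $A:=[X,L_i]/D=\im[\lambda_{i*}]$, with the witness clause.

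Finally I would assemble $[X,P_i]$ from the short exact sequence $0\to A\to[X,P_i]\xrightarrow{[p_{i*}]}C\to 0$ (first arrow induced by $[\lambda_{i*}]$) using Lemma~\ref{l:exact-s}: the middle group is semi-effective with the cochain tuples $(c^0,\dots,c^i)$ as representatives, the zero tuple as neutral element, and $\+_{i*},\bmi_{i*}$ obtained by post-composition with the locally effective $\+_i,\bmi_i$ of Proposition~\ref{l:effective-h-group}. For the retraction $r\colon\ker[p_{i*}]\to A$, given $(c^0,\dots,c^i)$ with $[p_{i*}]$-image $0$, the witness clause of the induction hypothesis yields a homotopy lowering $(c^0,\dots,c^{i-1})$ to $\bzero$, which lifts through $p_i$ to a homotopy deforming the map to $(\bzero,\dots,\bzero,\tilde c^i)$ with $\tilde c^i\in Z^i(X;\pi_i)$, and $r$ returns the class of $\tilde c^i$ in $A$. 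For the section of representatives $\rez\colon C\to\SM(X,P_i)$, the cocycle $k_{(i-1)*}(c^0,\dots,c^{i-1})$ of a representative of a class in $C$ is null in $H^{i+1}(X;\pi_i)$, so the witness clause gives $c^i$ with $\delta c^i=k_{(i-1)*}(c^0,\dots,c^{i-1})$ and $\rez$ returns $(c^0,\dots,c^{i-1},c^i)$. Lemma~\ref{l:exact-s} then outputs $[X,P_i]$ fully effective with cochain representatives, completing the induction, and Theorem~\ref{t:main} follows by taking $i=2d-2$ (with the Postnikov system built via Theorem~\ref{t:eff-postni}) and $[X,Y]\cong[X,P_{2d-2}]$ when $\dim X\le 2d-2$.
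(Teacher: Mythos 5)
Your route is essentially the paper's: you derive the same five-term exact sequence
\[
[\susp X,P_{i-1}]\to[X,L_i]\to[X,P_i]\to[X,P_{i-1}]\to[X,K_{i+1}]
\]
(you reach it via the fibration/Puppe sequence and the $\Omega\leftrightarrow\susp$ adjunction where the paper builds each arrow directly as a simplicial-level map and proves exactness by hand), then reduce via Lemmas~\ref{l:ker} and~\ref{l:coker} to a short exact sequence and assemble $[X,P_i]$ with Lemma~\ref{l:exact-s}, exactly as in Step~5 of the paper's algorithm. The constructions of $C=\ker[k_{(i-1)*}]$, of $A=\coker$ (the paper's $M_i$), and of the section $\rez$ all match.

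There is, however, a genuine gap in your construction of the retraction $r$. You assert that, given $(c^0,\dots,c^i)$ with $[p_{i*}]$-image zero, ``the witness clause of the induction hypothesis yields a homotopy lowering $(c^0,\dots,c^{i-1})$ to $\bzero$,'' i.e.\ a simplicial map $\bb\in\SM(CX,P_{i-1})$ with $\bb|_X=(c^0,\dots,c^{i-1})$. But the induction hypothesis as you state it---that $[X,P_{i-1}]$ is fully effective---only gives an equality test, and the witness clause of Lemma~\ref{l:coker} produces a preimage under a homomorphism between groups, not a simplicial nullhomotopy on the cone. Neither provides an algorithm that converts ``$[\cc]=0$ in $[X,P_{i-1}]$'' into an explicit $\bb\in\SM(CX,P_{i-1})$. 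This is precisely what the paper develops as the separate recursive algorithm $\NULLHOA$: at level $i$ it calls itself at level $i-1$ on $p_{i*}(\cc)$, lifts through $p_i$, subtracts, uses the Lemma~\ref{l:coker} witness for the cokernel $M_i$ to peel off a contribution from $\susp X$ and a coboundary, and finally $\+_{i*}$-combines the three partial nullhomotopies (which itself needs that restriction commutes with $\+_{i*}$). You in fact slightly misplace where the ``recursive nullhomotopy'' lives: you attribute it to the computation of $D$, which is a straightforward cokernel/Smith-normal-form step, whereas the recursion actually enters in making $r$ locally effective. To repair the argument you would need to strengthen the inductive statement so that it explicitly carries a nullhomotopy-producing algorithm along with the fully effective representation, and then construct that algorithm.
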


\iffull
Two comments on this theorem are in order. First,
unlike
\else
Unlike
\fi
in Theorem~\ref{t:main}, there is no restriction on $\dim X$
(the assumption $\dim X\le 2d-2$ in Theorem~\ref{t:main} is needed
only for the isomorphism $[X,Y]\cong [X,P_{2d-2}]$).
\iffull
Second,
as was already mentioned in Section~\ref{s:smops},
\else
Also,
\fi 
even if we want
Theorem~\ref{t:main} only for a simplicial \emph{complex} $X$,
we need Theorem~\ref{t:induct} with simplicial \emph{sets} $X$,
because of recursion.

\iffull

First we will (easily) derive Theorem~\ref{t:main}
from Theorem~\ref{t:induct}.

\begin{proof}[Proof of  Theorem~\ref{t:main}]
Given a $Y$ as in Theorem~\ref{t:main}, we first obtain
a fully effective Postnikov system for it with $2d-2$ stages
using Theorem~\ref{t:eff-postni}. Then we compute a fully effective
representation of $[X,P_{2d-2}]$ by Theorem~\ref{t:induct}.
Since $Y$ is $(d-1)$-connected
and $\dim X\le 2d-2$, there is a bijection between
$[X,Y]$ and $[X,P_{2d-2}]$\iffull\ by Proposition~\ref{prop:XYvsXPi}\fi.

It remains to implement the homotopy testing.
Given two simplicial maps $f,g\:X\to Y$, we
use the locally effective simplicial map $\varphi_{2d-2}\:Y\to P_{2d-2}$
(which is a part of a locally effective simplicial Postnikov system),
and we compute the cochain representations $\cc,\dd$ of the corresponding
simplicial  maps $\varphi_{2d-2}\circ f,
\varphi_{2d-2}\circ g\:X\to P_{2d-2}$. Then we can check, using
the fully effective representation of $[X,P_{2d-2}]$,
whether $[\cc]-[\dd]=0$ in $[X,P_{2d-2}]$.
This yields the promised homotopy testing algorithm for $[X,Y]$ and
concludes the proof of Theorem~\ref{t:main}.
\end{proof}

\subsection{The inductive step: An exact sequence for $[X,P_i]$}\label{s:ind-step}

\fi 

Theorem~\ref{t:induct} is proved by induction on $i$. The
base case is $i=d$ (since $P_0,\ldots,P_{d-1}$ are trivial
for a $(d-1)$-connected $Y$), which presents no problem: we
have $P_d=L_d=K(\pi_d,d)$, and so \iffull
$$
\else $\fi
[X,P_d]\cong H^d(X;\pi_d)
\iffull .$$\else$. \fi
This group is fully effective, since it is the cohomology group of a
simplicial set with finitely many nondegenerate simplices, with coefficients in a fully effective
group.
\iffull (Alternatively, we could start the algorithm at $i=0$;
then it would obtain $[X,P_d]$ at stage $d$ as well.)
\fi

So now we consider $i>d$, and we assume that a fully effective representation
of $[X,P_{i-1}]$ is available, where the representatives
of the homotopy classes $[f]\in [X,P_{i-1}]$ are
(cochain representations of) simplicial maps $f\:X\to P_{i-1}$.
We want to obtain a similar representation for $[X,P_i]$.
\iffull

Let us first describe on an intuitive level what this task means
and how we are going to approach it.
\else
We describe this on an intuitive level, and then
we formulate the algorithm, leaving several
pages of a correctness proof for the full version.
\fi 

\iffull As we know, every \else Every \fi
map $g\in\SM(X,P_i)$ yields a map
$f=p_{i*}(g)=p_i\circ g\in\SM(X,P_{i-1})$
by projection (forgetting the last coordinate in $P_i$).
We first ask the question of \emph{which} maps $f\in\SM(X,P_{i-1})$
are obtained as such projections; this is traditionally
called the \indef{lifting problem} (and $g$ is called a \indef{lift}
of $f$). Here the answer follows easily from the
properties of the Postnikov system: \iffull
liftability of a map
$f$ depends only on its homotopy class $[f]\in [X,P_{i-1}]$,
and
\fi
the liftable maps in $[X,P_{i-1}]$ are obtained as the kernel
of the homomorphism $[k_{(i-1)*}]$ induced by the Postnikov class.
This is very similar to the one-step extension in the
setting of obstruction theory, as was mentioned in the introduction.
\iffull
This step will be discussed in Section~\ref{s:lift}.
\fi

Next, a single map $f\in \SM(X,P_{i-1})$ may in general have many
lifts $g$, and we need to describe their structure. This is
reasonably straightforward to do on the level of
\emph{simplicial maps}. Namely, if $\cc=(c^0,\ldots,c^{i-1})$
is the cochain representation of $f$ and $g_0$ is a fixed
lift of $f$, with cochain representation $(\cc,c_0^i)$,
then it turns out that all possible lifts $g$ of $f$
are of the form (again in the cochain representation)
$(\cc,c_0^i+z^i)$, $z^i\in Z^i(X,\pi_i)\cong\SM(X,L_i)$.
Thus, all of these lifts have a simple ``coset structure''.

This allows us to compute a list of generators of $[X,P_i]$.
We also need to find all
\emph{relations} of these generators, and for this, we need to
be able to test whether two maps $g_1,g_2\in\SM(X,P_i)$ are homotopic.
\iffull
This is somewhat more complicated, and we will develop
a recursive algorithm for homotopy
testing in Section~\ref{s:homotopytest}.

\fi
Using the group structure, it suffices
to test whether a given $g\in \SM(X,P_i)$ is nullhomotopic.
An obvious necessary
condition for this is nullhomotopy of the projection
 $f=p_i\circ g$, which  we test recursively.
Then, if $f\sim 0$, we $\+_{i*}$-add a suitable nullhomotopic map to $g$,
and this reduces the nullhomotopy test to the case
where $g$ has a cochain representation of the form $(\bzero,z^i)$,
$z_i\in Z^i(X,\pi_i)\cong \SM(X,L_i)$.

Now $(\bzero,z^i)$ can be nullhomotopic, as a map $X\to P_i$,
 by an ``obvious'' nullhomotopy, namely, one ``moving'' only the
last coordinate, or in other words, induced by a nullhomotopy
in $\SM(X,L_i)$.
But there may also be ``less obvious'' nullhomotopies,
and it turns out that these correspond to maps
$\susp X\to P_{i-1}$\iffull, where $\susp X$
is the suspension of $X$ defined in Section~\ref{s:simplsets}\fi.
Thus, in order to be able to test homotopy of
maps $X\to P_i$, we also need to compute $[\susp X,P_{i-1}]$
recursively\iffull, using the inductive assumption, i.e.,
Theorem~\ref{t:induct} for~$i-1$\fi.

\heading{The exact sequence. }
We will organize the computation of $[X,P_i]$ using
an \emph{exact sequence}, a basic tool in algebraic topology
and many other branches of mathematics.
\iffull First we write the sequence down,
including some as yet undefined symbols, and then we provide
some explanations. \fi 
It goes as follows:
\begin{equation}\label{theexact}
\hskip-1mm\xymatrix{
 [\susp X,P_{i-1}]\ar[r]^{[\mu_{i}]}&
   [X,L_i]\ar[r]^{[\lambda_{i*}]}&
     [X,P_i] \ar[d]^{[p_{i*}]}& \\
 &&[X,P_{i-1}]\ar[r]^{[k_{(i-1)*}]}&[X,K_{i+1}]}
\end{equation}
This is a sequence of Abelian groups and homomorphisms of these groups,
and  exactness means that the image of each of the homomorphisms
equals the kernel of the successive one.\iffull\else\footnote{We
remark that the exact sequence (\ref{theexact}) can be obtained
from the so-called \emph{fibration sequence} in topology. However,
since we need all the maps locally effective and also ``effective
inverses'' for some of them, we actually provide (in the full version)
a direct, elementary proof of the exactness.}

Here, for example, $k_{(i-1)*}$ is the mapping
$\SM(X,P_{i-1})\to\SM(X,K_{i+1})$ induced by the Postnikov class $k_{i-1}$,
and $[k_{(i-1)*}]$ is its (simplicial) homotopy class.
The only symbol we have not yet encountered is
the map $\mu_i\:\SM(\susp X,P_{i-1})\to \SM(X,L_i)$, which works
as follows: given an $F\in \SM(\susp X,P_{i-1})$, we
compose it with $k_{i-1}$, which yields
 $k_{i-1}\circ F\in \SM(\susp X,K_{i+1})$ represented by a cocycle
in $Z^{i+1}(\susp X;\pi_i)$. We then
re-interpret this cocycle\footnote{Here
 we use the fact that there is an obvious bijective
correspondence between $C^{i+1}(SX;\pi_i)$ and $C^i(X;\pi)$,
which
 is compatible with the coboundary operators (up to signs).}
as a cocycle in $Z^i(X;\pi_i)$
representing a map in $\SM(X,L_i)$, which we declare
to be $\mu_i(F)$.
\fi 

\iffull

We have already met most of the objects in this exact sequence,
but for convenience, let us summarize them all.
\begin{itemize}
\item $[\susp X,P_{i-1}]$ is the group of homotopy classes
of maps from the suspension into the one lower stage $P_{i-1}$;
inductively, we may assume it to be fully effective.
\item $[\mu_{i}]$ is a homomorphism appearing here
for the first time, which will be discussed later.
\item $[X,L_i]\cong H^i(X;\pi_i)$ consists of the homotopy classes of
maps into the Eilenberg--MacLane space $L_i=K(\pi_i,i)$, and
it is fully effective.
\item $[\lambda_{i*}]$ is the homomorphism induced by the
mapping  $\lambda_i\:L_i\to P_i$, the ``insertion
to the last component''; i.e., $\lambda_i(\sigma^i)=(\bzero,\sigma^i)$.
 In terms of cochain representatives,
$\lambda_{i*}$ sends $z^i$ to $(\bzero,z^i)$.
\item $[X,P_i]$ is what we want to compute, $[p_{i*}]$ is the projection
  (on the level of homotopy), and $[X,P_{i-1}]$
has already been computed, as a fully effective Abelian group.
\item $[k_{(i-1)*}]$ is the homomorphism induced by the composition with
the Postnikov class $k_{i-1}\:P_{i-1}\to K_{i+1}=K(\pi_i,i+1)$.
\item $[X,K_{i+1}]\cong H^{i+1}(X,\pi_i)$ are again maps into
an Eilenberg--MacLane space.
\end{itemize}

Let us remark that the exact sequence (\ref{theexact}),
with some $[\mu_{i}]$,  can be obtained by standard topological
considerations from the so-called \emph{fibration sequence}
for the fibration $L_i\to P_i\to P_{i-1}$;
see, e.g., \cite[Chap.~14]{MosherTangora:CohomologyOperations-1968}.\footnote{Let us consider topological
 spaces $E$ and $B$ with basepoints and a pointed map $p\:E\to B$.
If $p$ has the so-called \emph{homotopy lifting property}
(which is the case for our $p_i$)
it is called a \indef{fibration} and the preimage $F$
of the base point in $B$ is called the \indef{fibre} of~$p$.
The sequence of maps $F\overset{i}{\hookrightarrow} E\xrightarrow{p} B$ can be prolonged into
the \indef{fibration sequence}
$$\dots \to \Omega F\xrightarrow{\Omega i}\Omega E\xrightarrow{\Omega p} \Omega B\xrightarrow{\mu} F
\xrightarrow{i} E\xrightarrow{p} B$$
of pointed maps, where, for a pointed space $Y$,
$\Omega Y$ is the space of loops starting at the base point.
For spaces $X$ and $Y$ with base points, let
$\M(X,Y)_*$ denote the  set of all continuous pointed
maps, and let $[X,Y]_*$ be the set of (pointed) homotopy classes of these maps.
Then the fibration sequence yields the sequence
$$
\dots\to\M(X,\Omega F)_*\to\M(X,\Omega E)_*\to\M(X,\Omega B)_*
\to\M(X,F)_*\to \M(X,E)_*\to\M(X,B)_*.
$$
As it turns out, on the level of homotopy classes we get even the long \emph{exact} sequence
$$\dots\to [X,\Omega F]_*\to [X,\Omega E]_*\to [X,\Omega B]_*\to [X,F]_*\to
[X,E]_*\to [X,B]_*.$$
There is a natural bijection between
$[\Sigma X,E]_*$ and $[X,\Omega E]_*$, where $\Sigma X$
is the \emph{reduced} suspension of $X$, and so we get the long exact sequence
$$ \dots \to[\Sigma X,F]_*\to [\Sigma X,E]_*\to [\Sigma X, B]_*\to [X,F]_*\to
[X,E]_*\to [X,B]_*.$$
For CW-complexes, the difference between $\susp X$
and $\Sigma X$ does not matter, and for the sequence $P_i\to P_{i-1}\to K_{i+1}$, which can be considered as a fibration, we arrive at (\ref{theexact}).
}
However, in order to have all the homomorphisms locally
effective and also to provide the locally effective
``inverses'' (as required in Lemma~\ref{l:exact-s}),
we will need to analyze the sequence in some detail;
then we will obtain  a complete ``pedestrian'' proof of the exactness
with only a small extra effort.
Thus, the fibration sequence serves just as a background.

\fi 

\heading{The algorithm } for computing $[X,P_i]$ goes as follows.

\begin{enumerate}
\itemsep=0mm
\item Compute $[X,P_{i-1}]$ fully effective, recursively.
\item\label{step:ker} Compute
$N_{i-1}:=\ker\,[k_{(i-1)*}]\subseteq [X,P_{i-1}]$
(so $N_{i-1}$ consists of all homotopy classes of liftable maps),
fully effective,
using Lemma~\ref{l:ker} and Theorem~\ref{t:eff-postni}.
\item Compute $[\susp X,P_{i-1}]$ fully effective, recursively.
\item\label{step:fact} Compute the factor group $M_{i}:=\coker\,[\mu_{i}]=
 [X,L_i]/\im\,[\mu_{i}]$ using Lemma~\ref{l:coker}, fully effective and
including the possibility of computing ``witnesses for $0$'' as in the lemma.
\item\label{step:sho} The exact sequence (\ref{theexact}) can now be
transformed to
\iffull
the short exact sequence
$$
0\to M_i \xrightarrow{\ell_i} [X,P_i]
 \xrightarrow{[p_{i*}]} N_{i-1}\to 0
$$
\else
$0\to M_i \xrightarrow{\ell_i} [X,P_i]
 \xrightarrow{[p_{i*}]} N_{i-1}\to 0$
\fi
(where $\ell_i$ is induced by exactly the same mapping $\lambda_{i*}$
of representatives as $[\lambda_{i*}]$
in the original exact sequence (\ref{theexact})).
Let $\mathcal{N}_{i-1}:=\{f\in \SM(X,P_{i-1}): [\kk_{(i-1)*}(f)]=0\}$ be the
set of representatives of elements in $N_{i-1}$.
Implement a locally effective
``section'' $\rez_i\:\mathcal{N}_{i-1} \to\SM(X,P_i)$
with $[p_{i*}\circ \rez_i]=\id$ and a locally effective ``inverse'' $r_i\:\im\,[\lambda_{i*}]\to
M_i$ with $\ell_i\circ r_i=\id$, as in Lemma~\ref{l:exact-s},
and compute $[X,P_i]$ fully effective using that lemma.
\end{enumerate}

\iffull 
We will now examine steps~\ref{step:ker},\ref{step:fact},\ref{step:sho}
 in detail, and simultaneously establish
the exactness of~(\ref{theexact}).

\heading{Convention. } It will be notationally convenient
to let maps such as $p_{i*}$, $k_{(i-1)*}$, $\lambda_{i*}$,
which send simplicial maps to simplicial maps,
operate directly on the cochain representations (and in such
case, the result is also assumed to be a cochain representation).
Thus, for example, we can write $p_{i*}(\cc,c)=\cc$,
$\lambda_{i*}(z^i)=(\bzero,z^i)$, etc. We also write
$[\cc]$ for the homotopy class of the map represented by~$\cc$.

\subsection{Computing the liftable maps}\label{s:lift}

Here we will deal with the last part of the exact sequence
(\ref{theexact}), namely,
$$
[X,P_i] \xrightarrow{[p_{i*}]} [X,P_{i-1}]\xrightarrow{[k_{(i-1)*}]} [X,K_{i+1}].
$$

First we note that, since the projection map $p_i$ is an $H$-homomorphism by
Proposition~\ref{l:effective-h-group}(c),
the (locally effective) map $p_{i*}\:\SM(X,P_i)\to\SM(X,P_{i-1})$
indeed induces a well-defined group homomorphism $[X,P_i]\to[X,P_{i-1}]$
(Fact~\ref{f:PtoXP}). Similarly, the $H$-homo\-morphism $k_{i-1}$
(Proposition~\ref{l:effective-h-group}(d)) induces a
group homomorphism $[k_{(i-1)*}]\:[X,P_{i-1}]\to [X,K_{i+1}]\cong H^{i+1}(X;\pi_i)$.

\begin{lemma}[Lifting lemma]\label{l:lifting}\
 We have $\im\,[p_{i*}]=\ker\,[k_{(i-1)*}]$. Moreover, if we set $\mathcal{N}_{i-1}:=\{f\in \SM(X,P_{i-1}): [\kk_{(i-1)*}(f)]=0\}$, then
 there is a locally effective mapping $\rez_i\colon
\mathcal{N}_{i-1}\to \SM(X,P_i)$
such that $p_{i\ast}\circ \rez_i$ is the identity map (on the level of simplicial maps).
\end{lemma}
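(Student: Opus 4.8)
The plan is to prove the set-equality $\im[p_{i*}]=\ker[k_{(i-1)*}]$ by two inclusions, and to extract the locally effective map $\rez_i$ directly from the proof of the harder inclusion. I would work entirely in the cochain representation, and the one fact I would isolate at the start (by unwinding the definition of $P_i$ as the subset of $P_{i-1}\times E_i$ cut out by $k_{i-1}$, together with the dictionaries $\SM(X,E_i)\cong C^i(X;\pi_i)$ and $\SM(X,K_{i+1})\cong Z^{i+1}(X;\pi_i)$) is the following: for $f\in\SM(X,P_{i-1})$ with cochain representation $\cc$ and a cochain $c^i\in C^i(X;\pi_i)$, the tuple $(\cc,c^i)$ represents a simplicial map $X\to P_i$ \emph{if and only if} $k_{(i-1)*}(\cc)=\delta c^i$ as cochains in $C^{i+1}(X;\pi_i)$, where $k_{(i-1)*}(\cc)\in Z^{i+1}(X;\pi_i)$ is the cocycle representing $k_{i-1}\circ f$.

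For $\im[p_{i*}]\subseteq\ker[k_{(i-1)*}]$ I would argue as follows. Given $g\in\SM(X,P_i)$ with cochain representation $(\cc,c^i)$, the map $p_{i*}(g)$ has representation $\cc$, and the constraint above forces $k_{(i-1)*}(\cc)=\delta c^i\in B^{i+1}(X;\pi_i)$. So $k_{(i-1)*}(p_{i*}(g))\:X\to K_{i+1}$ is represented by a cocycle that is a coboundary, hence is nullhomotopic by Lemma~\ref{l:shomotopy0}, giving $[k_{(i-1)*}]\circ[p_{i*}]=0$. (The conceptual reason is that $k_{i-1}\circ p_i=\delta\circ q_i$, where $q_i\:P_i\to E_i$ is projection to the last coordinate and $\delta\:E_i\to K_{i+1}$ is induced by the coboundary; a map factoring through $\delta$ is nullhomotopic.)

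For $\ker[k_{(i-1)*}]\subseteq\im[p_{i*}]$, together with the section: take $f\in\SM(X,P_{i-1})$ with $[k_{(i-1)*}(f)]=0$ in $[X,K_{i+1}]\cong H^{i+1}(X;\pi_i)$ and set $z:=k_{(i-1)*}(f)\in Z^{i+1}(X;\pi_i)$. The hypothesis says $z\in B^{i+1}(X;\pi_i)$, so some $c^i\in C^i(X;\pi_i)$ satisfies $\delta c^i=z$; by the isolated fact, $(\cc,c^i)$ then represents $g\in\SM(X,P_i)$ with $p_{i*}(g)=f$ \emph{on the nose}, so $[f]=[p_{i*}]([g])$. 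Combined with the first inclusion this gives the desired equality of image and kernel. To get local effectivity of $f\mapsto g$: since $X$ has finitely many nondegenerate simplices and $\pi_i$ is fully effective, $C^i(X;\pi_i)$ and $C^{i+1}(X;\pi_i)$ are fully effective and $\delta$ a locally effective homomorphism between them; given $f\in\mathcal{N}_{i-1}$ one computes $z=k_{(i-1)*}(f)$ by evaluating the locally effective Postnikov class $k_{i-1}$ on the finitely many simplices of $X$, and then produces a preimage $c^i$ of $z$ under $\delta$ by the Smith-normal-form / linear-Diophantine machinery already used in Lemmas~\ref{l:ker} and~\ref{l:coker}, picking a canonical preimage so that $\rez_i(f):=(\cc,c^i)$ is a deterministic (hence locally effective) function of $f$, with $p_{i*}\circ\rez_i=\id$ on simplicial maps.

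The main obstacle, such as it is, is not a deep one: it is the careful bookkeeping behind the isolated fact --- checking that ``$(\cc,c^i)$ lands in $P_i$'' really does unwind to the simplexwise identity $k_{(i-1)*}(\cc)=\delta c^i$ --- which requires chasing the definition of the simplicial Postnikov stage and the simplicial-map/cochain correspondence. Once that is in place, the first inclusion is just the homotopy classification of maps into $K(\pi,n)$ and the second is linear algebra over finitely generated abelian groups, so no genuinely hard step remains.
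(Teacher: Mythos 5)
Your proof is correct and follows essentially the same route as the paper's: isolate the simplexwise lifting condition $k_{(i-1)*}(\cc)=\delta c^i$ from the definition of $P_i$ inside $P_{i-1}\times E_i$, deduce from it that a homotopy class $[\cc]$ lifts iff $[k_{(i-1)*}(\cc)]=0$ in $H^{i+1}(X;\pi_i)$, and obtain $\rez_i$ by computing a coboundary-primitive $c^i$ of $k_{(i-1)*}(\cc)$ via a fixed rule. The paper compresses the two inclusions into a single ``iff'' and leaves the linear-algebra details implicit, but every step you spell out (including the factorization $k_{i-1}\circ p_i=\delta\circ q_i$ and the Smith normal form for finding a preimage under $\delta$) is exactly what is being used.
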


\begin{proof}
Let us consider a map $f\in \SM(X,P_{i-1})$
with cochain representation $\cc$.
Every cochain $(\cc,c^i)$ with $c^i\in C^i(X;\pi_i)$
represents a simplicial map $X\to P_{i-1}\times E_i$,
and this map goes into $P_{i}$ iff the condition
\begin{equation}\label{eq:SMXPi}
\kk_{(i-1)*}(\cc)=\delta c^i
\end{equation}
holds. Thus, $f$ has a lift iff $k_{(i-1)*}(\cc)$ is a coboundary,
or in other words, iff $[k_{(i-1)*}(\cc)]=0$ in $[X,K_{i+1}]$.
Hence $\im\,[p_{i*}]=\ker\,[k_{(i-1)*}]$ indeed.

Moreover, if $k_{(i-1)*}(\cc)$ is a coboundary, we can
compute some $c^i$ satisfying (\ref{eq:SMXPi})
and set $\rez_i(f):=(\cc,c^i)$. This involves some arbitrary
choice, but if we fix some (arbitrary) rule for choosing $c^i$, we obtain a locally effective $\xi_i$ as needed.
The lemma is proved.
\end{proof}

\begin{Remark}
In the previous proof as well as in a few more situations below, we will need to make some 
arbitrary choice of a particular solution to a system of linear equations over the integers.
We refrain from specifying any particular such rule, but typically, such a rule will be built into 
any particular Smith normal form algorithm that we use as a subroutine to solve the system of integer linear equations (\ref{eq:SMXPi}).
\end{Remark}

We have thus proved exactness of the sequence (\ref{theexact})
at $[X,P_{i-1}]$. Step~\ref{step:ker} of the algorithm
can be implemented using Lemma~\ref{l:ker}. We have also prepared
the section $\xi_i$ for Step~\ref{step:sho}.

\subsection{Factoring by maps from $\susp X$}\label{s:factor}

We now focus on the initial part
$$
[\susp X,P_{i-1}]\xrightarrow{[\mu_{i}]}[X,L_i]\xrightarrow{[\lambda_{i*}]} [X,P_i]
$$
of the exact sequence (\ref{theexact}), and explain how
the suspension comes into the picture.
We remark that $[\lambda_{i*}]$ is a well-defined homomorphism,
for the same reason as $[p_{i*}]$ and $[k_{(i-1)*}]$; namely,
$\lambda_i$ is an $H$-homomorphism by
Proposition~\ref{l:effective-h-group}(a).

The kernel of $[\lambda_{i*}]$ describes all homotopy
classes of maps $X\to L_i$ that are nullhomotopic as maps $X\to P_i$.
To understand how they arise as images
of maps $\susp X\to P_{i-1}$, we first need to discuss a representation
of nullhomotopies as maps from the cone.

\heading{Maps from the cone. }
A  map $X\to Y$ between two topological spaces
is nullhomotopic iff it can be extended to a map $CX\to Y$
on the cone over $X$; this is more or less a reformulation
of the definition of nullhomotopy.
The same is true in the simplicial setting if the target is a
Kan simplicial set, such as~$P_i$.

We recall that the $n$-dimensional nondegenerate simplices
of $CX$ are of two kinds: the $n$-simplices of $X$ and
the cones over the $(n-1)$-simplices of~$X$. In the language
of cochains, this means that, for any coefficient group $\pi$,
we have
$$
C^n(CX;\pi)\cong C^{n-1}(X;\pi)\oplus C^{n}(X;\pi),
$$
and thus a  cochain $b\in C^n(CX;\pi)$ can be written as
$(e,c)$, with $e\in  C^{n-1}(X;\pi)$ and $c\in C^{n}(X;\pi)$.
We also write $c=b|_X$ for the restriction of $b$ to~$X$.
The coboundary operator $C^n(CX;\pi)\to C^{n+1}(CX,;\pi)$
then acts as follows:
$$
\delta(e,c)=(-\delta e+c,\delta c).
$$

Rephrasing Lemma~\ref{l:shomotopy0} in the language of extensions to $CX$, we get the following:

\begin{cor}A map $f\in\SM(X,L_i)$, represented
by a cocycle $c^i\in Z^i(X;\pi_i)$, is nullhomotopic
iff there is a cocycle $b\in Z^i(CX;\pi)\cong \SM(CX,L_i)$
such that $b|_X=c$.
\end{cor}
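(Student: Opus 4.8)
The plan is to reduce the statement entirely to the cochain bookkeeping already in place, so no genuinely new ideas are needed. Throughout write $\pi=\pi_i$, so $L_i=K(\pi,i)$. First I would invoke Lemma~\ref{l:shomotopy0} with the second cocycle taken to be $0$ (whose associated simplicial map $X\to L_i$ is the constant map at the basepoint): it says that the map $f$ represented by $c^i$ is simplicially homotopic to the constant map if and only if $c^i\in B^i(X;\pi)$, i.e.\ $c^i=\delta e$ for some $e\in C^{i-1}(X;\pi)$. Since $L_i$ is a Kan simplicial set (indeed a simplicial Abelian group), simplicial nullhomotopy coincides with ordinary nullhomotopy, so ``$f$ nullhomotopic'' is the same as ``$c^i=\delta e$ for some $e$''. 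It then remains only to recognize this condition as the existence of the required cocycle on $CX$.

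For that I would use the decomposition $C^i(CX;\pi)\cong C^{i-1}(X;\pi)\oplus C^i(X;\pi)$ recorded just above, writing a cochain $b\in C^i(CX;\pi)$ as $b=(e,c)$ with $b|_X=c$, together with the coboundary formula $\delta(e,c)=(-\delta e+c,\delta c)$. From this formula, $b=(e,c)$ is a cocycle precisely when $-\delta e+c=0$ and $\delta c=0$; but the first equation, $c=\delta e$, already implies the second because $\delta\circ\delta=0$. Hence there exists a cocycle $b\in Z^i(CX;\pi)$ with $b|_X=c^i$ if and only if $c^i=\delta e$ for some $e\in C^{i-1}(X;\pi)$ — exactly the condition produced in the previous paragraph. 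Finally, the identification $Z^i(CX;\pi)\cong\SM(CX,L_i)$ used in the statement is just Lemma~\ref{l:sKpi-cohomo} applied to the simplicial set $CX$.

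There is really no hard step here; it is a direct translation between ``$c^i$ is a coboundary'' and ``$c^i$ extends to a cocycle on the cone'', with Lemma~\ref{l:shomotopy0} supplying the homotopy-theoretic content and the explicit coboundary formula on $CX$ doing the rest. The only point warranting a word of care is the passage between simplicial and ordinary nullhomotopy, which is harmless here precisely because the target $L_i$ is Kan; I would mention this in one sentence and otherwise let the cochain computation speak for itself.
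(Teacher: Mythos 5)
Your proof is correct and takes essentially the same route the paper intends: the paper states the corollary as a "rephrasing of Lemma~\ref{l:shomotopy0} in the language of extensions to $CX$" and leaves the details implicit, and you have filled them in precisely — reduce nullhomotopy to $c^i\in B^i(X;\pi)$ via Lemma~\ref{l:shomotopy0}, then translate "$c^i$ is a coboundary" into "$c^i$ extends to a cocycle on $CX$" using the decomposition $C^i(CX;\pi)\cong C^{i-1}(X;\pi)\oplus C^i(X;\pi)$ and the cone coboundary formula.
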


This describes the homotopies in $\SM(X,L_i)$, which
induce the ``obvious'' homotopies in $\im\lambda_{i*}$.
Let us now consider an element in the image of $\lambda_{i*}$,
i.e., a map $g\:X\to P_i$ with a cochain representation
$(\bzero,c^i)$. By the above, a nullhomotopy of $g$
can be regarded as a simplicial map $G\:CX\to P_i$
whose cochain representation $(\bb,b^i)$ satisfies
$(\bb|_X,b^i|_X)=(\bzero,c^i)$ (here $\bb|_X=(b^0|_X,\ldots,b^{i-1}|_X)$
is the componentwise restriction to $X$).  Thus, the projection
$F:=p_{i*}\circ G\in\SM(CX,P_{i-1})$ is represented by
$\bb$ with $\bb|_X=\bzero$, and hence it maps all of the ``base'' $X$
in $CX$ to $\bzero$.

Recalling that $\susp X$ is obtained from $CX$
by identifying $X$ to a single vertex, we can see that
such $F$ exactly correspond to simplicial maps $\susp X\to P_{i-1}$
(here we use that $P_{i-1}$ has a single vertex $\bzero$).
Thus, maps in $\SM(\susp X,P_{i-1})$ give rise
to nullhomotopies of maps in $\im\lambda_{i*}$.

After this introduction, we develop the definition of $\mu_i$
and prove the exactness of our sequence (\ref{theexact}) at~$[X,L_i]$.

\heading{The homomorphism $\mu_i$. }
Since the
nondegenerate $(i+1)$-simplices of $\susp X$
are in one-to-one correspondence with
the nondegenerate $i$-simplices of $X$,
we have the isomorphism of the cochain groups
$$
D_i\: C^{i+1}(\susp X;\pi_i)\to  C^i(X;\pi_i).
$$
Moreover, this
is compatible with the coboundary operator (up to sign):
$$
\delta D_i(c)=-D_i(\delta c).
$$
Alternatively, if we identify the $(i+1)$-cochains on $\susp X$
 with those $(i+1)$-cochains $b=(e,c)\in C^{i+1}(CX;\pi_i)$
 for which $b|_X=c=0$, then the isomorphism is given by $D_i(e,0)=e$.
The coboundary formula $\delta(e,c)=(-\delta e+c,\delta c)$ for $CX$ indeed
gives  $D_i(\delta(e,0))=D_i(-\delta e,0)=-\delta e=-\delta
D_i(e,0)$.

Because of the compatibility with $\delta$,
$D_i$ restricts to an isomorphism
$Z^{i+1}(\susp X;\pi_i)\to Z^i(X;\pi_i)$
(which we also denote by $D_i$). This induces
an isomorphism
$[D_i]\:  H^{i+1}(\susp X;\pi_i) \to  H^i(X;\pi_i)$.

Translating from cochains to simplicial maps, we
can also regard $D_i$ as an isomorphism
$\SM(\susp X,K_{i+1})\to \SM(X,L_i),$
(where, as we recall, $K_{i+1}=K(\pi_i,i+1)$ and
$L_i=K(\pi_i,i)$),
 and $[D_i]$ as an isomorphism $[\susp X,K_{i+1}]
\to [X,L_i]$.

Now we define $\mu_i\:\SM(\susp X,P_{i-1})\to \SM(X,L_i)$
by
$$
\mu_i:=D_i\circ k_{(i-1)*}.
$$
That is, given $F\in\SM(\susp X,P_{i-1})$,
we first compose it with $k_{i-1}$, which yields
a map in $\SM(\susp X,K_{i+1})$ represented by a cocycle
in $Z^{i+1}(\susp X;\pi_i)$. Applying $D_i$ means
re-interpreting this as a cocycle in $Z^i(X;\pi_i)$
representing a map in $\SM(X,L_i)$, which we declare
to be $\mu_i(F)$.
This, clearly, is locally effective, and $[\mu_i]$
is a well-defined homomorphism $[\susp X,P_{i-1}]\to [X,L_i]$
(since $[D_i]$ and $[k_{(i-1)*}]$ are well-defined homomorphisms).

The connection of this definition of $\mu_i$ to the previous considerations
on nullhomotopies may not be obvious at this point, but the lemma below
shows that $\mu_i$ works.

\begin{lemma}\label{l:exactatLi}
The sequence (\ref{theexact}) is exact at $[X,L_i]$, i.e.,
$\im\,[\mu_i]=\ker\,[\lambda_{i*}]$.
\end{lemma}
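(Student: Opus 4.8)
The plan is to prove the two inclusions $\im\,[\mu_i]\subseteq\ker\,[\lambda_{i*}]$ and $\ker\,[\lambda_{i*}]\subseteq\im\,[\mu_i]$ separately, working throughout with cochain representations and with the decomposition $C^n(CX;\pi)\cong C^{n-1}(X;\pi)\oplus C^n(X;\pi)$ and its coboundary formula $\delta(e,c)=(-\delta e+c,\delta c)$ recalled above. Two elementary facts get reused constantly: (a) since $P_i$ is a Kan simplicial set, a simplicial map $g\:X\to P_i$ is nullhomotopic iff it extends to a simplicial map $G\:CX\to P_i$; and (b) since each $P_j$ has $\bzero$ as its only vertex, a simplicial map $CX\to P_j$ whose restriction to $X$ is the constant map to $\bzero$ factors uniquely through the quotient $q\:CX\to\susp X=CX/X$, and conversely every map in $\SM(\susp X,P_j)$ arises this way. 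In cochain terms, ``restriction to $X$ equal to $\bzero$'' simply says that every component cochain vanishes on $X$.

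For $\im\,[\mu_i]\subseteq\ker\,[\lambda_{i*}]$: given $F\in\SM(\susp X,P_{i-1})$, pull it back along $q$ to $\tilde F=F\circ q\in\SM(CX,P_{i-1})$ with cochain representation $\bb=(b^0,\dots,b^{i-1})$ satisfying $\bb|_X=\bzero$. Because $\kk_{i-1}$ sends the single vertex of $P_{i-1}$ to the single vertex of $K_{i+1}$, the cocycle $\kk_{(i-1)*}(\bb)\in Z^{i+1}(CX;\pi_i)$ also vanishes on $X$, hence has the form $(e,0)$ with $e\in Z^i(X;\pi_i)$; and by the very definition of $D_i$ on such cochains, $\mu_i(F)=D_i(e,0)=e$. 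I then exhibit a nullhomotopy of $\lambda_{i*}(\mu_i(F))=(\bzero,e)$ explicitly: the cochain $(\bb,(0,e))$ on $CX$ restricts on $X$ to $(\bzero,e)$, and it represents a genuine simplicial map into $P_i$ since $\delta(0,e)=(-\delta 0+e,\delta e)=(e,0)=\kk_{(i-1)*}(\bb)$, which is exactly the constraint cutting $P_i$ out of $P_{i-1}\times E_i$. Thus $\lambda_{i*}(\mu_i(F))$ extends over $CX$, so $[\lambda_{i*}]\,[\mu_i(F)]=0$.

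For $\ker\,[\lambda_{i*}]\subseteq\im\,[\mu_i]$: take $c^i\in Z^i(X;\pi_i)$ with $[\lambda_{i*}(c^i)]=0$ and fix a nullhomotopy $G\in\SM(CX,P_i)$ of $\lambda_{i*}(c^i)=(\bzero,c^i)$, with cochain representation $(\bb,b^i)$; restriction to $X$ gives $\bb|_X=\bzero$ and $b^i|_X=c^i$. Then $F:=p_{i*}(G)$ is represented by $\bb$ with $\bb|_X=\bzero$, so by fact (b) it descends to an element of $\SM(\susp X,P_{i-1})$. Writing $b^i=(u,v)$ with $v=b^i|_X=c^i$, the fact that $G$ lands in $P_i$ gives $\kk_{(i-1)*}(\bb)=\delta b^i=(-\delta u+c^i,\delta c^i)=(c^i-\delta u,0)$; applying $D_i$ (again the map $(e,0)\mapsto e$) yields $\mu_i(F)=c^i-\delta u$, cohomologous to $c^i$. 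Hence $[c^i]=[\mu_i(F)]\in\im\,[\mu_i]$, which establishes exactness at $[X,L_i]$.

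The argument is essentially bookkeeping, and the one place needing genuine care is precisely that bookkeeping: tracking the signs in $\delta(e,c)=(-\delta e+c,\delta c)$ and in $\delta D_i=-D_i\delta$; verifying that ``$\bb|_X=\bzero$'' faithfully encodes ``$\tilde F|_X$ is the constant map to the basepoint'' so that the passage $CX\leftrightarrow\susp X$ via the universal property of the quotient of simplicial sets is legitimate (this is the only point where single-vertexedness of the $P_j$ is really used); and checking that the candidate extension $(\bb,(0,e))$ genuinely satisfies the Postnikov compatibility. I do not expect any obstacle beyond keeping all these conventions consistent with the ones already fixed earlier in the paper.
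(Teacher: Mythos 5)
Your proof is correct and follows essentially the same route as the paper's: identify maps from $\susp X$ with maps from $CX$ that vanish on $X$, use the cone coboundary formula $\delta(e,c)=(-\delta e+c,\delta c)$ and the Postnikov lifting condition to exhibit the explicit nullhomotopy $(\bb,(0,e))$ for one inclusion, and unpack a given nullhomotopy the same way for the other. The only (cosmetic) difference is that you make explicit the observation $\kk_{(i-1)*}(\bb)=(e,0)$ before invoking $D_i$, whereas the paper simply names $z^i$ as the cocycle representing $\mu_i(F)$ and verifies the identity directly.
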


\begin{proof}
First we want to prove the inclusion
$\im\,[\mu_i]\subseteq \ker\,[\lambda_{i*}]$.
To this end, we consider $F\in\SM(\susp X,P_{i-1})$ arbitrary
and want to show that $[\lambda_{i*}(\mu_i(F))]=0$ in $[X,P_i]$.

As was discussed above, we can view $F$ as a map $\overline F\:CX\to P_{i-1}$
that is zero on $X$. Let $\bb$ be the cochain representation of
$\overline F$; thus, $\bb|_X=\bzero$.

Let $z^i\in Z^{i}(X;\pi_i)$ be the cocycle representing $\mu_i(F)$.
Then $(0,z^i)\in C^{i-1}(X;\pi_i)\oplus C^i(X;\pi_i)$ represents
a map $CX\to E_i$, and $(\bb,(0,z^i))$ represents a map
$G\:CX\to P_{i-1}\times E_i$.

We claim that $G$ actually goes into $P_{i}$, i.e.,
is a lift of $\overline F$. For this, we just need
to verify the lifting condition (\ref{eq:SMXPi}),
which reads $k_{(i-1)*}(\bb)=\delta(0,z^i)$.

By the coboundary formula for the cone, we have
$\delta(0,z^i)=(z^i,0)$, while $k_{(i-1)*}(\bb)=(z^i,0)$
by the definition of $\mu_i(F)$. So $G\in\SM(CX,P_i)$
is indeed a lift of $\overline F$. At the same time,
$(\bb,(0,z^i))|_X=(\bzero,z^i)$, and so $G$ is a nullhomotopy
for the map represented by $(\bzero,z^i)$, which is just
$\lambda_{i*}(\mu_i(F))$.
\medskip

To prove the reverse inclusion $\im\,[\mu_i]\supseteq
\ker\,[\lambda_{i*}]$, we proceed similarly. Suppose that
$z^i\in Z^i(X;\pi_i)$ represents a map $f\in\SM(X,L_i)$ with
 $[\lambda_{i*}(f)]=0$ in $[X,P_i]$. Then $\lambda_{i*}(f)$
has the cochain representation $(\bzero,z^i)$, and there is a nullhomotopy
$G\in\SM(CX,P_i)$ for it, with a cochain representation
$(\bb,(a^{i-1},z^i))$, where $\bb|_X=\bzero$.

Since $\bb|_X=\bzero$, $\bb$ represents a map $\overline
F\in\SM(CX,P_{i-1})$ zero on $X$, which can also be regarded
as $F\in \SM(\susp X,P_{i-1})$. Let $\tilde z^i$ represent
$\mu_i(F)$. Since $G$ is a lift of $\overline F$, the lifting
condition $k_{(i-1)*}(\bb)=\delta(a^{i-1},z^i)$ holds. We
have $k_{(i-1)*}(\bb)=(\tilde z^i,0)$, again by the
definition of $\mu_i$, and $\delta(a^{i-1},z^i)= (-\delta
a^{i-1}+z^i,\delta z^i)$ by the coboundary formula for the cone.
Hence $\tilde z^i-z^i=\delta a^{i-1}$, which means that
$[z^i]=[\tilde z^i]$.
Thus $[f]=[\mu_i(F)]\in \im\,[\mu_i]$, and the lemma is proved.
\end{proof}

Having $[\mu_i]$ defined as a locally effective homomorphism,
we can employ Lemma~\ref{l:coker} and implement Step~\ref{step:fact}
of the algorithm.

\subsection{Computing nullhomotopies}\label{s:homotopytest}

The next step is to prove the exactness of the sequence (\ref{theexact})
at $[X,P_i]$.

\begin{lemma}\label{l:atXP_i}
We have $\im\,[\lambda_{i*}]=\ker\,[p_{i*}]$.
\end{lemma}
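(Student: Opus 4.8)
The plan is to prove the two inclusions separately. The inclusion $\im\,[\lambda_{i*}]\subseteq\ker\,[p_{i*}]$ is immediate from the formulas for the relevant simplicial maps: since $\lambda_i(\sigma^i)=(\bzero,\sigma^i)$ and $p_i(\bsigma,\sigma^i)=\bsigma$, the composite $p_i\circ\lambda_i$ is the constant map to the base vertex $\bzero$ of $P_{i-1}$. Hence for every $f\in\SM(X,L_i)$ the map $p_{i*}(\lambda_{i*}(f))=p_i\circ\lambda_i\circ f$ is the constant map $X\to\{\bzero\}$, whose homotopy class is $0$ in $[X,P_{i-1}]$.

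For the reverse inclusion $\ker\,[p_{i*}]\subseteq\im\,[\lambda_{i*}]$ I would argue via the cone $CX$. Let $g\in\SM(X,P_i)$ have cochain representation $(\cc,c^i)$ and suppose $[p_{i*}(g)]=[\cc]=0$ in $[X,P_{i-1}]$. Since $P_{i-1}$ is Kan, a simplicial map into it is nullhomotopic iff it extends over the cone (the characterization of maps from $CX$ discussed in Section~\ref{s:factor}); thus there is a cochain representation $\bb$ of a map $\bar F\in\SM(CX,P_{i-1})$ with $\bb|_X=\cc$ (componentwise restriction). Now I would lift $\bar F$ through $p_i$. The cocycle $k_{(i-1)*}(\bb)\in Z^{i+1}(CX;\pi_i)$ is a coboundary, say $k_{(i-1)*}(\bb)=\delta d^i$, because $CX$ is contractible and $i+1\ge1$; then $(\bb,d^i)$ satisfies the defining condition $(\ref{eq:SMXPi})$ for maps into $P_i$, so $\bar G:=(\bb,d^i)\in\SM(CX,P_i)$ is a lift of $\bar F$ (this is exactly the argument of Lemma~\ref{l:lifting} applied to $CX$, which again has finitely many nondegenerate simplices).

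The last step compares $\bar G|_X$ with $g$. Restricting the lifting identity $k_{(i-1)*}(\bb)=\delta d^i$ to $X$ and using $k_{(i-1)*}(\cc)=\delta c^i$ (valid because $(\cc,c^i)$ represents a map into $P_i$) yields $\delta(c^i-d^i|_X)=0$, so $z^i:=c^i-d^i|_X\in Z^i(X;\pi_i)$. By Proposition~\ref{l:effective-h-group}(a), the map $g\+_{i*}\lambda_{i*}(-z^i)$ has cochain representation $(\cc,c^i-z^i)=(\cc,d^i|_X)=\bar G|_X$. But $\bar G|_X$ extends over $CX$ (namely, by $\bar G$), so it is nullhomotopic in the Kan complex $P_i$; hence $0=[\bar G|_X]=[g]-[\lambda_{i*}(z^i)]$ in $[X,P_i]$, i.e.\ $[g]=[\lambda_{i*}(z^i)]\in\im\,[\lambda_{i*}]$, which finishes the proof.

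I expect the only genuine care to be in the cochain bookkeeping on the cone — the splitting $C^n(CX;\pi)\cong C^{n-1}(X;\pi)\oplus C^n(X;\pi)$, the restriction maps, and invoking the cone characterization of nullhomotopy at the level of the chosen representative cochain $\cc$ rather than merely up to cohomology — but all of this is already prepared in the discussion of maps from the cone, and the Kan property of the stages $P_j$ is precisely what licenses these manipulations. It is worth noting (and is morally the point of the argument) that $g$ itself, with no preliminary homotopy, already differs from a cone-extendable map by an element of $\im\,[\lambda_{i*}]$; and, unlike the exactness of $(\ref{theexact})$ at $[X,L_i]$, the suspension $\susp X$ plays no role at $[X,P_i]$ — the cone alone suffices.
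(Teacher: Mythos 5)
Your proof is correct and follows essentially the same route as the paper's: both prove the easy inclusion on the simplicial-map level from the formulas for $\lambda_i$ and $p_i$, and both prove the reverse inclusion by taking a nullhomotopy $\bb\in\SM(CX,P_{i-1})$ with $\bb|_X=\cc$, lifting it over $CX$ (using contractibility of $CX$), setting $z^i:=c^i-(b^i|_X)$, checking $z^i$ is a cocycle from the two lifting conditions, and then observing that $(\cc,c^i)\bmi_{i*}(\bzero,z^i)$ equals the restriction of the lift to $X$ and is hence nullhomotopic. Your phrasing of the lift as an application of the lifting lemma on $CX$, and the sum as $g\+_{i*}\lambda_{i*}(-z^i)$, are superficial repackagings of the same computation.
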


\begin{proof}
The inclusion $\im\,[\lambda_{i*}]\subseteq \ker\,[p_{i*}]$
holds even on the level of simplicial maps, i.e.,
$ \im \lambda_{i*}\subseteq \ker p_{i*}$. Indeed,
$p_{i*}(\lambda_{i*}(z^i))=p_{i*}(\bzero,z^i)=\bzero$.

For the reverse inclusion,
consider $(\cc,c^i)\in \SM(X,P_i)$ and suppose
that $[p_{i*}(\cc,c^i)]=[\cc]=0\in [X,P_{i-1}]$.
We need to find some $z^i\in Z^i(X;\pi_i)$
with $[(\bzero,z^i)]=[(\cc,c^i)]$ in $[X,P_i]$.

A suitable $z^i$ can be constructed by taking a nullhomotopy  $CX\to P_{i-1}$
for $\cc$ and lifting it. Namely,
let $\bb$ represent a nullhomotopy for $\cc$, i.e., $\bb|_X=\cc$, and let
$(\bb,b^i)$ be a lift of $\bb$ (it exists because
$CX$ is contractible and thus \emph{every} map on it can be lifted).
We set
$$
z^i:= c^i-(b^i|_X).
$$

We need to verify that $z^i$ is a cocycle. This follows from
the lifting conditions $k_{(i-1)*}(\cc)=\delta c^i$
and $k_{(i-1)*}(\bb)=\delta b^i$, and from the fact that
$k_{(i-1)*}(\bb)|_X=k_{(i-1)*}(\bb|_X)=k_{(i-1)*}(\cc)$ (this is because
applying $k_{(i-1)*}$ really means a composition of maps,
and thus it commutes with restriction). Indeed, we have
 $\delta z^i=\delta c^i-\delta (b^i|_X)=
\kk_{(i-1)*}(\cc)-\kk_{(i-1)*}(\cc)=0$.

It remains to to check that $[(\cc,c^i)]=[(\bzero,z^i)]$.
We calculate $[(\cc,c^i)]-[(\bzero,z^i)]=
[(\cc,c^i)\bmi_{i*}(\bzero,z^i)]= [(\cc,c^i-z^i)]=[(\cc,b^i|_X)]=
[(\bb|_X,b^i|_X)]=0$, since $(\bb,b^i)$ is a nullhomotopy
for $(\bb|_X,b^i|_X)$.
\end{proof}

\heading{Defining the inverse for $\lambda_{i*}$. }
Now we consider the cokernel $M_i=[X,L_i]/\im\,[\mu_i]$
as in Step~\ref{step:fact} of the algorithm, and the (injective) homomorphism
$\ell_i\:M_i\to [X,P_i]$ induced by $[\lambda_{i*}]$.

The last thing we need for applying Lemma~\ref{l:exact-s} in
Step~\ref{step:sho} is a locally effective map
$r_i\:\im \ell_i\to M_i$ with $\ell_i\circ r_i=\id$.

Let $\RR_i$ be the set of representatives of the elements in
$\im\ell_i=\im\,[\lambda_{i*}]$;
by the above, we can write
$\RR_i=\{(\cc,c^i)\in \SM(X,P_i): [\cc]=0\}$.

For every
$(\cc,c^i)\in\RR_i$ we set
$$
\rho_i(\cc,c^i):=z^i,
$$
where $z^i$ is as in the above proof of Lemma~\ref{l:atXP_i}
(i.e., $z^i=c^i-(b^i|_X)$, where $(\bb,b^i)$ is a lifting
of some nullhomotopy $\bb$ for $\cc$). This definition involves a
choice of a particular $\bb$ and $b^i$, which we make arbitrarily (see above)
for each $(\cc,c^i)$.

\begin{lemma}\label{l:rho-i} The map $\rho_i$ induces a map
$r_i\:\im\,[\lambda_{i*}]\to [X,L_i]$ such that $\ell_i\circ r_i=\id$.
\end{lemma}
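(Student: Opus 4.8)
The plan is to read everything off from the proof of Lemma~\ref{l:atXP_i}. Recall that there, for a representative $(\cc,c^i)\in\RR_i$, one fixes a nullhomotopy $\bb\in\SM(CX,P_{i-1})$ of $\cc$ (so $\bb|_X=\cc$), a lift $(\bb,b^i)\in\SM(CX,P_i)$ of $\bb$ (which exists because $CX$ is contractible), and puts $z^i:=c^i-(b^i|_X)$; moreover it is checked there that $z^i\in Z^i(X;\pi_i)$ and that $[(\bzero,z^i)]=[(\cc,c^i)]$ in $[X,P_i]$. Thus $\rho_i$ is a map of representatives $\RR_i\to Z^i(X;\pi_i)\cong\SM(X,L_i)$, and by construction $\lambda_{i*}(\rho_i(\cc,c^i))$ and $(\cc,c^i)$ are homotopic maps $X\to P_i$.

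First I would verify that $\rho_i$ descends to a well-defined map on $\im\,[\lambda_{i*}]$. Let $(\cc,c^i),(\dd,d^i)\in\RR_i$ represent the same class of $[X,P_i]$, and let $z^i,\tilde z^i$ be the cocycles $\rho_i$ assigns to them. Using the computation from the proof of Lemma~\ref{l:atXP_i},
$$[\lambda_{i*}]\big([z^i]\big)=[(\bzero,z^i)]=[(\cc,c^i)]=[(\dd,d^i)]=[(\bzero,\tilde z^i)]=[\lambda_{i*}]\big([\tilde z^i]\big)$$
in $[X,P_i]$, hence $[z^i]-[\tilde z^i]\in\ker\,[\lambda_{i*}]$, which by exactness of (\ref{theexact}) at $[X,L_i]$ (Lemma~\ref{l:exactatLi}) equals $\im\,[\mu_i]$. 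So the class $[\rho_i(\cc,c^i)]\in[X,L_i]$ is pinned down by $[(\cc,c^i)]$ up to $\im\,[\mu_i]$, and since this residual ambiguity lies in $\ker\,\ell_i$ it causes no trouble below; concretely, assigning to each $b\in\im\,[\lambda_{i*}]$ the value $[\rho_i(\cc,c^i)]$ for a representative $(\cc,c^i)$ of $b$ gives the desired map $r_i\colon\im\,[\lambda_{i*}]\to[X,L_i]$. (Passing to the factor group, this $r_i$ is exactly the inverse of the injection $\ell_i$ on $\im\,\ell_i$, with values in $[X,L_i]/\im\,[\mu_i]$ — and, because that factor group is represented by the same cocycles as $[X,L_i]$ via Lemma~\ref{l:coker}, this is the form in which $r_i$ feeds into Lemma~\ref{l:exact-s} and Step~\ref{step:sho}.)

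The identity $\ell_i\circ r_i=\id$ is then immediate: for $b=[(\cc,c^i)]\in\im\,[\lambda_{i*}]$ with associated cocycle $z^i=\rho_i(\cc,c^i)$,
$$\ell_i\big(r_i(b)\big)=[\lambda_{i*}(z^i)]=[(\bzero,z^i)]=[(\cc,c^i)]=b,$$
where the second equality unwinds the definition of $\lambda_{i*}$, the third is the computation from the proof of Lemma~\ref{l:atXP_i}, and we use that $\ell_i$ acts on representatives by the same rule $z^i\mapsto(\bzero,z^i)$ as $[\lambda_{i*}]$.

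The one place needing genuine care is local effectivity of $r_i$, equivalently of $\rho_i$, and inside it the first step: turning the recursively \emph{certified} fact $[\cc]=0$ into an \emph{explicit} simplicial nullhomotopy $\bb\colon CX\to P_{i-1}$ with $\bb|_X=\cc$ (after which $b^i$ is found by solving the lifting condition $\kk_{(i-1)*}(\bb)=\delta b^i$ for $b^i$ over $\Z$ exactly as in Lemma~\ref{l:lifting}, and $z^i=c^i-(b^i|_X)$ is a componentwise restriction). I would handle this by insisting that the recursion maintains, alongside the fully effective structure of each $[X,P_{i-1}]$, an algorithm that for a nullhomotopic $\cc$ emits such a witness $\bb$: it is available at the base stage $[X,P_d]\cong H^d(X;\pi_d)$, and it is precisely the ``witness for $0$'' furnished by the cokernel construction (Lemma~\ref{l:coker}) used in assembling $[X,P_{i-1}]$, so I would fold this feature into the inductive hypothesis of Theorem~\ref{t:induct}. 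Everything else is a transcription of computations already in place.
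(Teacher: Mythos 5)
Your argument is correct and matches the paper's: well-definedness of $r_i$ and the identity $\ell_i\circ r_i=\id$ are deduced, exactly as in the paper, from the verification in the proof of Lemma~\ref{l:atXP_i} that $[\lambda_{i*}(\rho_i(\cc,c^i))]=[(\cc,c^i)]$ together with the injectivity of $\ell_i$ (so that $r_i$ is really only pinned down as a map into $M_i$, which is all that Lemma~\ref{l:exact-s} requires). Your closing paragraph on local effectivity goes beyond the lemma statement and is handled in the paper by the separate recursive algorithm \NULLHOA; be aware that the ``witness for $0$'' from Lemma~\ref{l:coker} is only one ingredient there (alongside recursion on the Postnikov stage, lifting over $CX$, a coboundary computation, and $\+_{i*}$-combining the pieces), not by itself the nullhomotopy $\bb$.
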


\begin{proof} In the proof of Lemma~\ref{l:atXP_i} we have verified that
$[\lambda_{i*}(\rho_i(\cc,c^i))]=[(\cc,c^i)]$, so $\lambda_{i*}\circ
\rho_i$ acts as the identity on the level of homotopy classes.
It follows that $r_i$ is well-defined, since $\ell_i$ is injective
and thus the condition $\ell_i\circ r_i=\id$ determines $r_i$
uniquely.
\end{proof}

We note that, since we assume $[X,P_{i-1}]$ fully effective,
we can algorithmically test whether $[\cc]=0$, i.e., whether
$\cc$ represents a nullhomotopic map---the problem is in computing
a concrete nullhomotopy $\bb$ for $\cc$.

We describe a recursive algorithm for doing that. For more convenient
notation, we will formulate it for computing nullhomotopies
for maps in $\SM(X,P_i)$, but we note that, when evaluating
$\rho_i$, we actually use this algorithm with $i-1$ instead of~$i$.
Some of the ideas in the algorithm are very similar to
those in the proof of the exactness at $[X,P_i]$
(Lemma~\ref{l:atXP_i} above), so we could have started with a
presentation of the algorithm instead of Lemma~\ref{l:atXP_i},
but we hope that a more gradual development may be easier to follow.

\heading{The nullhomotopy algorithm. }
So now we formulate a recursive algorithm $\NULLHOA(\cc,c^i)$, which takes as input a cochain
representation of a nullhomotopic map in $\SM(X,P_i)$
(i.e., such that $[(\cc,c^i)]=0$), and outputs
a nullhomotopy $(\bb,b^i)$ for $(\cc,c^i)$.

The required nullhomotopy $(\bb,b^i)$ will be $\+_{i*}$-added together
from several nullhomotopies; this decomposition is guided
by the left part of our exact sequence (\ref{theexact}).
Namely, we recursively find a nullhomotopy for $\cc$ and lift
it, which reduces the original problem to finding a nullhomotopy
for a map in $\im\lambda_{i*}$, of the form $(\bzero,z^i)$, as in the proof
of Lemma~\ref{l:atXP_i}.
Then, using the fact that $\ell_i$ is an isomorphism,
we find nullhomotopies witnessing that $[z^i]=0$ in $M_i$.
Here we need the assumption that the representation of $M_i$
allows for computing ``witnesses of zero'' as in Lemma~\ref{l:coker}.

For this to work, we need the fact that if $\bb_1$ is a nullhomotopy
for $\cc_1$ and $\bb_2$ is a nullhomotopy for $\cc_2$,
then $\bb_1\+_{i*}\bb_2$ is a nullhomotopy for $\cc_1\+_{i*}\cc_2$.
This is because $\+_{i*}$ operates on mappings by composition,
and thus it commutes with restrictions---we have already used the same
observation for~$k_{i*}$.

The base case of the algorithm is $i=d$. Here, as we recall,
$P_d=L_d=K(\pi_d,d)$, and a nullhomotopic $c^d$ means that
$c^d\in Z^{d}(X;\pi_d)$
is a coboundary. We thus compute $e\in Z^{d-1}(X;\pi_d)$
with $c^d=\delta e$, and the desired nullhomotopy is
$(e,\delta e)$ (indeed, $(e,\delta e)$ specifies a valid map
$CX\to L_d$ since, by the coboundary formula for the cone,
it is a cocycle).

Now we can state the algorithm formally.

\heading{Algorithm $\NULLHOA(\cc,c^i)$.}
\begin{enumerate}
\item[A.] (Base case)
If $i=d$, return $(\bb,b^d)=(\bzero,(e,\delta e))$ as above and stop.
\item[B.] (Recursion) Now $i>d$. Set $\bb_0:=\NULLHOA(\cc)$,
and let $(\bb_0,b_0^i)$ be an arbitrary lift of~$\bb_0$.
\item[C.] (Nullhomotopy coming from $\susp X$)
 Set $z^i:= c^i-(b_0^i|_X)$, and use the representation of $M_i$
to find a ``witness for $[z^i]=0$ in $M_i$''. That is,
compute  $F\in[\susp X,P_{i-1}]$ such that $[z^i]=[\tilde z^i]$ in
$[X,L_i]$, where $\tilde z^i$ is the cocycle representing $\mu_i(F)$.
Let $\aa$ be the cochain representation of the map $\overline F\in
\SM(CX,P_{i-1})$ corresponding to~$F$.
\item[D.] (Nullhomotopy in $[X,L_i]$) Compute
$e\in Z^{i-1}(X;\pi_i)$ with $\tilde z^i-z^i=\delta e$.
(Then, as in the base case $i=d$ above, $(e,\delta e)$ is
a nullhomotopy for $\tilde z^i-z^i$,
and thus $(\bzero,(e,\delta e))$ is a nullhomotopy for $(\bzero,\tilde z^i-z^i)$.)
\item[E.] Return $$
(\bb,b^i):=(\bb_0,b_0^i)\+_{i*}
\Bigl((\aa,(0,\tilde z^i))\+_{i*}(\bzero,(e,\delta e))\Bigr).
$$
\end{enumerate}

\begin{proof}[Proof of correctness.]
First we need to check that $z^i$ in Step~C indeed represents
$0$ in $M_i$. This is because, as in the proof of Lemma~\ref{l:atXP_i},
$[(\bzero,z^i)]=[\lambda_{i*}(z^i)]=0$, and since $\ell_i$
is injective, we have $[z^i]=0$ in $M_i$ as claimed.
So the algorithm succeeds in computing some $(\bb,b^i)$,
and we just need to check that it is a nullhomotopy for $(\cc,c^i)$.

All three terms in the formula in Step~E are valid
representatives of maps $CX\to P_i$ (for $(\bb_0,b_0^i)$
this follows from the inductive hypothesis, for $(\aa,(0,\tilde z^i))$
we have checked this in the first part of the proof of Lemma~\ref{l:exactatLi},
and for $(\bzero,(e,\delta e))$ we have already discussed this).
So $(\bb,b^i)$ also represents such a map, and all we need
to do is to check that $(\bb|_X,b^i|_X)=(\cc,c^i)$:
\begin{eqnarray*}
(\bb|_X,b^i|_X)&=& (\bb_0|_X,b_0^i|_X)\+_{i*}
\Bigl((\aa|_X,\tilde z^i)\+_{i*}(\bzero,\delta e)\Bigr)\\
&=& (\cc,b_0^i|_X)\+_{i*} \Bigl((\bzero,\tilde z^i)\+_{i*}(\bzero,z^i-\tilde z^i)\Bigr)\\
&=& (\cc,b_0^i|_X+\tilde z^i+z^i-\tilde z^i)  = (\cc,b_0^i|_X+(c^i-(b_0^i|_X)))=(\cc,c^i).
\end{eqnarray*}
Thus, the algorithm correctly computes the desired nullhomotopy.
\end{proof}

As we have already explained, the algorithm
makes $\rho_i$ locally effective, and so Step~\ref{step:sho}
of the main algorithm can be implemented.
This completes the proof of Theorem~\ref{t:induct}.
\else 

\paragraph{\boldmath{The maps $\rez_i$ and $r_i$ and recursive nullhomotopy testing.}}
We now outline the implementation of the maps $\rez_i$
and $r_i$, omitting the details and proofs.

The map $\xi_i$ is easy to define.
A map $\cc\in \SM(X,P_{i-1})$
(from now on, we do not distinguish
between simplicial maps and their cochain representatives)
has a lift iff $[\kk_{(i-1)*}(\cc)]=0$,
or in other words, iff
there is a ``witness'' cochain $c^i$ with
 $k_{(i-1)*}(\cc)=\delta c^i$.
If this holds, we can compute\footnote{Here we use that the cochain groups and coboundary operators of $X$ with coefficients in $\pi_i$ are fully effective.}
 such a $c^i$  and set $\rez_i(\cc):=(\cc,c^i)$
This involves some arbitrary choice, but if we fix some (arbitrary)
rule for choosing $c^i$ (see above), we obtain a locally effective $\xi_i$ as needed.

As for $r_i$, we need an algorithm that
evaluates a map $\rho_i$ representing $r_i$ on the level
of simplicial maps. The input is a
map $(\cc,c^i)\in\SM(X,P_i)$
with a guarantee that $(\cc,c^i)\sim (\bzero,z^i)$
for some $z^i\in\SM(X,L_i)$, and the goal is to
\emph{compute} some such~$z^i$.

We use the (easy) fact that each nullhomotopy of a map $f\:X\to Y$,
for an arbitrary space $Y$, can equivalently be regarded as
a map $CX\to Y$ extending $f$ (and, if $Y$ is a Kan simplicial set,
this also works on the simplicial level). In our case, the assumption
$(\cc,c^i)\sim (\bzero,z^i)$ implies $\cc\sim \bzero$, and the
main step in evaluating $\rho_i$ is the computation
of a simplicial nullhomotopy $\bb\in\SM(CX,P_{i-1})$ for $\cc$.
 Having such a $\bb$, we
compute an arbitrary lifting $(\bb,b_i)\in \SM(CX,P_{i})$ of $\bb$
(since $CX$ is contractible, all maps in $\SM(CX,P_{i-1})$ are liftable),
and return $z^i:= c^i-(b^i|_X)$ as the desired value of $\rho_i(\cc,c^i)$.

It remains to provide an algorithm $\NULLHOA$,
which takes as input a $\cc\in \SM(X,P_{i})$
with $\cc\sim\bzero$ and returns a nullhomotopy
$\bb\in\SM(CX,P_{i})$ for it. In the above computation
of $\rho_i$, $\NULLHOA$ is invoked with $i-1$ instead of~$i$.

$\NULLHOA$ works as follows: It recursively computes a nullhomotopy
$\bb_0\in\SM(CX,P_{i-1})$ for $p_{i*}(\cc)\in \SM(X,P_{i-1})$,
 obtains an arbitrary lifting $(\bb_0,b_0^{i})\in \SM(CX,P_{i})$
for it, and sets $z^{i}:=c^{i}-(b^{i}_0|_X)$. Then it uses
the representation of $M_i$ (coming from Lemma~\ref{l:coker})
to find a ``witness for $[z^i]=0$ in $M_i$''. Concretely,
it obtains a map $F\in\SM(\susp X,P_{i-1})$ with $[z^i]=[\tilde z^i]$,
where $\tilde z^i=\mu_i(F)$.  Finally, $\NULLHOA$ computes
a nullhomotopy for $z^i-\tilde z^i$ in $\SM(CX,L_i)$,
and combines it with $(\bb_0,b_0^{i})$ and with the
map $CX\to P_{i-1}$ corresponding to $F$. This yields
the desired nullhomotopy for~$\cc$.
We refer to the full version for the details.

Having made $\rho_i$ locally effective, we can implement
Step~\ref{step:sho}
of the main algorithm.
This completes the outline of the proof of Theorem~\ref{t:induct}.
\fi

\subsection*{Acknowledgments}We would like to thank Martin
Tancer for useful conversations at early stages of this
research, and Peter Landweber for numerous useful comments
concerning an earlier version of this paper. We would also like to thank
two anonymous referees for helpful suggestions on how to improve the
exposition.

\bibliographystyle{plain}
\bibliography{Postnikov,simplicHomotopy}

\end{document}